\newcommand{\pushright}[1]{\ifmeasuring@#1\else\omit\hfill$\displaystyle#1$\fi\ignorespaces}
\newcommand{\pushleft}[1]{\ifmeasuring@#1\else\omit$\displaystyle#1$\hfill\fi\ignorespaces}
\DeclareMathOperator{\poly}{poly}
\newcommand{\ttau}{{\tilde{\tau}}}
\DeclareMathOperator{\diag}{diag}
\DeclareMathOperator{\Ring}{R}
\DeclareMathOperator{\cop}{cpr}
\DeclareMathOperator{\sgn}{sgn}
\DeclareMathOperator{\odty}{odt}
\DeclareMathOperator{\sig}{sig}
\DeclareMathOperator{\adj}{adj}
\DeclareMathOperator{\SYM}{sym}
\DeclareMathOperator{\ord}{ord}
\DeclareMathOperator{\can}{can}
\DeclareMathOperator{\tsym}{\textsc{sym}} 
\newcommand{\sympk}{\tsym_{p^k}}
\newcommand{\canp}{\can_p}
\newcommand{\cant}{\can_2}
\newcommand{\sigp}{\sig_p}
\newcommand{\copt}{\cop_2}
\newcommand{\copp}{\cop_p}
\newcommand{\scale}{\mathbb{I}}
\newcommand{\tMQ}{\tilde{\MQ}}
\newcommand{\tMD}{\tilde{\MD}}
\newcommand{\SymT}{\Upsilon_2}
\newcommand{\zRz}[1]{\bbZ/#1\bbZ}
\newcommand{\zqz}{\zRz{q}}
\newcommand{\zpz}{\zRz{p}}
\newcommand{\ztkz}{\zRz{2^k}}
\newcommand{\ztsz}{\zRz{2^s}}
\newcommand{\zpkz}{\zRz{p^k}}
\newcommand{\modpk}{\bmod{p^k}}
\newcommand{\modtk}{\bmod{2^k}}
\newcommand{\eqq}{\overset{q}{\sim}}
\newcommand{\eqp}{\overset{p^*}{\sim}}
\newcommand{\eqpk}{\overset{p^k}{\sim}}
\newcommand{\eqt}{\overset{2^*}{\sim}}
\newcommand{\eqx}[1]{\overset{#1}{\sim}}
\newcommand{\ordp}{\ord_p}
\newcommand{\ordt}{\ord_2}
\newcommand{\sgnp}{\sgn_p}
\newcommand{\sgnt}{\sgn_2}
\DeclareMathOperator{\scalep}{\scale_p}
\DeclareMathOperator{\scalet}{\scale_2}
\newcommand{\MTP}{\mathtt{T}^{+}}
\newcommand{\MTM}{\mathtt{T}^{-}}
\newcommand{\MTT}{\mathtt{T}}
\newcommand{\taup}{\tau^{+}}
\newcommand{\taum}{\tau^{-}}
\newcommand{\totk}{\underset{2^k}{\to}}
\newcommand{\MA}{\mathtt{A}} \newcommand{\MB}{\mathtt{B}}
 \newcommand{\MD}{\mathtt{D}}
 \newcommand{\MI}{\mathtt{I}}
\newcommand{\MQ}{\mathtt{Q}} \newcommand{\MS}{\mathtt{S}}
\newcommand{\MT}{\mathtt{T}} \newcommand{\MU}{\mathtt{U}}
\newcommand{\MV}{\mathtt{V}} \newcommand{\MW}{\mathtt{W}}
\newcommand{\MX}{\mathtt{X}} \newcommand{\MY}{\mathtt{Y}}
\newcommand{\Vb}{\mathbf{b}}
\newcommand{\Vx}{\mathbf{x}} 
\newcommand{\Vy}{\mathbf{y}}
\newcommand{\Vv}{\mathbf{v}}
\newcommand{\Vw}{\mathbf{w}}
\DeclareMathOperator{\type}{type}
\DeclareMathOperator{\SGN}{\textsc{sgn}}
\DeclareMathOperator{\SGNI}{\textsc{sgn}^\times}
\DeclareMathOperator{\gl}{GL}
\DeclareMathOperator{\SL}{SL}
\DeclareMathOperator{\gln}{GL_n}
\DeclareMathOperator{\sln}{SL_n}
\DeclareMathOperator{\glt}{GL_2}
\newcommand{\legendre}[2]{%
\left( \frac{#1}{#2} \right)%
}
\newtheorem{fact}{\textsc{Fact}}
\newcommand{\bbR}{\mathbb{R}}
\newcommand{\bbQ}{\mathbb{Q}}
\newcommand{\bbZ}{\mathbb{Z}}
\newcommand{\bbP}{\mathbb{P}}
\DeclareSymbolFont{bbold}{U}{bbold}{m}{n}
\DeclareSymbolFontAlphabet{\mathbbold}{bbold}
\newsavebox{\theorembox} \newsavebox{\lemmabox}
\newsavebox{\corollarybox} \newsavebox{\propositionbox}
\newsavebox{\examplebox} \newsavebox{\conjecturebox}
\newsavebox{\algbox} \newsavebox{\qbox} \newsavebox{\problembox}
\newsavebox{\definitionbox} \newsavebox{\assumptionbox}
\newsavebox{\hypothesisbox} \newsavebox{\obsbox} 
\savebox{\theorembox}{\noindent\bf Theorem} 
\savebox{\lemmabox}{\noindent\bf Lemma}
\savebox{\corollarybox}{\noindent\bf Corollary}
\savebox{\propositionbox}{\noindent\bf Proposition}
\savebox{\examplebox}{\noindent\bf Example}
\savebox{\conjecturebox}{\noindent\bf Conjecture}
\savebox{\algbox}{\noindent\bf Algorithm}
\savebox{\qbox}{\noindent\bf Question}
\savebox{\definitionbox}{\noindent\bf Definition}
\savebox{\problembox}{\noindent\bf Problem}
\savebox{\assumptionbox}{\noindent\bf Assumption}
\savebox{\hypothesisbox}{\noindent\bf Hypothesis}
\savebox{\obsbox}{\noindent\bf Observation}
\newtheorem{theorem}{\usebox{\theorembox}}
\newtheorem{lemma}[theorem]{\usebox{\lemmabox}}
\newtheorem{definition}{\usebox{\definitionbox}}
\newcommand{\qed}{\;\;\;\Box} 
\newenvironment{proof}{{\bf Proof:}}{\hfill\(\qed\)\newline}
\title{Computing the $p$-adic Canonical Quadratic Form in Polynomial Time}
\author{
	Chandan Dubey\\
	\texttt{chandan.dubey@inf.ethz.ch}
	\and
	Thomas Holenstein\\
	\texttt{thomas.holenstein@inf.ethz.ch}
}
\date{Institut f\"ur Theoretische Informatik, ETH Z\"urich}
\begin{document}
\maketitle

\begin{abstract}
An $n$-ary integral quadratic form is a formal expression
$Q(x_1,\cdots,x_n)=\sum_{1\leq i,j\leq n}a_{ij}x_ix_j$ in $n$-variables 
$x_1,\cdots,x_n$, where $a_{ij}=a_{ji} \in \bbZ$.
We present a randomized polynomial time algorithm
that given a quadratic form $Q(x_1,\cdots,x_n)$, a prime $p$, and a 
positive integer $k$ outputs a $\MU \in \gln(\zpkz)$ such that
$\MU$ transforms $Q$ to its $p$-adic canonical form.
\end{abstract}

\section{Introduction}

Let $\Ring$ be a commutative ring with unity and $\Ring^\times$ 
be the set of units (i.e., invertible elements) of $\Ring$. A 
quadratic form over the ring $\Ring$ in $n$-formal variables 
$x_1,\cdots,x_n$ in an expression 
$\sum_{1\leq i, j \leq n}a_{ij}x_ix_j$, where $a_{ij}=a_{ji} 
\in \Ring$. A quadratic form can equivalently be represented by
a symmetric matrix $\MQ^n=(a_{ij})$ such that 
$Q(x_1,\cdots,x_n)=(x_1,\cdots,x_n)'\MQ(x_1,\cdots,x_n)$. The
quadratic form is called integral if $\Ring=\bbZ$ and the 
determinant of the quadratic form $Q$ is defined as $\det(\MQ)$.
In this paper, we concern ourselves with integral quadratic
forms, henceforth referred only as {\em quadratic forms}.

Let $p$ be a prime and $k$ be a positive integer. Then, two quadratic 
forms $\MQ_1, \MQ_2$ are said to be $p^k$-equivalent if there is a 
$\MU \in \gln(\zpkz)$ such that $\MU'\MQ_1\MU \equiv \MQ_2 \bmod{p^k}$
(denoted $\MQ_1 \eqpk \MQ_2$).
Two quadratic forms are $p^*$-equivalent if they are 
$p^k$-equivalent for every $k>0$ (denoted $\MQ_1 \eqp \MQ_2)$. 
Intuitively, $p^k$-equivalence means
that there exists an invertible linear change of variables over $\zpkz$ that
transforms one form to the other.

For a quadratic form $\MQ$ and prime $p$, the set 
$\{\MS^n \mid \MS \eqp \MQ\}$ is the set of 
$p^*$-equivalent forms of $\MQ$ (also called $p^*$-equivalence class of 
$\MQ$).  In this paper,
we are interested in defining and computing a ``canonical'' 
quadratic form for the $p^*$-equivalence class of a given
quadratic form $\MQ$. In
particular, we are interested in showing the existence of
a function $\canp$ such that for all integral quadratic forms
$\MQ$, $\canp(\MQ) \in \{\MS \mid \MS \eqp \MQ\}$; with the
property that if $\MQ_1 \eqp \MQ_2$ then 
$\canp(\MQ_1) = \canp(\MQ_2)$.
We also consider a related problem of coming up with a 
canonicalization procedure. In particular, we want a polynomial
time algorithm that given $\MQ, p$ and a positive integer $k$,
finds $\MU \in \gln(\zpkz)$ such that 
$\MU'\MQ\MU \equiv \canp(\MQ)\bmod{p^k}$.

It is not difficult to show the existence of a canonical form.
For example, we can go over the $p^*$-equivalence class of
$\MQ$ and output the form which is lexicographically the smallest
one. But, this form gives us no meaningful information about
$\MQ$ or the $p^*$-equivalence class of $\MQ$.

Gauss \cite{Gauss86} gives a complete classification of binary
quadratic forms (i.e., $n=2$). Mathematicians have been more
interested in coming up with a list of necessary and sufficient
conditions for $p^*$-equivalence (also called {\em equivalence over the
$p$-adic integers} $\bbZ_p$). There are several competing but equivalent 
candidates for the set of conditions
\cite{Cassels78, OMeara73, CS99, Kitaoka99}. We choose to
use the same set of conditions as Conway-Sloane \cite{CS99},
called the $p$-{\em symbol} of a quadratic form.


For odd prime $p$,
the $p$-canonical form is implicit in Conway-Sloane 
\cite{CS99} and is also described explicitly by Hartung
\cite{Hartung08, Cassels78}. The canonicalization algorithm in this case
is not complicated and can be claimed to be implicit in 
Cassels \cite{Cassels78}.

The definition of canonical form for the case of prime~2 is 
quite involved and needs careful analysis\footnote{
Cassels (page~117, Section~4, \cite{Cassels78}),
referring to the canonical forms for $p=2$
observes that ``only a masochist is invited to read the rest''.} 
\cite{Jones44, CS99, Watson60}. 
Jones \cite{Jones44} presents the most complete description
of the $2$-canonical form. His method is to come up with a small
$2$-canonical forms and then showing that every quadratic form is
$2^*$-equivalent to one of these. Unfortunately, a few of his 
transformations are existential i.e., he shows that a transformations
with certain properties exists without explicitly finding them.

Conway-Sloane \cite{CS99}, instead, compute a description of
a quadratic form (called canonical $2$-symbol) with the property 
that two quadratic forms
are $2^*$-equivalent iff they have the same canonical $2$-symbol. They
do not provide a $2$-canonical form i.e., 
$\cant(\MQ) \in \{\MS \mid \MS \eqt \MQ\}$.

The $p$-canonical forms are very useful in the study of quadratic 
forms and their equivalence, 
see \cite{Siegel35, Jones44, Jones50, Cassels78, CS99, Kitaoka99, Hartung08}.

\paragraph{Our Contribution.} We give polynomial time
$p$-canonicalization algorithm. In particular, we present
an algorithm that given an 
integral quadratic form $\MQ$ runs in time
$\poly(n, \log \det(\MQ))$ and 
outputs $\can_p(\MQ)$. 

Given an integral quadratic form $\MQ$, a positive integer $k$ 
and a prime $p$, we also provide a randomized
$\poly(n,\log \det(\MQ), \log p, k)$ algorithm that outputs a matrix
$\MU \in \gln(\zpkz)$ such that $\MU'\MQ\MU \equiv \canp(\MQ) \pmod{p^k}$.
This algorithm is especially useful if we want to find a transformation
that maps $\MQ_1$ to $\MQ_2$ over $\zpkz$, where 
$\MQ_1 \eqx{p^k}\MQ_2$. In this case, the required transformation is
$\MU_1\MU_2^{-1} \bmod {p^k}$, where 
$\MU_1'\MQ_1\MU_1 \equiv \canp(\MQ_1)\equiv\MU_2'\MQ_2\MU_2 \pmod{p^k}$.

\section{Preliminaries}

Integers and ring elements are denoted by lowercase letters,
vectors by bold lowercase letters and matrices by typewriter
uppercase letters. The $i$'th component of a vector $\Vv$ is
denoted by $v_i$. We use the notation $(v_1,\cdots,v_n)$ for
a column vector and the transpose of matrix $\MA$ is denoted by
$\MA'$. The matrix
$\MA^n$ will denote a $n\times n$ square matrix.
The scalar product of two vectors will be denoted 
$\Vv'\Vw$ and equals $\sum_i v_iw_i$. The standard Euclidean norm of the 
vector $\Vv$ is denoted by $||\Vv||$ and equals $\sqrt{\Vv'\Vv}$.

If $\MQ_1^n, \MQ_2^m$ are matrices, then the {\em direct product}
of $\MQ_1$ and $\MQ_2$ is denoted by $\MQ_1\oplus\MQ_2$ and is
defined as $\diag(\MQ_1,\MQ_2)=\begin{pmatrix}
\MQ_1&0\\0&\MQ_2
\end{pmatrix}$.
Given two matrices $\MQ_1$ and $\MQ_2$ with the same number
of rows, $[\MQ_1,\MQ_2]$ is the matrix which is obtained
by concatenating the two matrices columnwise.
A matrix is called unimodular
if it is an integer $n\times n$ matrix with determinant $\pm 1$.
If $\MQ^n$ is a $n\times n$ integer matrix and $q$ is a positive integer
then $\MQ \bmod{q}$ is defined as the matrix with all entries of $\MQ$
reduced modulo $q$.


Let $\Ring$ be a commutative ring with unity and $\Ring^\times$ be the
set of units (i.e., invertible elements) of $\Ring$.
If $\MQ \in \Ring^{n\times n}$ is a square matrix, the 
{\em adjugate} of $\MQ$ is defined as the transpose of 
the cofactor matrix and is denoted by $\adj(\MQ)$. The matrix $\MQ$ 
is invertible if and only if $\det(\MQ)$ is a unit of $\Ring$. 
In this case, $\adj(\MQ)=\det(\MQ)\MQ^{-1}$. The set of invertible 
$n\times n$ matrices over $\Ring$ is denoted by $\gln(\Ring)$. The 
subset of matrices with determinant $1$ will be denoted by 
$\sln(\Ring)$.

\begin{fact}\label{fact:gln}
A matrix $\MU$ is in $\gln(\Ring)$ iff $\det(\MU) \in \Ring^\times$. 
\end{fact}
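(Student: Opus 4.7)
The plan is to prove both implications using only two standard identities that hold verbatim over any commutative ring with unity: the multiplicativity of the determinant, $\det(\MA\MB)=\det(\MA)\det(\MB)$, and the adjugate identity $\MU\,\adj(\MU)=\adj(\MU)\,\MU=\det(\MU)\mI_n$. Both are polynomial identities in the matrix entries, so they lift from $\bbZ$ to $\Ring$ without modification.

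For the forward direction, I would assume $\MU \in \gln(\Ring)$, so there exists $\MV \in \Ring^{n\times n}$ with $\MU\MV=\MV\MU=\mI_n$. Taking determinants on both sides and using multiplicativity yields $\det(\MU)\det(\MV)=\det(\mI_n)=1$, which exhibits $\det(\MV)$ as a two-sided multiplicative inverse of $\det(\MU)$ in $\Ring$. Hence $\det(\MU)\in\Ring^\times$.

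For the reverse direction, I would assume $\det(\MU)\in\Ring^\times$, and write $u=\det(\MU)$ with inverse $u^{-1}\in\Ring$. The adjugate $\adj(\MU)$ lies in $\Ring^{n\times n}$ by definition (its entries are polynomials with integer coefficients in the entries of $\MU$), so the matrix $\MV := u^{-1}\adj(\MU)$ also has entries in $\Ring$. From the adjugate identity we obtain $\MU\MV = u^{-1}(\MU\,\adj(\MU)) = u^{-1}\cdot u\,\mI_n = \mI_n$, and symmetrically $\MV\MU=\mI_n$. Thus $\MU$ is invertible over $\Ring$, i.e.\ $\MU\in\gln(\Ring)$.

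There is no genuine obstacle in this argument; the only point worth noting is that both key identities are purely formal polynomial identities (they can be verified over the universal ring $\bbZ[x_{ij}]$ and then specialized), which is what allows the standard proof to go through without any hypothesis on $\Ring$ beyond commutativity and the presence of a unit.
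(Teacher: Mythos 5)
Your proof is correct and uses exactly the standard argument the paper has in mind: the paper states this as a Fact without proof, but the sentences immediately preceding it introduce the adjugate and the identity $\adj(\MU)=\det(\MU)\MU^{-1}$ precisely so that this equivalence is available. Your two directions (multiplicativity of the determinant for necessity, the adjugate identity for sufficiency) are the intended justification, so there is nothing to add.
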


The set of odd primes is denoted by $\bbP$. 
We define $\bbQ/(-1)\bbQ=\zRz{(-1)}:=\bbR$. 
For every prime $p$ and positive integer $k$,
we define the ring $\zpkz=\{0,\cdots,p^k-1\}$, where product
and addition is defined modulo $p^k$.

Let $p$ be a prime, and $a, b$ be integers. Then, 
$\ordp(a)$ is the largest integer exponent of $p$ such that $p^{\ordp(a)}$
divides $a$. We let $\ordp(0) = \infty$. The $p$-coprime part of $a$ is then 
$\copp(a)=\frac{a}{p^{\ordp(a)}}$. Note that $\copp(a)$ is,
by definition, a unit of $\zpz$. 
For $\frac{a}{b}$, a rational number, we define
$\ordp(\frac{a}{b})=\ordp(a)-\ordp(b)$. The $p$-coprime part of $\frac{a}{b}$
is denoted as
$\copp(\frac{a}{a})$ and equals $\frac{a/p^{\ordp(a)}}{b/p^{\ordp(b)}}$.
For a positive integer $q$,
one writes $a\equiv b \bmod{q}$, if $q$ 
divides $a-b$. By $x:=a \bmod{q}$, we mean that
$x$ is assigned the unique value $b \in \{0,\cdots,q-1\}$ such that 
$b \equiv a \bmod{q}$.
An integer $t$ is 
called a {\em quadratic residue} modulo $q$ if $\gcd(t,q)=1$ and
$x^2\equiv t \bmod{q}$ has a solution. 

\begin{definition}\label{def:Legendre} 
Let $p$ be an odd prime, and $t$ be a positive integer with
$\gcd(t,p)=1$.
Then, the Legendre-symbol of $t$ with 
respect to $p$ is defined as follows.
\begin{displaymath}
\legendre{t}{p} = t^{(p-1)/2} \bmod p = \left\{ \begin{array}{ll}
1 & \textrm{if $t$ is a quadratic residue modulo $p$}\\
-1 & \textrm{otherwise.}
\end{array}\right.
\end{displaymath}
\end{definition}

For the prime~2, there is an extension of Legendre symbol called the
Kronecker symbol. It is defined for odd integers $t$ and 
$\legendre{t}{2}$ equals $1$ if $t\equiv \pm 1 \bmod 8$, and $-1$
if $t \equiv \pm 3 \bmod 8$.

The $p$-sign of $t$, denoted $\sgnp(t)$, is defined as 
$\legendre{\copp(t)}{p}$
for odd primes $p$ and $\copt(t) \bmod 8$ otherwise. We
also define $\sgnp(0)=0$, for all primes $p$.
Thus,
\[
\sgnp(0) = 0 \qquad 
\sgnp(t>0) \in \left\{
	\begin{array}{ll}
	\{+1,-1\} & \text{if $p$ is odd}\\
	\{1,3,5,7\} & \text{otherwise}
	\end{array}\right.
\]

The following lemma is well known.

\begin{lemma}\label{lem:QR}
Let $p$ be an odd prime. Then, there
are $\frac{p-1}{2}$ quadratic residues and $\frac{p-1}{2}$ 
quadratic non-residues modulo $p$. Also, every quadratic residue
in $\zpz$ can be written as a sum of two quadratic non-residues
and every quadratic non-residue can be written as a sum of two quadratic
residues.
\end{lemma}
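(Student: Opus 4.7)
The first assertion is the standard fact that the squaring map $x \mapsto x^2$ on $(\zpz)^\times$ is two-to-one: its kernel is $\{\pm 1\}$, which has cardinality $2$ because $p$ is odd, and its image is by definition the set of quadratic residues. Hence there are exactly $(p-1)/2$ quadratic residues, and the remaining $(p-1)/2$ nonzero elements of $(\zpz)^\times$ are the quadratic non-residues.

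For the second assertion I would use a pigeonhole argument in $\zpz$. Suppose $t$ is a quadratic residue, and let $T = \{0\} \cup \{\text{quadratic non-residues}\} \subseteq \zpz$, so that $|T| = 1 + (p-1)/2 = (p+1)/2$. The translate $t - T := \{t - x : x \in T\}$ has the same cardinality $(p+1)/2$, hence $|T| + |t - T| = p+1 > p = |\zpz|$ and the two sets must intersect. Any element of the intersection yields a decomposition $t = x + y$ with $x, y \in T$. Neither $x$ nor $y$ can equal $0$: if, say, $x = 0$, then $y = t$, but $y \in T$ forces $y \in \{0\} \cup \{\text{non-residues}\}$, whereas $t$ is a nonzero quadratic residue. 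Hence $x$ and $y$ are both quadratic non-residues, as required.

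The mirror statement, that every quadratic non-residue is a sum of two quadratic residues, follows by the same argument with $T$ replaced by $\{0\} \cup \{\text{quadratic residues}\}$; this set again has size $(p+1)/2$, and the degenerate case $x = 0$ is ruled out by the symmetric observation that the target $t$, now a non-residue, does not lie in $\{0\} \cup \{\text{residues}\}$. I do not foresee any real obstacle: the whole argument rides on the fact that adjoining $0$ pushes each of the QR- and QNR-sets' cardinalities just above the pigeonhole threshold $p/2$, and the forbidden boundary case is eliminated by the QR/QNR character of $t$.
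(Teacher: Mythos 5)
Your proof is correct. The paper states this lemma as ``well known'' and supplies no proof of its own, so there is nothing to compare against; your argument --- the two-to-one squaring map on $(\zpz)^\times$ for the count, and the pigeonhole intersection of $T$ and $t-T$ with the element $0$ adjoined to push the cardinalities past $p/2$, together with the observation that the boundary case $x=0$ is excluded by the quadratic character of $t$ --- is the standard one and is complete.
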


An integer $t$ is a square modulo $q$ 
if there exists an integer $x$ such that $x^2\equiv t \pmod{q}$.
The integer $x$ is called the {\em square root} of $t$
modulo $q$. If no such $x$ exists, then $t$ is a non-square modulo $q$.

The following lemma is folklore and gives the necessary and sufficient
conditions for an integer $t$ to be a square modulo $p^k$. For
completeness, a proof is provided in Appendix \ref{sec:Proofs}.

\begin{lemma}\label{lem:Square}
Let $p$ be a prime, $k$ be a positive integer and 
$t \in \zpkz$ be a non-zero integer. Then, $t$ is a 
square modulo $p^k$ if and only if $\ordp(t)$ is even and
$\sgnp(t)=1$. 
\end{lemma}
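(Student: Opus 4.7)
The plan is to factor out the prime power and reduce the question to a statement about units. Write $t = p^e u$ with $e = \ordp(t)$ and $\gcd(u, p) = 1$; since $t \not\equiv 0 \pmod{p^k}$, we have $0 \leq e < k$ and $1 \leq u < p^{k-e}$.

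For necessity, suppose $x^2 \equiv t \pmod{p^k}$ and lift $x$ to $\bbZ$. Because $p^k \mid x^2 - t$ and $\ordp(t) = e < k$, the valuations of $x^2$ and $t$ must agree, so $e = 2\ordp(x)$ is even. Writing $x = p^{e/2} y$ with $\gcd(y, p) = 1$ and dividing through by $p^e$ yields $y^2 \equiv u \pmod{p^{k-e}}$. For odd $p$, reducing modulo $p$ shows $u$ is a quadratic residue and hence $\sgnp(t) = \legendre{u}{p} = 1$. For $p = 2$, every odd $y$ satisfies $y^2 \equiv 1 \pmod 8$, so when $k - e \geq 3$ we get $u \equiv 1 \pmod 8$; when $k - e \in \{1,2\}$, the congruence $y^2 \equiv u \pmod{2^{k-e}}$ combined with the size bound $u < 2^{k-e}$ forces $u = 1$. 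In every case $\sgnp(t) = u \bmod 8 = 1$.

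For sufficiency, given that $e = 2f$ is even and $\sgnp(t) = 1$, it suffices to find $y$ coprime to $p$ with $y^2 \equiv u \pmod{p^{k-e}}$, since then $(p^f y)^2 \equiv t \pmod{p^k}$. For odd $p$, $u$ is a quadratic residue modulo $p$ by definition of $\sgnp$, and standard Hensel lifting (valid because $2y$ is a unit in $\zpz$) produces the required square root. For $p = 2$ with $k - e \leq 2$ the hypothesis forces $u = 1$ and we take $y = 1$; for $k - e \geq 3$, the hypothesis $u \equiv 1 \pmod 8$ allows a $2$-adic Hensel lift starting from $y_0 = 1$, which satisfies $\ord_2(y_0^2 - u) \geq 3 > 2 = 2\ord_2(2 y_0)$.

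The main obstacle is the case $p = 2$, where the naive Hensel's lemma fails because $f'(y) = 2y$ is never a unit modulo $2$. The remedy is either to invoke the strengthened Hensel hypothesis $\ord_p(f(y_0)) > 2\ord_p(f'(y_0))$, or to carry out an explicit induction on $j$: given $y^2 \equiv u \pmod{2^j}$ with $j \geq 3$, set $z = y + 2^{j-1}s$; then $z^2 \equiv y^2 + 2^j y s \pmod{2^{j+1}}$ because $2(j-1) \geq j + 1$, and the oddness of $y$ lets us choose $s \in \{0,1\}$ so that the lift to $2^{j+1}$ succeeds. Iterating produces the desired $y$ modulo $2^{k-e}$.
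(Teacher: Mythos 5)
Your proof is correct and follows the same overall route as the paper's: factor $t = p^e u$, use the valuation comparison $2\ordp(x)=\ordp(t)$ for necessity (with the same case split at $k-e\le 2$ versus $k-e\ge 3$ for $p=2$), and reduce sufficiency to extracting a square root of the unit part. The one substantive difference is in the converse direction: the paper's appendix proof simply asserts the existence of a $u$-square-root modulo $p^k$ by citing Lemma~\ref{lem:Square} itself (a circular reference, evidently standing in for a standard lifting result), whereas you actually carry out the Hensel lift, including the correct workaround for $p=2$ where $f'(y)=2y$ is never a unit --- your explicit step $z=y+2^{j-1}s$ with $2(j-1)\ge j+1$ for $j\ge 3$ is exactly what is needed. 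So your write-up is, if anything, more complete than the paper's.
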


\begin{definition}\label{def:QNR}
Let $p$ be an odd prime. Then, $\sigma_p$ is the smallest quadratic
non-residue modulo $p$.
\end{definition}

Assuming GRH, $\sigma_p$ is
a number less than $3(\ln p)^2/2$ \cite{Ankeny52,Wedeniwski01}
and hence can be found deterministically in $O(\log^3 p)$ ring operations
over $\zpz$.

\begin{definition}\label{def:NonSquare}
Let $p$ be a prime and $\frac{x}{y}$ be a rational number. Then, $\frac{x}{y}$
can be uniquely written as $\frac{x}{y}=p^{\alpha}\frac{a}{b}$, 
where $a,b$ are  units of $\zpz$. We say that $\frac{x}{y}$ is a 
$p$-antisquare if $\alpha$ is odd and $\sgnp(a) \neq \sgnp(b)$.
\end{definition}

For convenience we define integers $k_p$, and a
{\em completion} of an integer $q$ (denoted $\overline{q}$),
as follows.
\begin{align}\label{not:kp}
k_p = \left\{\begin{array}{ll} 3 & \text{if $p=2$, and}\\
1 & \text{$p$ odd prime.}
\end{array}\right.
\qquad 
\overline{q}= q\prod_{p|2q}p^{k_p}
\end{align}

We also introduce the following notations.
\begin{align*}
\SGNI &= \{1,3,5,7\} \\
\MQ_1 \underset{\MU, p^k}{\to} \MQ_2 & \text{ denotes } 
\MQ_2 \equiv \MU'\MQ_1\MU \bmod{p^k} \\
\MTP = \begin{pmatrix}2&1\\1&4\end{pmatrix}
&\qquad
\MTM = \begin{pmatrix}2&1\\1&2\end{pmatrix}\\
\MTT &\in \{\MTM, \MTP\}
\end{align*}

\begin{definition}\label{def:Prim}
Let $p^k$ be a prime power. A vector $\Vv \in (\zpkz)^n$ is called 
primitive if there exists a component $v_i$, $i \in [n]$, of $\Vv$ such
that $\gcd(v_i,p)=1$. Otherwise, the vector $\Vv$ is 
non-primitive.
\end{definition}

Our definition of primitiveness of a vector is different but equivalent 
to the 
usual one in the literature. A
vector $\Vv \in (\zqz)^n$ is called primitive over $\zqz$ for a 
composite integer $q$ if it is primitive modulo $p^{\ordp(q)}$ for
all primes that divide $q$.

\begin{lemma}\label{lem:ExtendPrimitive}
Let $p$ be a prime, $k$ be a positive integer and 
$\Vx \in (\zpkz)^n$ be a primitive vector.
Then, an $\MA$ can be found in $O(n^2)$ ring operation
such that $[\Vx,\MA] \in \sln(\zpkz)$.
\end{lemma}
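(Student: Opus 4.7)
The plan is to build the extension explicitly from a single unit coordinate of $\Vx$, avoiding any structural machinery. The first step is to scan $\Vx$ for an index $i\in[n]$ with $\gcd(x_i,p)=1$; by Definition \ref{def:Prim} such an $i$ exists, and the scan costs $O(n)$ ring operations. The element $x_i$ is then a unit of $\zpkz$, so $c:=x_i^{-1}\bmod p^k$ is well defined and can be obtained by a single ring operation in the unit-cost model (or $O(\log p^k)$ bit operations via extended Euclid).

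Next I would define $[\Vx,\MA]$ to be the $n\times n$ matrix whose first column is $\Vx$ and whose remaining columns are the standard basis vectors $\mathbf{e}_j$, with $j$ ranging over $[n]\setminus\{i\}$ in increasing order, except that the very first such vector is rescaled by $\varepsilon:=(-1)^{i+1}c$. To verify that this matrix lies in $\sln(\zpkz)$, I would expand its determinant along row $i$: since $\mathbf{e}_j$ has a zero in position $i$ whenever $j\neq i$, the only nonzero entry in row $i$ is $x_i$ in column $1$. The associated $(n-1)\times(n-1)$ minor, after deleting row $i$ and column $1$, is diagonal with diagonal entries $\varepsilon,1,\ldots,1$, and the cofactor sign $(-1)^{i+1}$ cancels the one built into $\varepsilon$, yielding determinant $x_i c = 1$ in $\zpkz$.

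The only real subtlety is this sign bookkeeping, which is why the rescaling uses $(-1)^{i+1}c$ rather than just $c$ — it is what pushes the matrix from $\gln$ (any unit determinant) into $\sln$ (determinant exactly $1$); everything else is routine. The running-time analysis is immediate: locating $i$ takes $O(n)$ operations, the inversion is $O(1)$, and writing out the $n(n-1)$ entries of $\MA$ dominates at $O(n^2)$, matching the claimed bound.
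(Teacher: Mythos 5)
Your construction is correct and is essentially the same as the paper's: both locate a unit coordinate $x_i$, complete $\Vx$ with standard basis vectors, and use a single factor of $x_i^{-1}$ to force the determinant to be a unit. If anything, your explicit cofactor expansion with the $(-1)^{i+1}$ correction handles the determinant sign more carefully than the paper, whose row-swap argument as written only guarantees determinant $\pm 1$ (hence membership in $\gln(\zpkz)$) rather than determinant exactly $1$.
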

\begin{proof}
The column vector $\Vx=(x_1,\cdots,x_n)$ is primitive, hence there 
exists a $x_i$, $i\in [n]$ such that $x_i$ is invertible over $\zpkz$.
It is easier to write the matrix $\MU$, which equal $[\Vx,\MA]$ where
the row $i$ and $1$ or $[\Vx,\MA]$ are swapped.
\[
\MU = \begin{pmatrix}x_i & 0 \\
\Vx_{-i} & x_i^{-1}\bmod{p^k} \oplus \MI^{n-2}
\end{pmatrix}\qquad \Vx_{-i}=(x_1,\cdots,x_{i-1},x_{i+1},\cdots,x_n)
\]
The matrix $\MU$ has determinant~1 modulo $p^k$ and hence is invertible
over $\zpkz$. The lemma now follows from the fact that the swapped matrix 
is invertible iff the original matrix is invertible.
\end{proof}

\paragraph{Quadratic Form.} 
An $n$-ary quadratic form over a ring $\Ring$
is a symmetric matrix $\MQ \in \Ring^{n\times n}$, interpreted as the following
polynomial in $n$ formal variables $x_1,\cdots, x_n$ of uniform degree~2.
\[
\sum_{1\leq i,j \leq n}\MQ_{ij}x_ix_j = 
\MQ_{11}x_1^2 + \MQ_{12}x_1x_2 + \cdots = \Vx'\MQ\Vx
\]
The quadratic form is called {\em integral} if it is defined over the ring
$\bbZ$. It is called positive definite if for all non-zero column vectors
$\Vx$, $\Vx'\MQ\Vx > 0$. This work deals with integral quadratic forms,
henceforth called simply {\em quadratic forms}.
The {\em determinant} of the quadratic 
form is defined as $\det(\MQ)$. 
A quadratic form is called {\em diagonal} if $\MQ$ is a diagonal matrix. 

Given a set of formal variables 
$\Vx=\begin{pmatrix}x_1 & \cdots & x_n\end{pmatrix}'$ one can make a linear 
change of variables to $\Vy=\begin{pmatrix}y_1 & \cdots & y_n\end{pmatrix}'$ 
using a matrix $\MU \in \Ring^{n\times n}$ by setting $\Vy=\MU\Vx$. 
If additionally, 
$\MU$ is invertible over $\Ring$ i.e., $\MU \in \gln(\Ring)$, then this 
change of 
variables is reversible over the ring. We now define the equivalence of
quadratic forms over the ring $\Ring$ (compare with Lattice Isomorphism).

\begin{definition}\label{def:equiv}
Let $\MQ_1^n, \MQ_2^n$ be quadratic forms over a ring $\Ring$. They are called 
$\Ring$-{\em equivalent} if there exists a $\MU \in \gln(\Ring)$ such that 
$\MQ_2=\MU'\MQ_1\MU$.
\end{definition}

If $\Ring=\zqz$, for some positive integer $q$, then two integral
quadratic forms $\MQ_1^n$ and $\MQ_2^n$ will be called $q$-equivalent (denoted,
$\MQ_1\eqq \MQ_2$)
if there exists a matrix $\MU \in \gln(\zqz)$ such that 
$\MQ_2\equiv\MU'\MQ_1\MU\pmod q$.
For a prime $p$, they
are $p^*$-equivalent (denoted, $\MQ_1\eqp \MQ_2$) if  they
are $p^k$-equivalent for every positive integer $k$. Additionally,
$(-1)^*$-equivalence as well as $(-1)$-equivalence mean
equivalence over the reals $\bbR$.

Let $\MQ^n$ be a $n$-ary integral quadratic form, and $q,t $ be positive 
integers. If the equation $\Vx'\MQ\Vx \equiv t \pmod{q}$ has a solution
then we say that $t$ has a $q$-representation in $\MQ$ (or $t$ has 
a representation in
$\MQ$ over $\zqz$). Solutions 
$\Vx \in (\zqz)^n$ to the equation are called $q$-{\em representations} of
$t$ in $\MQ$. We classify the representations into two
categories: {\em primitive} and {\em non-primitive} (see Definition 
\ref{def:Prim}).

For the following result, see Theorem~2, \cite{Jones50}.

\begin{theorem}\label{thm:diagonal}
An integral quadratic form $\MQ^n$ is equivalent to a quadratic
form 
$q_1 \oplus \cdots \oplus q_a \oplus q_{a+1} \oplus \cdots \oplus q_n$ 
over the field of rationals $\bbQ$,
where $a \in [n]$, $q_1, \cdots, q_a$ are positive rational numbers and
$q_{a+1}, \cdots, q_n$ are negative rational numbers.
\end{theorem}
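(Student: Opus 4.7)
The plan is to prove this by induction on $n$, repeatedly diagonalizing $\MQ$ by congruence transformations $\MQ \mapsto \MU'\MQ\MU$ with $\MU \in \gln(\bbQ)$ — a standard rational Gram--Schmidt with respect to the bilinear form defined by $\MQ$ — and then sorting the resulting diagonal entries by sign.

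First I would handle the base case $n=1$ (trivial) and the case $\MQ = 0$ (also trivial: take any $q_i$'s consistent with the statement, though strictly the statement implicitly assumes the form is non-degenerate). For the inductive step, I would first arrange that $\MQ$ has a nonzero diagonal entry. If every $\MQ_{ii} = 0$ but some off-diagonal $\MQ_{ij} \neq 0$ (with $i\neq j$), I would apply the elementary congruence $\MU = \MI_n + \Ve_j\Ve_i'$; a direct computation shows that the $(i,i)$ entry of $\MU'\MQ\MU$ is $\MQ_{ii} + 2\MQ_{ij} + \MQ_{jj} = 2\MQ_{ij} \neq 0$. (Here working over $\bbQ$ is essential: the factor $2$ is invertible.) After this step and a permutation of coordinates, I may assume $d_1 := \MQ_{11} \neq 0$.

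Next I would zero out the first row and column off-diagonals. For each $j = 2,\dots,n$ the elementary congruence with $\MU_j = \MI_n - (\MQ_{1j}/d_1)\Ve_1\Ve_j'$ leaves $\MQ_{11}$ unchanged and kills the $(1,j)$ and $(j,1)$ entries. Composing these gives a form $d_1 \oplus \MQ^{(n-1)}$ where $\MQ^{(n-1)}$ is an $(n{-}1)$-ary symmetric matrix over $\bbQ$. By the induction hypothesis $\MQ^{(n-1)}$ is $\bbQ$-equivalent to a diagonal form, so $\MQ$ is $\bbQ$-equivalent to $\diag(d_1,\dots,d_n)$ with each $d_i \in \bbQ$. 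Finally, a permutation matrix (which lies in $\gln(\bbQ)$) reorders the coordinates so the positive $d_i$'s appear first (indices $1,\dots,a$) followed by the negative ones (indices $a+1,\dots,n$); under the non-degeneracy assumption $\det(\MQ) \neq 0$ none of the $d_i$'s vanish, which matches the form of the statement.

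The only step that requires thought is the first one — forcing a nonzero diagonal entry when all diagonals vanish — and even that is straightforward once one notices that $\bbQ$ has characteristic $0$, so the ``$2\MQ_{ij}$'' coming out of the shear $\MI + \Ve_j\Ve_i'$ is nonzero whenever $\MQ_{ij}$ is. The rest is bookkeeping: composing the finitely many $\gln(\bbQ)$ matrices used at each stage gives the single global $\MU \in \gln(\bbQ)$ that realizes $\MU'\MQ\MU = q_1 \oplus \cdots \oplus q_n$.
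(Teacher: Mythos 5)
Your proof is correct: the shear $\MI+e_je_i'$ to create a nonzero diagonal entry (valid because $2$ is invertible in $\bbQ$), followed by clearing the first row and column and inducting, is the standard symmetric Gram--Schmidt argument, and your observation that non-degeneracy of $\MQ$ is implicitly needed for every $q_i$ to be strictly positive or negative is accurate. The paper itself gives no proof of this statement --- it simply cites Theorem~2 of Jones --- but your argument is exactly the technique the paper does spell out in its Appendix proof of Theorem~\ref{thm:BlockDiagonal} (the $\zpkz$ analogue, where the extra Type~II case arises only because $2$ fails to be a unit), so nothing further is needed.
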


The {\em signature} (also, $(-1)$-signature) of the form 
$\MQ$ (denoted $\sig(\MQ)$, also  $\sig_{(-1)}(\MQ)$) is defined
as the number $2a-n$, where $a$ is the integer in 
Theorem \ref{thm:diagonal}. 

Each rational number $q_i$ in Theorem
\ref{thm:diagonal} can be written uniquely as 
$p^{\alpha_i}a_i$, where $\alpha_i = \ordp(q_i)$ and $a_i = \copp(q_i)$.
Let $m$ be the number of $p$-antisquares among $q_1,\cdots,q_n$. 
Then, we define the $p$-signature of $\MQ$ as follows.
\begin{align}\label{def:PSignature}
\sigp(\MQ)=\left\{
\begin{array}{ll}
p^{\alpha_1}+ p^{\alpha_2} + \cdots + p^{\alpha_n} + 4m  \pmod{8}  & p\neq 2 \\
a_1 + a_2 + \cdots + + a_n + 4m \pmod{8} & p=2 
\end{array}\right.
\end{align}
The $2$-signature is also known as the {\em oddity} and is denoted by
$\odty(\MQ)$. 
Even though there are different ways to diagonalize a quadratic form
over $\bbQ$, the signatures are an invariant for the quadratic form.

\paragraph{Randomized Algorithms.} 
Our randomized algorithms are 
Las Vegas algorithms. They either fail
and output nothing, or produce a correct answer. The 
probability of failure is bounded by a constant. Thus, for any
$\delta>0$, it is possible to repeat the algorithm 
$O(\log \frac{1}{\delta})$ times and succeed with probability at least
$1-\delta$. Henceforth, these algorithms will be called
{\em randomized algorithms}.

Our algorithms perform two kinds of operations. Ring operations e.g.,
multiplication, additions, inversions over $\zpkz$ and operations
over integers $\bbZ$ e.g., multiplications, additions, divisions etc.
The runtime for all these operations is treated as constant i.e., $O(1)$
and the time complexity of the algorithms is measured in terms of 
ring operations. 
For computing a uniform representation, we also need to sample a 
uniform ring element from $\zpkz$. We adapt the convention that sampling
a uniform ring elements also takes $O(1)$ ring operations.
For example, the Legendre symbol of an integer $a$ can be computed
by fast exponentiation in $O(\log p)$ ring operations over $\zpz$ 
while $\ordp(t)$
for $t \in \zpkz$ can be computed by fast exponentiation in $O(\log k)$ ring
operations over $\zpkz$.

Let $\omega$ be the constant, such that multiplying two $n\times n$ 
matrices over $\zpkz$ takes
$O(n^{\omega})$ ring operations. 

\paragraph{Diagonalizing a Quadratic Form.}

For the ring $\zpkz$ such that $p$ is odd, there
always exists an equivalent quadratic form which is also diagonal 
(see \cite{CS99}, Theorem 2, page 369).
Additionally, one can explicitly find the invertible change of
variables that turns it into a diagonal quadratic form.
The situation is tricky over the ring $\ztkz$. Here, it might not
be possible to eliminate all mixed terms, i.e., terms of the form
$2a_{ij}x_ix_j$ with $i\neq j$. For example, consider
the quadratic form $2xy$ i.e., 
$\begin{pmatrix}0 & 1 \\ 1 & 0 \end{pmatrix}$ over $\ztkz$. 
An invertible linear change of variables over $\ztkz$ is
of the following form.
\begin{gather*}
\begin{array}{l}x \to a_1x_1 + a_2x_2\\ 
y \to b_1x_1+b_2x_2\end{array} \qquad 
\begin{pmatrix}a_1 & a_2 \\ b_1 & b_2\end{pmatrix} \text{ invertible over
$\ztkz$}
\end{gather*}
The mixed term after this transformation is $2(a_1b_2+a_2b_1)$. 
As $a_1b_2+a_2b_1 \bmod 2$ is the same as the determinant
of the change of variables above i.e., $a_1b_2-a_2b_1$
modulo $2$; it is not possible for a transformation in 
$\gl_2(\ztkz)$ to eliminate the mixed term. 
Instead, one can show that over $\ztkz$ it is possible to get an 
equivalent form where the mixed terms are disjoint i.e., both
$x_ix_j$ and $x_ix_k$ do not appear, where $i,j,k$ are 
pairwise distinct.
One captures this form by the following definition.

\begin{definition}\label{def:BlockDiagonal}
A matrix $\MD^n$ over integers is in a block diagonal form if it is a
direct sum of type I and type II forms; where type I form is an integer
while type II is a matrix of the form 
$\begin{pmatrix}2^{\ell+1}a & 2^{\ell} b \\ 2^{\ell}b & 2^{\ell+1}c\end{pmatrix}$
with $b$ odd.
\end{definition}

The following theorem is folklore and is also implicit in the proof of
Theorem~2 on page~369 in \cite{CS99}. For completeness, we provide a 
proof in Appendix \ref{sec:BlockDiagonal}.

\begin{theorem}\label{thm:BlockDiagonal}
Let $\MQ^n$ be an integral quadratic form, $p$ be a prime, 
and $k$ be a positive integer. 
Then, there is an algorithm that performs $O(n^{1+\omega}\log k)$ ring operations
and produces a matrix $\MU \in \sln(\zpkz)$ such that $\MU'\MQ\MU\pmod{p^k}$, is a
diagonal matrix for odd primes $p$ and a block diagonal matrix
(in the sense of Definition~\ref{def:BlockDiagonal}) for $p=2$.
\end{theorem}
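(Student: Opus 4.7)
\medskip

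The plan is to prove this by induction on $n$, at each step peeling off a $1{\times}1$ (or, only for $p=2$, a $2{\times}2$) block from the top-left corner and recursing on what remains. I maintain a matrix $\MU$ accumulating the elementary transformations; below I describe only how each step acts on $\MQ$, since the analogous update on $\MU$ is automatic.

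The basic pivot step begins by computing $\ell := \min_{i,j}\ordp(\MQ_{ij}\bmod p^k)$; if $\ell\geq k$ then $\MQ\equiv 0$ and we are done. For odd $p$, if some diagonal entry $\MQ_{ii}$ has order exactly $\ell$ I swap it into position $(1,1)$. Otherwise every diagonal entry has order strictly greater than $\ell$ while some off-diagonal $\MQ_{ij}$ has order exactly $\ell$; the substitution $\Vx_i\to\Vx_i+\Vx_j$ replaces $\MQ_{ii}$ by $\MQ_{ii}+2\MQ_{ij}+\MQ_{jj}$, and since $2$ is a unit modulo $p$ the middle term contributes order $\ell$ while the other two contribute order $>\ell$, making the new diagonal entry have order exactly $\ell$, which I then swap into position $(1,1)$. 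At this point $\MQ_{11}$ divides every $\MQ_{1j}$ in $\zpkz$, so for each $j\geq 2$ the substitution $\Vx_j\to\Vx_j-(\MQ_{1j}/\MQ_{11})\Vx_1$ is legal over $\zpkz$ and kills $\MQ_{1j}$. The first row and column are now decoupled and I recurse on the surviving $(n-1){\times}(n-1)$ submatrix.

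The main obstacle is $p=2$. If every diagonal entry has order $\geq\ell+1$ while some off-diagonal $\MQ_{ij}$ has order exactly $\ell$, the argument above breaks down because $2\MQ_{ij}$ now has order $\ell+1$, so no coordinate mixing can produce a diagonal entry of order $\ell$. I therefore retain the entire $2{\times}2$ block, which after swapping $i,j$ into positions $1,2$ is exactly a type~II block in the sense of Definition~\ref{def:BlockDiagonal} with $b:=\MQ_{12}/2^\ell$ odd. To decouple it, for each $r\geq 3$ I must find $\alpha,\beta\in\ztkz$ solving
\[
\begin{pmatrix}\MQ_{11}&\MQ_{12}\\ \MQ_{12}&\MQ_{22}\end{pmatrix}\begin{pmatrix}\alpha\\ \beta\end{pmatrix}\equiv -\begin{pmatrix}\MQ_{1r}\\ \MQ_{2r}\end{pmatrix}\pmod{2^k},
\]
after which the substitution $\Vx_r\to\Vx_r+\alpha\Vx_1+\beta\Vx_2$ kills both cross-terms at once. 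Since $\MQ_{12}$ has minimum order $\ell$, every entry on both sides is divisible by $2^\ell$, and factoring $2^\ell$ out reduces the system to one whose coefficient matrix is $\left(\begin{smallmatrix}2a&b\\ b&2c\end{smallmatrix}\right)$ modulo $2^{k-\ell}$; its determinant $4ac-b^2$ is odd because $b$ is, hence the matrix is invertible modulo $2^{k-\ell}$ and the system is uniquely solvable there. Lifting the solution arbitrarily to $\ztkz$ and applying the substitution gives the desired decoupling, and I recurse on the remaining $(n-2){\times}(n-2)$ block.

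For the runtime, at each pivot step the minimum-order search costs $O(n^2\log k)$ ring operations (one $O(\log k)$ order computation per entry) and the update of $\MQ$ together with the accumulated $\MU$ is a symmetric rank-$1$ or rank-$2$ outer product costing $O(n^2)$ ring operations; summing over the at most $n$ pivot steps and using that $\omega\geq 2$ gives the claimed $O(n^{1+\omega}\log k)$ bound. Every elementary step has determinant $\pm 1$, so $\MU\in\gln(\zpkz)$; if the final product has determinant $-1$ I flip the sign of a single column, which changes neither the diagonal entries (odd $p$) nor the oddness of the off-diagonal entry of any type~II block ($p=2$), so the output retains the required shape and has $\det(\MU)=1$.
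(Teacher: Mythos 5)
Your proposal is correct and follows essentially the same route as the paper's proof: recursively pivot on an entry of minimal $p$-order, use the $\Vx_i\to\Vx_i+\Vx_j$ trick (valid since $2$ is a unit) to move an off-diagonal minimum onto the diagonal for odd $p$, and for $p=2$ retain the resulting type~II block and decouple it by inverting the $2\times2$ system whose determinant $4ac-b^2$ is odd. Your handling of the determinant via a final column-sign flip and your tighter $O(n^3\log k)$ accounting of the rank-one updates are fine refinements of the same argument.
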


\begin{definition}\label{def:LocalTransformation}
Let $\MD^n=\oplus_i\MD_i^{n_i}$ be a block diagonal quadratic form 
(Definition \ref{def:BlockDiagonal}). A local transformation
is a matrix $\MU \in \gln(\zpkz)$ which applies a sub-transformation
$\MV \in \gl_a(\zpkz)$ on a contiguous sequence of blocks
$\MB^a=\MD_j\oplus\MD_{j+1}\oplus \cdots$ turning it into 
$\MV'\MB\MV \bmod{p^k}$, 
leaving rest of the blocks in $\MD$ unchanged. 
A local transformation transforms a block diagonal form to a 
$p^k$-equivalent quadratic form.
Given a matrix $\MV \in \gl_a(\zpkz)$ and a contiguous sequence
$\MB^a=\MD_j\oplus\MD_{j+1}\oplus \cdots$ of blocks, to apply
$\MV$ on; the local transformation 
$\MU \in \gln(\zpkz)$ is given by $
\MU = \MI^{n_1+\cdots+n_{j-1}}\oplus \MV \oplus \MI^{n_{j+2}+\cdots}$.
\end{definition}

\subsection{Canonical Blocks}\label{sec:CanBlock}

In this section, we describe the canonical form for a single
Type I and a single Type II block.
For convenience, we introduce the following Type II matrices.

\begin{definition}\label{def:ordt}
Let $\MB=2^\ell\begin{pmatrix}2a & b \\ b & 2c\end{pmatrix}$ be a 
Type II block with $b$ odd. Then, $\ordt(\MB)=\ell$.
\end{definition}

Next, we define something called the $p^k$-symbol of an integer.

\begin{definition}\label{def:SymbolOneDim}
The $p^k$-{\em symbol} of an integer $t$ is
$\sympk(t)=(\ordp(t \modpk), \sgnp(t \modpk))$.
\end{definition}

The next lemma shows the importance of the $p^k$-symbol.

\begin{lemma}\label{lem:IntegerSymbolIsInvariant}
For integers $a,b$ and prime $p$:
$b \eqpk a$ iff $\sympk(a)=\sympk(b)$.
\end{lemma}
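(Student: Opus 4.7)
The plan is to prove both directions separately, with the reverse direction reducing to the square-detection criterion of Lemma~\ref{lem:Square}.

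For the forward direction, suppose $b \eqpk a$, so there is a unit $u \in (\zpkz)^\times$ with $u^2 a \equiv b \pmod{p^k}$. Since $u$ is coprime to $p$, the value $u^2$ is itself coprime to $p$, so it does not alter the $p$-order: $\ordp(u^2 a \bmod p^k) = \ordp(a \bmod p^k)$. For the sign component, I would invoke the multiplicativity of $\sgnp$ on the coprime-to-$p$ parts: for odd $p$, $\sgnp$ is the Legendre symbol on $\copp$, which is multiplicative and sends squares to $+1$; for $p=2$, $\sgnt$ only depends on the coprime-to-$2$ part modulo $8$, and any odd square is $\equiv 1 \pmod 8$, so $\sgnt(u^2 a) = \sgnt(a)$. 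Hence $\sympk(b) = \sympk(a)$.

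For the reverse direction, assume $\sympk(a) = \sympk(b)$. If $\ordp(a \bmod p^k) = \infty$, then $a \equiv 0 \equiv b \pmod{p^k}$ and we may take $u = 1$. Otherwise, set $\ell = \ordp(a \bmod p^k) = \ordp(b \bmod p^k) < k$, so we can write $a \equiv p^\ell a_0$ and $b \equiv p^\ell b_0 \pmod{p^k}$ with $a_0, b_0$ units modulo $p^{k-\ell}$. The equation $u^2 a \equiv b \pmod{p^k}$ becomes $u^2 a_0 \equiv b_0 \pmod{p^{k-\ell}}$, i.e., the requirement that $b_0 a_0^{-1}$ be a square modulo $p^{k-\ell}$. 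By Lemma~\ref{lem:Square}, this holds iff $\ordp(b_0 a_0^{-1}) = 0$ (automatic, as both are units) and $\sgnp(b_0 a_0^{-1}) = 1$. I would then check this last condition using the hypothesis $\sgnp(a) = \sgnp(b)$: for odd $p$ the Legendre symbol gives $\sgnp(b_0 a_0^{-1}) = \sgnp(b_0)\sgnp(a_0)^{-1} = 1$; for $p=2$, $\sgnt(a_0) = \sgnt(b_0)$ forces $a_0 \equiv b_0 \pmod 8$, hence $b_0 a_0^{-1} \equiv 1 \pmod 8$. Lift the resulting square root $u_0 \pmod{p^{k-\ell}}$ to any $u \pmod{p^k}$; since $u_0$ is coprime to $p$, so is $u$, giving the required $u \in \gl_1(\zpkz)$.

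The only mildly delicate step is verifying the sign-multiplicativity claim in both primes — in particular, making sure that the $p=2$ definition (residue modulo $8$) really forces $b_0 a_0^{-1} \equiv 1 \pmod 8$ when $a_0 \equiv b_0 \pmod 8$, and that a square root exists modulo $p^{k-\ell}$ rather than only modulo $p^k$ (which is what Lemma~\ref{lem:Square} gives us once we reduce to the $(k-\ell)$-th power). No other obstacles arise: the whole proof is a packaging of Lemma~\ref{lem:Square} together with basic properties of $\ordp$ and $\sgnp$.
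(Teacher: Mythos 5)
Your proof is correct and follows essentially the same route as the paper: the forward direction via invariance of $\ordp$ and $\sgnp$ under multiplication by the square of a unit, and the converse by reducing to the statement that $b_0a_0^{-1}$ (the ratio of the $p$-coprime parts) is a square modulo $p^{k-\ell}$ and invoking Lemma~\ref{lem:Square}. The paper's proof is the same argument with the same appeal to Lemma~\ref{lem:Square}, differing only in notation.
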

\begin{proof}
The lemma is true if $\ordp(a)$ or $\ordp(b)$ is at least
$k$. Hence, we assume that $\ordp(a), \ordp(b) < k$.

We first show that $b \eqpk a$ implies $\sympk(a)=\sympk(b)$.
If $b \eqpk a$ then there 
exists a $u \in (\zpkz)^\times$ such that 
$b \equiv u^2a \pmod{p^k}$. But, multiplying by a square of a unit does
not change the sign i.e., 
$\sgnp(a \bmod{p^k})=\sgnp(u^2a \bmod p^k)=\sgnp(b \bmod{p^k})$.
Also, $\ordp(u)=0$ implies that $\ordp(a)=\ordp(b)$. This shows
that $\sympk(a)=\sympk(b)$.

We now show the converse. Suppose $a$ and $b$ be such that 
$\sympk(a)=\sympk(b)$. Let $\ordp(a)=\ordp(b)=\alpha$. By
definition of $p^k$-symbol, $\sgnp(a \modpk)=\sgnp(b \modpk)$. But then,
\begin{align*}
\sgnp\left(\copp(a) \bmod{p^{k-\alpha}}\right) = 
\sgnp\left(\copp(b) \bmod{p^{k-\alpha}}\right)\\
\iff \sgnp\left(\copp(a)\copp(b)^{-1} \bmod{p^{k-\alpha}}\right) = 1
\end{align*}
By Lemma \ref{lem:Square}, 
$\copp(a)\copp(b)^{-1} \bmod{p^{k-\alpha}}$ is a 
quadratic residue modulo $p^{k-\alpha}$. But then,
there exists a unit $u$ such that 
\[
u^2 \equiv \copp(a)\copp(b)^{-1} \pmod{p^{k-\alpha}}\;.
\] 
Multiplying this equation by 
$\copp(b)p^\alpha$ yields
$u^2b \equiv a \modpk$ or $b \eqpk a$.
\end{proof}

Let $p$ be a prime and $\MB$ be a single block, according to the
Definition \ref{def:BlockDiagonal}. If $\MB$ is of Type I then
$\MB$ is an integer and the canonical function $\canp(\MB)$
is defined as follows.
\[
\canp(\MB) = \left\{
	\begin{array}{ll}
	p^{\ordp(\MB)} & \text{if } p \text{ odd, }\legendre{\copp(\MB)}{p}=1 \\
	p^{\ordp(\MB)}\sigma_p & \text{if } p \text{ odd, }\legendre{\copp(\MB)}{p}=-1 \\
	2^{\ordt(\MB)}(\copt(\MB) \bmod 8) & \text{if }p=2
	\end{array}\right.
\]
The uniqueness of the canonical form follows from 
Lemma \ref{lem:IntegerSymbolIsInvariant}. Otherwise, $\MB$ is a Type II
block and $p=2$. Let
$\MB=2^\ell\begin{pmatrix}2a&b\\b&2c\end{pmatrix}$,
$b$ odd. The square of an odd integer is always equal to~1 modulo $8$.
But then, the quantity $4ac-b^2 \bmod 8 \in \{3,7\}$. The $2$-canonical
form for a Type II block $\MB$ is defined as follows.
\[
\cant\left(2^\ell\begin{pmatrix}2a&b\\b&2c\end{pmatrix}\right) = \left\{
	\begin{array}{ll}
	2^\ell \MTM & 4ac-b^2\equiv 3 \bmod 8 \\
	2^\ell \MTP & 4ac-b^2\equiv 7 \bmod 8
	\end{array}\right.
\]
The uniqueness follows from Lemma~6, \cite{Jones42}.

\subsection{Primitive Representations Modulo $p^k$}

The following theorem gives an algorithmic handle on the 
question of deciding if an integer $t$ has a primitive
$p^*$-representation in $\MQ^n$. The theorem is implicit
in Siegel \cite{Siegel35}.

\begin{theorem}\label{thm:Siegel13}
Let $\MQ^n$ be an integral quadratic form, $t$ be an integer,
$p$ be a prime and $k=\max\{\ordp(\MQ),\ordp(t)\}+k_p$. Then,
if $t$ has a primitive $p^k$-representation in $\MQ$ then 
$t$ has a primitive $p^*$-representation in $\tMQ$ for all
$\tMQ \eqp \MQ$.
\end{theorem}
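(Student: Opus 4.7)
The plan is to sidestep additive Hensel lifting and instead use multiplicative rescaling of $\Vx_0$, reducing the problem to a single appeal to Lemma~\ref{lem:Square}. First, primitive representations transfer under $p^*$-equivalence: given $\tMQ \eqp \MQ$ and $m \geq 1$, pick $\MU_m \in \gln(\zRz{p^m})$ with $\MU_m'\MQ\MU_m \equiv \tMQ \pmod{p^m}$; then for any primitive $p^m$-representation $\Vy$ of $t$ in $\MQ$, the vector $\MU_m^{-1}\Vy$ is a primitive $p^m$-representation of $t$ in $\tMQ$ (primitiveness survives because $\Vy = \MU_m(\MU_m^{-1}\Vy)$ makes non-primitiveness of $\MU_m^{-1}\Vy$ incompatible with primitiveness of $\Vy$). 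Hence it suffices to produce, for each $m$, a primitive $p^m$-representation of $t$ in $\MQ$ itself.

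Lift $\Vx_0$ to $\bbZ^n$, still denoted $\Vx_0$, and set $q := \Vx_0' \MQ \Vx_0 \in \bbZ$, so $q \equiv t \pmod{p^k}$. Writing $\tau := \ord_p(t)$, the hypothesis $k \geq \tau + k_p > \tau$ forces $\ord_p(q) = \tau$; factoring $q = p^\tau q_1$ and $t = p^\tau t_1$ with $q_1, t_1$ coprime to $p$, division of the congruence by $p^\tau$ yields $q_1 \equiv t_1 \pmod{p^{k-\tau}}$, and since $k - \tau \geq k_p$, in particular $q_1 \equiv t_1 \pmod{p^{k_p}}$. Hence the $p$-unit $u := t_1 q_1^{-1}$ satisfies $u \equiv 1 \pmod{p^{k_p}}$, and by Lemma~\ref{lem:Square} it is a square modulo $p^m$ for every $m$: $\ord_p(u) = 0$ is even, and $\sgn_p(u) = 1$ because $u \equiv 1 \pmod p$ is a quadratic residue for odd $p$ and $u \equiv 1 \pmod 8$ for $p = 2$. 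For each $m$, pick $y_m \in (\zRz{p^m})^\times$ with $y_m^2 \equiv u \pmod{p^m}$ and set $\Vx_m := y_m\Vx_0 \bmod p^m$. The vector $\Vx_m$ is primitive, since $y_m$ and some coordinate of $\Vx_0$ are both $p$-units; and multiplying $y_m^2 q_1 \equiv t_1 \pmod{p^m}$ by $p^\tau$ gives $\Vx_m'\MQ\Vx_m = y_m^2 q \equiv t \pmod{p^m}$. This provides the required primitive $p^m$-representation of $t$ in $\MQ$.

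The conceptual hurdle is seeing that multiplicative rescaling $\Vx_0 \mapsto y_m \Vx_0$ is the natural move, rather than additive corrections $\Vx \mapsto \Vx + p^a \Vy$: a naive additive Hensel argument would demand $k > 2\ord_p(\MQ\Vx_0)$, which can be substantially larger than $\max\{\ord_p(\MQ), \ord_p(t)\} + k_p$. Once rescaling is in view, the precision $k_p$ is precisely the threshold demanded by Lemma~\ref{lem:Square}, which is why the given value of $k$ is sufficient.
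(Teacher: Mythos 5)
Your proof is correct and follows essentially the same route as the paper's: the paper also rescales the given representation by a unit square root $u_i$ with $u_i^2 a \equiv t \pmod{p^i}$ (obtained via the $p^k$-symbol machinery of Lemma~\ref{lem:IntegerSymbolIsInvariant}, which itself rests on Lemma~\ref{lem:Square}), and then transports the result to $\tMQ$ through the equivalence matrix exactly as you do. Your unfolding of the symbol lemma into a direct appeal to Lemma~\ref{lem:Square} is a presentational difference only.
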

\begin{proof}
We do the proof in two steps: (i) if $t$ has a primitive $p^k$-
representation in $\MQ$ then $t$ has a primitive $p^*$-representation
in $\MQ$, and (ii) if $t$ has a primitive $p^*$-representation in
$\MQ$ then $t$ has a primitive $p^*$-representation in $\tMQ$
for all $\tMQ$ such that $\tMQ \eqp \MQ$.

The proof of (i) follows.
By assumption, there exists a primitive $\Vx \in (\bbZ/p^k\bbZ)^n$ 
such that $\Vx'\MQ\Vx\equiv t \pmod{p^k}$. Let $a=\Vx'\MQ\Vx$ be 
an integer, then by definition of symbols $a$ and $t$ have the 
same $p^k$-symbol. This implies that for all 
$i\geq k$ there exists a unit $u_i \in \bbZ/p^i\bbZ$ such that 
$u_i^2a\equiv t \pmod{p^i}$. It follows that $u_i\Vx$ is a 
primitive representation of $t$ in $\bbZ/p^i\bbZ$. But,
if $\Vx$ is a primitive representation of $t$ by $\MQ$ over 
$\bbZ/p^i\bbZ$ then $\Vx$ is also a primitive representation 
of $t$ by $\MQ$ over $\bbZ/p^j\bbZ$, for all positive integers 
$j \leq i$. This completes the proof of (i).

The proof of (ii) follows. Let $K$ be an arbitrary positive 
integer and $\Vx \in (\bbZ/p^K\bbZ)^n$ be a primitive vector 
such that $\Vx'\MQ\Vx\equiv t \bmod{p^K}$. As $\tMQ \eqp \MQ$, there exists
$\MU \in \gln(\bbZ/p^K\bbZ)$ such that $\MQ\equiv\MU'\tMQ\MU \bmod{p^K}$.
Thus, $(\MU\Vx)'\tMQ(\MU\Vx) \equiv t \bmod{p^K}$ and $\MU\Vx$
is a $p^K$-representation of $t$ in $\tMQ$. If $\Vx$ is primitive then
so is $\MU\Vx$. As $K$ is arbitrary, the proof of (ii) and hence
the theorem is complete.
\end{proof}

Next, we give several results from \cite{DH14}.
This paper deals with the following problem. Given a quadratic
form $\MQ$ in $n$-variables, a prime $p$, and integers $k,t$
find a solution of $\Vx'\MQ\Vx\equiv t \bmod{p^k}$, if it exists.
Note that it is easy (i.e., polynomial time tester exists) to test 
if $t$ has a $p^k$-representation in $\MQ$.

\begin{theorem}\label{thm:PolyRep}
Let $\MQ^n$ be an integral quadratic form, $p$ be a prime, 
$k$ be a positive integer, $t$ be an element of $\zpkz$.
Then, there is a 
polynomial time Las Vegas algorithm
that performs
$O(n^{1+\omega}\log k+nk^3+n\log p)$ ring operations over 
$\zpkz$ and fails with constant probability (say, at most $\frac1{3}$).
Otherwise, the algorithm
outputs a primitive
 $p^k$-representation of $t$ by $\MQ$,
if such a representation exists. The time complexity can be improved
for the following special cases.
\begin{align*}
\begin{array}{ll}
\text{Type I, $p$ odd} & ~~O(\log k+\log p)\\
\text{Type I, $p=2$} & ~~O(k)\\
\text{Type II} & ~~O(k\log k)
\end{array}
\end{align*}
\end{theorem}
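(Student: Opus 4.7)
The plan is to first block-diagonalize $\MQ$ over $\zpkz$, and then solve the resulting representation problem block by block, combined via randomization over all but one distinguished block. Using Theorem \ref{thm:BlockDiagonal} we compute $\MU \in \sln(\zpkz)$ such that $\MD := \MU'\MQ\MU \bmod p^k$ is a direct sum $\MB_1 \oplus \cdots \oplus \MB_m$ of Type I and Type II blocks at a cost of $O(n^{1+\omega}\log k)$ ring operations. Since $\MU$ is unimodular it maps primitive vectors to primitive vectors, so it suffices to find a primitive $\Vy = (\Vy_1,\ldots,\Vy_m)$ satisfying $\sum_i \Vy_i'\MB_i\Vy_i \equiv t \pmod{p^k}$, after which $\MU\Vy$ solves the original problem.

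For a single block $\MB$ I would handle the three advertised special cases directly. For a Type I block $a$ with $p$ odd, the problem is $ax^2 \equiv t \pmod{p^k}$: Lemma \ref{lem:Square} provides a constant-time feasibility test (check $\ordp(t) - \ordp(a) \geq 0$ and even, and $\sgnp(t/a) = 1$), and when feasible a root is produced via Tonelli--Shanks modulo $p$ followed by Hensel lifting, costing $O(\log p + \log k)$. For Type I with $p = 2$, the feasibility test uses the Kronecker symbol modulo $8$ and Hensel lifting gives $O(k)$. For a Type II block $\MB$, one coordinate of $\Vx$ is chosen (randomly among primitive values) and the other is solved as a linear congruence modulo $2^k$, fitting within $O(k \log k)$.

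For the full block-diagonal form the plan is a randomized reduction: sample $\Vy_2, \ldots, \Vy_m$ uniformly (from primitive or general vectors as needed), form the residual target $t' := t - \sum_{i\geq 2} \Vy_i'\MB_i\Vy_i \bmod p^k$, and invoke the Type~I/Type~II subroutine on $\MB_1$ with target $t'$. If $t$ admits a primitive $p^k$-representation by $\MD$ then, by Theorem \ref{thm:Siegel13} and Lemma \ref{lem:IntegerSymbolIsInvariant}, the $p$-symbol of the residual $t'$ falls into the representable set of $\MB_1$ with constant probability, yielding the Las Vegas guarantee; failure is detected by the single-block subroutine. Primitivity of the final $\Vy$ is ensured because $\MB_1$ either returns a primitive component $\Vy_1$ or the sampled tail is already primitive. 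Adding the cost of at most $n$ single-block invocations and the overhead of testing $p^k$-representability (which needs the $p$-symbol of $\MD$, computable in $O(n\log p)$ sign evaluations) yields the claimed total.

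The main obstacle is proving that the randomized sampling succeeds with constant probability, i.e.\ that the residual $t'$ lands in the representable set of $\MB_1$ often enough. This is a local density statement about primitive representations of a quadratic form modulo $p^k$ and is most delicate when $p = 2$: oddity, the Kronecker sign, and the Type II obstructions (the $4ac - b^2 \bmod 8$ invariant from Section \ref{sec:CanBlock}) must all be tracked simultaneously across blocks, and the $k_p = 3$ correction from \eqref{not:kp} must be built into the symbol match to avoid boundary cases at orders close to $k$. All other steps are either direct applications of earlier lemmas or routine Hensel lifting.
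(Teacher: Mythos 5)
This theorem is not proved in the paper at all: it is imported verbatim from the companion work \cite{DH14} (``Next, we give several results from \cite{DH14}\dots''), so there is no in-paper argument to compare against. Your proposal must therefore stand on its own, and as written it does not: you yourself flag ``the main obstacle is proving that the randomized sampling succeeds with constant probability,'' and that obstacle is precisely the entire content of the theorem. Everything before it (block diagonalization via Theorem \ref{thm:BlockDiagonal}, Hensel lifting for a single Type I block, the linear-congruence trick for Type II) is routine; the theorem is hard exactly because one must show how to combine the blocks.

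Worse, the specific combination scheme you propose --- fix $\MB_1$, sample $\Vy_2,\dots,\Vy_m$ uniformly, and solve $\Vy_1'\MB_1\Vy_1\equiv t'$ on the first block --- fails to have constant success probability in simple cases. Take $p$ odd, $\MQ=p\oplus 1$ and $t=p$: the equation $px_1^2+x_2^2\equiv p\pmod{p^k}$ forces $p\mid x_2$, so a uniformly sampled $x_2$ yields a residual $t'=p-x_2^2$ of $p$-order $0$ with probability $1-1/p$, and then $px_1^2\equiv t'$ has no solution. The success probability is $\Theta(1/p)$, not a constant, so the Las Vegas guarantee breaks. A correct argument has to first decide, from the $p$-orders of $t$ and of the blocks, which block can carry a unit coordinate (and hence the primitivity), and condition the sampling accordingly; your primitivity claim (``$\MB_1$ either returns a primitive component or the sampled tail is already primitive'') is likewise not justified, since a uniformly sampled tail is non-primitive with positive probability and $\MB_1$ may return $\Vy_1=0$. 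Finally, your accounting never produces the $nk^3$ term in the stated complexity, which signals that the general case in \cite{DH14} involves machinery (beyond per-block Hensel lifting) that the proposal does not reconstruct.
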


Next, we give necessary and sufficient conditions for a Type II
block to represent an integer $t$. A proof of this result can
also be found in \cite{DH14}.

\begin{lemma}\label{lem:RepresentTTypeII}
Let $\MQ=\begin{pmatrix}2a & b \\ b & 2c \end{pmatrix}$, $b$ odd 
be a type II block, and $t, k$ be positive integers. Then, $\MQ$ represents
$t$ primitively over $\ztkz$ if $\ordt(t)=1$.
\end{lemma}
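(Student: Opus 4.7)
The plan is to reduce the problem to lifting a primitive representation of the odd integer $t' = t/2$ by the binary form $f(x,y) = ax^2 + bxy + cy^2$, via a Hensel-style induction on the modulus. Since $\Vx'\MQ\Vx = 2f(x,y)$, the form $\MQ$ primitively represents $t$ modulo $2^k$ if and only if $f$ primitively represents $t'$ modulo $2^{k-1}$ (writing $t = 2t'$ with $t'$ odd, using $\ordt(t)=1$). The case $k = 1$ is trivial (any primitive vector works since $\MQ$ is even), so I assume $k \geq 2$.

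For the base case modulo $2$, I would exhibit a primitive $(x_1, y_1)$ with $f(x_1, y_1) \equiv t' \equiv 1 \pmod 2$ by a case analysis. On the three primitive residue classes $(1,0), (0,1), (1,1)$ in $(\ztz)^2$, the form $f$ takes the values $a$, $c$, and $a + b + c \equiv a + c + 1 \pmod 2$ (using $b$ odd). At least one of these is odd: if $a$ or $c$ is odd, one of the first two classes works; otherwise both are even and $a + c + 1$ is odd, so $(1,1)$ works.

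For the inductive lift from mod $2^j$ to mod $2^{j+1}$ with $j \geq 1$, suppose $(x_j, y_j)$ is primitive with $f(x_j, y_j) = t' + 2^j e_j$. I would set $(x_{j+1}, y_{j+1}) = (x_j + 2^j u,\; y_j + 2^j v)$ and expand:
\[
f(x_{j+1}, y_{j+1}) = f(x_j, y_j) + 2^j\bigl[u(2ax_j + by_j) + v(bx_j + 2cy_j)\bigr] + 2^{2j}(au^2 + buv + cv^2).
\]
Since $j \geq 1$, the quadratic remainder $2^{2j}(\cdots)$ vanishes modulo $2^{j+1}$. Using $b$ odd (so $2ax_j + by_j \equiv y_j$ and $bx_j + 2cy_j \equiv x_j$ modulo $2$), the requirement $f(x_{j+1}, y_{j+1}) \equiv t' \pmod{2^{j+1}}$ collapses to the linear congruence
\[
u\, y_j + v\, x_j \equiv -e_j \pmod 2,
\]
which is always solvable because $(x_j, y_j)$ is primitive (at least one of $x_j, y_j$ is a unit mod $2$). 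Moreover, since $2^j$ is even for $j \geq 1$, the perturbation preserves the parities of $x_j, y_j$, and hence primitivity. Iterating this lift for $j = 1, 2, \dots, k-2$ yields a primitive $(x, y)$ with $f(x,y) \equiv t' \pmod{2^{k-1}}$, and hence $\Vx'\MQ\Vx \equiv t \pmod{2^k}$.

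The only real subtlety is preserving primitivity throughout the lift: this works precisely because (i) the step size $2^j$ is even starting from $j = 1$, so parities are never disturbed, and (ii) the oddness of $b$ guarantees the linear congruence in $(u,v)$ is solvable regardless of which component of $(x_j, y_j)$ happens to be the odd one.
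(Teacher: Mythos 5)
Your proof is correct. The paper itself does not prove this lemma (it defers to the cited reference [DH14]), so there is no internal proof to compare against; your argument stands on its own. The reduction $\Vx'\MQ\Vx = 2f(x,y)$ with $f = ax^2+bxy+cy^2$, the observation that $f$ takes an odd value on one of $(1,0),(0,1),(1,1)$, and the Hensel lift whose linear part $uy_j + vx_j \equiv e_j \pmod 2$ is solvable by primitivity (with primitivity preserved because the perturbation $2^ju$ is even for $j\ge 1$) are all verified correctly, including the degenerate case $k=1$. This is the standard argument one would expect, executed cleanly.
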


\section{Symbol of a Quadratic Form}\label{sec:Symbol}

There are several equivalent ways of giving a description of 
the $p^*$-equivalence \cite{CS99, Kitaoka99, OMeara73, Cassels78}. 
In this work,
we go with a modified version of the Conway-Sloane description,
called the $p$-{\em symbol} of a quadratic form. Our modification
gets rid of the need to use the $p$-adic numbers.
Note that $p$-adic numbers are a 
staple in this area and we are not aware of any work which does not
use them \cite{Kitaoka99, OMeara73, Siegel35}.

By definition, two quadratic forms are $p^*$-equivalent if 
they are $p^k$-equivalent for all positive integers $k$. In an
algorithmic sense, this is problematic because 
there are infinitely many possibilities for $k$.
Recall the definition of $k_p$. It equals~1
if $p$ is an odd prime and~3, otherwise. The following theorem
shows that it is enough to test equivalence for just one value of 
$k$. 

\begin{theorem}\label{thm:PStarEq}
Let $\MQ^n$ be an integral quadratic form, $p$ 
be a prime and $k = \ordp(\det(\MQ))+k_p$. If $\MD^n$ is a block 
diagonal form which is equivalent
to $\MQ$ over $\zpkz$, then $\MD \eqp \MQ$.
\end{theorem}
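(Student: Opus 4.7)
The plan is to induct on $n$. The base case $n = 1$ is immediate from Lemma~\ref{lem:IntegerSymbolIsInvariant}: since $k > \ord_p(\MQ)$ and $k \geq k_p$, the $p^k$-symbol $(\ord_p(\MQ),\sgn_p(\MQ))$ determines $\sgn_p$ in the sense needed for all $K \geq k$ (for odd $p$, $\sgn_p$ is the Legendre symbol of the coprime part mod $p$, captured already mod $p^k$; for $p=2$, $\sgn_2$ is the coprime part mod $8$, captured because $k \geq 3$). Hence $\MD \eqx{p^K}\MQ$ for every $K$, and $\MD \eqp \MQ$ follows.

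For the inductive step $n \geq 2$, I split off the first block $\MD_1$ of $\MD$. Since $\MD_1$ is a direct summand of $\MD$, $\ord_p(\det\MD_1) \leq \ord_p(\det\MD) = \ord_p(\det\MQ) < k$. Fix $\MU$ with $\MU'\MQ\MU \equiv \MD \pmod{p^k}$ and let $\Vu_i$ denote its columns. The first column (resp.\ first two columns, in the Type~II case) give a primitive $p^k$-representation of $\MD_1$ by $\MQ$. By the block-diagonality of $\MD$, $\Vu_i'\MQ\Vu_j \equiv 0 \pmod{p^k}$ for $i$ inside and $j$ outside the $\MD_1$-block, so expressing $\Vu_i'\MQ$ via $\MU^{-1}$ shows that every entry of $\Vu_i'\MQ$ is divisible by $p^{\ord_p(\MD_1)}$ (using $\ord_p(\MD_1) < k$). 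Theorem~\ref{thm:Siegel13} (directly for Type~I, and via a straightforward generalisation using Lemma~\ref{lem:RepresentTTypeII} for Type~II) upgrades the primitive $p^k$-representation to primitive $p^K$-representations $\Vx_{i,K}$ for every $K \geq k$; by the Hensel-lifting in the proof of Theorem~\ref{thm:Siegel13} these lifts can be chosen as $\Vx_{i,K} = u_K \Vu_i$ for units $u_K \in (\zRz{p^K})^\times$ (or the analogous $2\times 2$ perturbation in the Type~II case), so the entry-wise divisibility of $\Vx_{i,K}'\MQ$ by $p^{\ord_p(\MD_1)}$ is inherited from that of $\Vu_i'\MQ$.

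Extending the $\Vx_{i,K}$ to a unimodular $\MW_K \in \sln(\zRz{p^K})$ via Lemma~\ref{lem:ExtendPrimitive}, the matrix $\MW_K'\MQ\MW_K$ has $\MD_1$ as its upper-left block and off-diagonal blocks entry-wise divisible by $p^{\ord_p(\MD_1)}$. Because the primitive part $\MD_1/p^{\ord_p(\MD_1)}$ is invertible over $\zRz{p^{K-\ord_p(\MD_1)}}$ (a unit for Type~I, and the odd-determinant matrix $\MTT$ for Type~II), the block-triangular transformation subtracting $\MD_1^{-1}$ times the off-diagonal rows clears them, giving $\MQ \eqx{p^K} \MD_1 \oplus \MR_K$. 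Combined with $\MQ \eqx{p^k} \MD = \MD_1 \oplus (\MD_2 \oplus \cdots \oplus \MD_m)$ and Witt cancellation of the common $\MD_1$ summand over $\zRz{p^k}$, this gives $\MR_K \eqx{p^k} \MD_2 \oplus \cdots \oplus \MD_m$; the inductive hypothesis (after applying Theorem~\ref{thm:BlockDiagonal} to $\MR_K$ if needed) then yields $\MR_K \eqp \MD_2 \oplus \cdots \oplus \MD_m$, so $\MQ \eqx{p^K} \MD_1 \oplus \MR_K \eqp \MD$. Since $K$ was arbitrary, $\MQ \eqp \MD$. The main obstacle is the Witt cancellation step over the local ring $\zRz{p^k}$, which is delicate for Type~II summands at $p = 2$; the offset $k_p = 3$ is precisely what provides the slack needed for this cancellation to succeed in every case.
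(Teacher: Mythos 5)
Your overall strategy (induct on the blocks, split off $\MD_1$, recurse on the complement) is workable in outline, but two steps you wave through are genuine gaps, and each is at least as hard as the theorem itself. First, the splitting-off step for a Type~II block: Theorem~\ref{thm:Siegel13} only upgrades a primitive $p^k$-representation of a single \emph{integer} $t$ to $p^K$-representations for all $K$, via the scaling $\Vx \mapsto u_K\Vx$. To split off a Type~II block you must primitively represent the \emph{binary form} $2^\ell\MTT$ for every $K$, i.e.\ produce two vectors with three prescribed inner products modulo $2^K$; a single unit scaling cannot match all three simultaneously, and Lemma~\ref{lem:RepresentTTypeII} (which is about a Type~II form representing an integer, the opposite direction) does not supply this. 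The ``straightforward generalisation'' you appeal to is not in the paper and is not straightforward. Second, and more seriously, Witt cancellation of the $\MD_1$ summand over $\zRz{p^k}$ is nowhere proved in the paper and is not available in the form you need: cancellation modulo a \emph{fixed finite} power $p^k$, with Type~II summands at $p=2$, is exactly the kind of delicate statement this theorem exists to circumvent. Asserting that ``$k_p=3$ provides precisely the slack needed'' is not an argument, and the claim is circular in spirit, since showing that data at level $k=\ordt(\det\MQ)+3$ already pins down the $2^*$-class is essentially the content of the theorem being proved.

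For contrast, the paper avoids both issues by not inducting at all: it writes $\MU'\MQ\MU \equiv \MD + p^k\tMQ \pmod{p^\ell}$, observes that because $k = \ordp(\det(\MD))+k_p$ the perturbation $p^k\tMQ$ is small relative to every diagonal entry, re-block-diagonalizes $\MD+p^k\tMQ$ over $\bbZ/p^\ell\bbZ$ while preserving the block pattern of $\MD$, and then only needs the single-block statement $\MB + p^k\tilde{\MB} \eqp \MB$ (Lemma~\ref{lem:IntegerSymbolIsInvariant} for Type~I, Jones's Lemma~6 for Type~II). To salvage your induction you would have to supply both a cancellation theorem over $\zRz{p^k}$ and a binary-form analogue of Theorem~\ref{thm:Siegel13}; the perturbation route is shorter and needs neither.
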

\begin{proof}
Let $\MD^n$ be the block diagonal form equivalent to $\MQ$ over $\zpkz$;
$k=\ordp(\det(\MQ))+k_p$. Then, we show that $\MD$ is $p^\ell$-equivalent
to $\MQ$ for all $\ell > k$.

Let $\MU \in \gln(\zpkz)$ be such that $\MD \equiv \MU'\MQ\MU \bmod{p^k}$.
Then, every entry of $\MU'\MQ\MU-\MD$ must be divisible by $p^k$. Let 
$p^k\tMQ\equiv \MU'\MQ\MU-\MD \bmod{p^\ell}$. Consider the quadratic form
$\MD+p^k\tMQ$. As $k=\ordp(\det(\MQ))+k_p=\ordp(\det(\MD))+k_p$, it follows
that all off-diagonal entries have higher $p$-order than the diagonal 
entries. It is hence possible to diagonalize $\MD+p^k\tMQ$ over 
$\bbZ/p^\ell\bbZ$ to a quadratic form $\MD+p^k\tMD$, where $\tMD$ is also
a block diagonal form with matching Type i.e., if the first block of $\MD$
is Type II then so is the first block of $\tMD$ (see proof of Theorem
\ref{thm:BlockDiagonal}).

Let $\MB, \tilde{\MB}$ be single Type I or Type II blocks. If 
$k=\ordp(d)+k_p$, then $\MB + p^k\tilde{\MB} \eqp \MB$ (see Lemma
\ref{lem:IntegerSymbolIsInvariant} for Type I blocks and Lemma~6, 
\cite{Jones42} for Type II blocks).
Thus, we conclude that
$\MD+p^k\tMD \overset{p^\ell}{\sim} \MD$.
\end{proof}

The rest of this section follows from Conway-Sloane \cite{CS99}
and Theorem \ref{thm:PStarEq}.

Let $\MQ^n$ be an integral quadratic form. The $p$-symbol of $\MQ$
is defined as follows.

\paragraph{$(-1)$-symbol.}
For $p=-1$ the 
$(-1)$-symbol is the same as the $(-1)$-signature
of $\MQ$.

\subsection{$p$-symbol, $p$ odd prime}\label{sec:PSym}

Let $k=\ordp(\det(\MQ))+1$ and $\MD$ be the diagonal quadratic form 
which is $p^k$-equivalent to $\MQ$ (see Theorem \ref{thm:BlockDiagonal}).
Then, $\MD$ can be written as follows.
\begin{align}\label{def:jordanp}
\MD=\MD_0^{n_0} \oplus p\MD_1^{n_1} \cdots \oplus p^i \MD_i^{n_i} \oplus \cdots 
\qquad i \leq \ordp(\det(\MQ))\;,
\end{align}
where $\MD_0,\cdots,\MD_{k-1}$ are diagonal quadratic forms, 
$\sum_i n_i = n$ and $p$ does not divide 
$\det(\MD_0)\cdots\det(\MD_{k-1})$. Let $\scalep(\MQ)$ is the set of
$p$-orders $i$ with non-zero $n_i$.
Then, the $p$-symbol of $\MQ$ is defined as the
set of {\em scales} $i$ occurring in Equation \ref{def:jordanp} with non-zero
$n_i$, {\em dimensions} $n_{i}=\dim(\MD_i)$ and {\em signs} 
$\epsilon_i=\legendre{\det(\MD_i)}{p}$. 
\begin{equation}\label{eq:PSymbol}
\SYM_p(\MQ)= \left\{(p, i,\legendre{\det(\MD_i)}{p},n_i) \mid 
i \in \scalep(\MQ)\right\}
\end{equation}

The following fundamental result follows from Theorem~9, page~379 
\cite{CS99} and Theorem \ref{thm:PStarEq}. 

\begin{theorem}\label{thm:PSymbol}
For $p \in \{-1\} \cup \bbP$, two quadratic forms 
are $p^*$-equivalent iff they have the same $p$-symbol.
\end{theorem}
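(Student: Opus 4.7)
The theorem packages two reductions. Theorem~\ref{thm:PStarEq} collapses $p^*$-equivalence to $p^k$-equivalence at the single level $k=\ordp(\det(\MQ))+k_p$, and Theorem~\ref{thm:BlockDiagonal} puts each form into diagonal Jordan shape as in Equation~(\ref{def:jordanp}). After these two reductions the classification by $(\scalep, n_i, \epsilon_i)$ is Conway--Sloane Theorem~9, which I would cite rather than reprove; the sketch below only outlines how the pieces fit.

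\textbf{The case $p=-1$.} By convention $(-1)$-equivalence is $\bbR$-equivalence, and the $(-1)$-symbol is the signature $\sig(\MQ)=2a-n$ from Theorem~\ref{thm:diagonal}. Two real symmetric matrices are $\gln(\bbR)$-congruent iff they have the same signature (Sylvester's law of inertia), which is basis-free because $a$ equals the maximum dimension of a subspace of $\bbR^n$ on which $\MQ$ is positive definite.

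\textbf{The case $p$ odd.} For the forward direction, I would group the diagonal entries of $\MQ$ by $p$-order as in Equation~(\ref{def:jordanp}). The set $\scalep(\MQ)$ together with the dimensions $n_i$ is the datum of $p$-adic elementary-divisor invariants of the symmetric form over $\bbZ_p$, and is invariant under $\gln(\zpkz)$-congruence once $k$ exceeds the largest scale. For the signs $\epsilon_i=\legendre{\det(\MD_i)}{p}$, I would scale-filter the transition matrix $\MU \in \gln(\zpkz)$ into an upper block-triangular form with diagonal blocks $\MU_i \in \gl_{n_i}(\zpz^\times)$, using that a cross-scale entry between scales $i<j$ contributes to $\MU'\MD\MU \equiv \MD'$ only at $p$-order strictly greater than $i$ on the scale-$i$ component. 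The induced block identity $\MU_i'\MD_i\MU_i \equiv \MD_i' \bmod p$ then gives $\det(\MU_i)^2\det(\MD_i) \equiv \det(\MD_i') \bmod p$, hence $\epsilon_i = \epsilon_i'$.

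\textbf{Converse and main obstacle.} If the $p$-symbols match, I would build the equivalence scale by scale: it suffices to show that any two unit-diagonal forms over $\zpz^\times$ of the same dimension $m$ with the same Legendre-symbol determinant are equivalent. Iterating Lemma~\ref{lem:QR} (``every non-residue is a sum of two residues and vice versa'') reduces such a form to $\langle 1,\ldots,1,\epsilon\rangle$ with $\epsilon \in \{1,\sigma_p\}$ determined by $\legendre{\prod a_i}{p}$; lifting from $\zpz$ to $\zpkz$ is Hensel (squaring is unramified on $(\zpkz)^\times$ for odd $p$). The one genuinely delicate point is the block-by-block invariance of $\epsilon_i$ in the forward direction, since a general $\MU \in \gln(\zpkz)$ mixes all scales and only the \emph{product} of signs is manifestly preserved; the scale-filtration/upper-triangularization above is the technical ingredient supplied by Conway--Sloane Theorem~9 that isolates each $\epsilon_i$ individually, and combining this with Theorem~\ref{thm:PStarEq} closes the argument.
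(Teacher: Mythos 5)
Your proposal matches the paper's treatment: the paper offers no independent proof of this theorem, stating only that it "follows from Theorem~9, page~379 of Conway--Sloane and Theorem~\ref{thm:PStarEq}," which is exactly the reduction you describe (collapse $p^*$-equivalence to a single level $k$ via Theorem~\ref{thm:PStarEq}, pass to the Jordan/diagonal shape via Theorem~\ref{thm:BlockDiagonal}, and cite Conway--Sloane for the classification by scales, dimensions, and signs). Your additional sketches of Sylvester's law for $p=-1$, the scale-filtration argument for invariance of the $\epsilon_i$, and the Lemma~\ref{lem:QR}/Hensel argument for the converse are sound elaborations of the cited result rather than a different route.
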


\subsection{$2$-symbol}\label{sec:2Sym}

Let $k=\ordt(\det(\MQ))+3$ and $\MD$ be the block diagonal form 
which is $2^k$-equivalent to $\MQ$ (Theorem \ref{thm:BlockDiagonal}).
Then, $\MD$ can be written as follows.
\begin{align}\label{def:jordant}
\MD=\MD_0^{n_0} \oplus 2\MD_1^{n_1} \cdots \oplus 2^i \MD_i^{n_i} 
\oplus \cdots \qquad i \leq \ordt(\det(\MQ))\;,
\end{align}
where $\det(\MD_0), \cdots, \det(\MD_i), \cdots$ are odd, $\sum_i n_i=n$
and each $\MD_i$ is in block diagonal form according to Definition 
\ref{def:BlockDiagonal}.
The $2$-symbol of $2^i\MD_i$ are the following quantities.
\begin{equation}\label{def:2invariants}
\left(\begin{array}{lll}
i   & \text{{\em scale} of $\MD_i$} & \\
n_i= \dim(\MD_i) & \text{{\em dimension} of $\MD_i$} &\\
\epsilon_i = \legendre{\det(\MD_i)}{2} & \text{{\em sign} of $\MD_i$} & \\
\type_i=\text{I  or  II} & \text{{\em type} of $\MD_i$} & \text{I, iff 
there is an odd entry on} \\ && \text{~the main diagonal of the }\\
&&\text{~matrix $\MD_i$}\\
\odty_i\in\{0,\cdots,7\} & \text{{\em oddity} of $\MD_i$} & 
\text{it $\type_i=$I, then it 
is equal to} \\ && \text{~the trace of $\MD_i$ read modulo $8$,} \\ 
&& \text{~and is $0$ otherwise.}
\end{array}\right)
\end{equation}
Let the set of scales $i$, with non-zero $n_i$, be denoted 
$\scalet(\MQ)$.
Then, the $2$-symbol of $\MQ$ is written as follows.
\begin{equation}\label{eq:2sym}
\SYM_2(\MQ)= \left\{(2,i,\epsilon_i, n_i, \type_i, \odty_i) 
\mid i \in \scalet(\MQ)\right\}
\end{equation}

In contrast to the $p \in \{-1\} \cup \bbP$ case, 
two $2^*$-equivalent quadratic forms may produce two different 
$2$-symbols. These symbols are then said to be $2$-equivalent.
There is a transformation which maps all equivalent $2$-symbols
to a unique description (see Conway-Sloane \cite{CS99}, page~381). 
We repeat this transformation for the sake of completeness.

\paragraph{Compartments and Trains.} Let $\SymT$ be a $2$-symbol.
Let us define an {\em interval} as a consecutive sequence of 
forms $2^i\MD_i$, even including those with dimension $0$. 
The form with dimension $0$ is treated as a form with
Type II and Legendre symbol $+1$. A 
{\em compartment} is then a maximal interval in which all forms
are of Type I. A {\em train} is a maximal interval with the 
property that for each pair of adjacent forms, at least one is
Type I. 
There are two ways
in which the symbol maybe altered without changing the 
equivalence class.
\begin{enumerate}[(i)]
\item {\bf Oddity fusion.}\, The oddities inside a compartment 
can be changed in such a way that the total sum over any 
compartment remains the same i.e., the sum of oddities in a 
compartment is an invariant. For example, $3\oplus 5$ and $1\oplus 7$
are $2^*$-equivalent.
\item {\bf Sign walking.}\, A $2$-symbol remains in the same equivalence 
class if the signs of any two terms in the same train are simultaneously 
changed, provided certain oddities are changed by~4. Let us suppose that
we want to flip the signs of terms at $2$-scale $i$ and $2$-scale $j$, 
$i<j$. We imagine walking the train from $i$ to $j$, taking steps between
adjacent forms of scales $r$ and $r+1$. Because we are in the same train
during the entire walk, at least one of $\MD_r$ and $\MD_{r+1}$ is of
Type I. The rule is that the total oddity of the compartment must be 
changed by $4$ modulo $8$, each time in the walk when either $\MD_r$
or $\MD_{r+1}$ is in that compartment.

An example from Conway-Sloane is as follows. Here, instead of considering
the $2$-symbol as a tuple $(2, 2^i,\epsilon_i,n_i,\odty_i)$, it is easier to 
consider it as a list with the corresponding term
$(2^i)^{\epsilon_i n_i}_{\odty_i}$. Let us suppose that we
have the following symbol.
\[
1^{+2}_{0}[2^{-2}4^{+3}]_38^{+0}[16^{+1}]_132^{+2}_{0}
\]
The compartments have been denoted by square brackets $[]$ and the symbol 
at scale $3$ has dimension~0. Suppose we want to flip the signs at
scale $1$ and $4$. We have to take the steps $1\to2\to3\to4$. The steps
$1\to 2$ and $2\to3$ uses the first compartment while the step $3\to4$ uses
the second. The oddity of the first compartment remains unchanged (used twice)
and the oddity of the second changes by $4$ modulo $8$. The final equivalent
form is as follows.
\[
1^{+2}_0[2^{+2}4^{+3}]_38^{+0}[16^{-1}]_532^{+2}_0
\]

Using sign walking, one can show that $1\oplus 2^2$ and $5\oplus 2^2\cdot5$ 
are $2^*$-equivalent.
\end{enumerate}

\paragraph{$2$-canonical symbol.}
Using these rules, a $2$-canonical symbol can be computed. This is done
as follows. Compute the $2$-symbol and use oddity fusion and sign walking
to make sure that there is at most one minus sign per train and this is
on the earliest nonzero dimensional form in the train. Using this convention
and only mentioning the total oddities of the compartments, the resulting
description is unique and can be taken as a canonical symbol for the form
(see page~382, \cite{CS99}).
Thus, the $2$-canonical symbol for
$1^{-2}_0[2^{+2}4^{+3}]8^{+0}[16^{+1}]_132^{+2}_0$
is $1^{-2}[2^24^3]_7[16]_132^2$.

\section{Canonicalization: $p$ odd prime}

In this section, we describe the function $\canp$ which
maps an integral quadratic form $\MQ$ to its unique 
canonical form $\canp(\MQ)$. Then, we prove the following
theorem.

\begin{theorem}\label{thm:ALG:CanP}
Let $\MQ^n$ be an integral quadratic form, $p$ be an odd prime and
$k > \ordp(\det(\MQ))$. Then, there is an algorithm (Las Vegas
with constant probability of success)
that given $(\MQ^n, p, k)$ performs
$O(n^{1+\omega}\log k+n\log p+\log^3 p)$ ring operations over $\zpkz$
and outputs $\MU \in \gln(\zpkz)$
such that $\MU'\MQ\MU \equiv \canp(\MQ) \bmod{p^k}$.
\end{theorem}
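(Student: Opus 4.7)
The plan is to first reduce $\MQ$ to a diagonal form whose entries are grouped by $p$-order, and then canonicalize each Jordan piece independently. First I would invoke Theorem~\ref{thm:BlockDiagonal} on $(\MQ, p, k)$ to obtain $\MU_0 \in \sln(\zpkz)$ together with a diagonal matrix $\MD \equiv \MU_0' \MQ \MU_0 \pmod{p^k}$, in $O(n^{1+\omega} \log k)$ ring operations (no Type~II blocks appear since $p$ is odd). Next, compute $\ord_p(d_j)$ for each diagonal entry $d_j$ of $\MD$ and apply a permutation $\MP \in \sln(\zpkz)$ to group the entries by their $p$-order, producing a direct sum $\MD' = \oplus_i p^i \MD_i$ in which each $\MD_i = \diag(u_{i,1}, \ldots, u_{i,n_i})$ has unit entries in $\zpkz$. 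Because $k > \ord_p(\det(\MQ))$, every scale that appears in $\canp(\MQ)$ is faithfully visible in $\MD'$.

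Next, canonicalize each Jordan piece $p^i \MD_i$ separately. Precompute $\sigma_p$ using Definition~\ref{def:QNR}, which costs $O(\log^3 p)$ ring operations under the stated GRH bound. For each unit $u_{i,j}$, use Lemma~\ref{lem:Square} to decide whether it is a quadratic residue modulo $p^{k-i}$; then, by a square-root extraction modulo $p$ (using the already-known $\sigma_p$) followed by Hensel lifting, compute $v_{i,j}$ with $v_{i,j}^2 u_{i,j} \equiv 1$ or $v_{i,j}^2 u_{i,j} \equiv \sigma_p \pmod{p^{k-i}}$. The diagonal scaling $x_j \mapsto v_{i,j} x_j$ turns the $j$-th entry into $p^i$ or $p^i\sigma_p$, at cost $O(\log p + \log k)$ per entry. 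To eliminate surplus $\sigma_p$'s within a given scale, pair them two at a time: by Lemma~\ref{lem:QR} the non-residue $\sigma_p^{-1}$ is a sum of two squares modulo $p$, and lifting via Hensel gives $a, b$ with $a^2 + b^2 \equiv \sigma_p^{-1} \pmod{p^{k-i}}$. The $2 \times 2$ local transformation $\MV = \begin{pmatrix} a & -b \\ b & a \end{pmatrix}$ then satisfies $\MV'\MV = (a^2+b^2)\MI_2$, so $\MV'(p^i \sigma_p \MI_2)\MV \equiv p^i\MI_2 \pmod{p^k}$, collapsing two copies of $p^i\sigma_p$ into two copies of $p^i$. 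A final permutation places the one surviving $\sigma_p$ (if the product of all $u_{i,j}$ in a given scale is a non-residue) at the last coordinate of that block, yielding exactly the per-block canonical form of Section~\ref{sec:CanBlock}.

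Accumulate all the local transformations computed after the diagonalization into a single matrix $\MU_1 \in \gln(\zpkz)$ and return $\MU = \MU_0 \MU_1$, which is one additional $O(n^\omega)$ matrix multiplication. The total cost is dominated by $O(n^{1+\omega}\log k)$ from Theorem~\ref{thm:BlockDiagonal} and the final composition, $O(n(\log p + \log k))$ from the $n$ per-entry square roots and pair-ups, and $O(\log^3 p)$ from locating $\sigma_p$, matching the bound in the theorem statement. The Las Vegas nature of the algorithm is inherited from Tonelli--Shanks (or, if derandomized via the GRH-based $\sigma_p$, only from any randomized sub-procedure actually used). The main obstacle is that every reduction step must really hold modulo $p^{k-i}$ rather than merely modulo $p$: each scaling unit $v_{i,j}$ and each pair $(a,b)$ must be obtained by a Hensel lift starting from Lemma~\ref{lem:QR} and a square-root routine, and the precision $p^{k-i}$ (which decreases with the scale $i$) has to be tracked carefully so that the accumulated transformation genuinely lies in $\gln(\zpkz)$ and produces $\canp(\MQ)$ exactly, as mandated by Lemma~\ref{lem:IntegerSymbolIsInvariant}.
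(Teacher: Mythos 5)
Your proposal is correct and shares the paper's skeleton: diagonalize via Theorem~\ref{thm:BlockDiagonal}, permute to group entries by $p$-order, canonicalize each unimodular piece $\MD_i$ locally, and multiply the local transformations. Where you genuinely diverge is in the two-dimensional canonicalization step. The paper's Lemma~\ref{lem:CanP2Dim} treats a general pair $\tau_1\oplus\tau_2$ of units: when both are non-residues it uses Lemma~\ref{lem:QR} to write $1$ as a sum of two non-residues, finds a primitive representation of $1$ via Theorem~\ref{thm:PolyRep}, extends to a basis by Lemma~\ref{lem:ExtendPrimitive}, and re-diagonalizes. You instead first normalize every diagonal unit individually to $1$ or $\sigma_p$ (a per-entry square-root plus Hensel lift to precision $p^{k-i}$, which is exactly the Type~I case of Theorem~\ref{thm:PolyRep}), and then collapse pairs $\sigma_p\oplus\sigma_p\to 1\oplus 1$ with the explicit norm-form matrix $\bigl(\begin{smallmatrix}a&-b\\ b&a\end{smallmatrix}\bigr)$ where $a^2+b^2\equiv\sigma_p^{-1}\pmod{p^{k-i}}$ (again Lemma~\ref{lem:QR} plus a Hensel lift); your identity $\MV'\MV=(a^2+b^2)\MI_2$ checks out and $\det\MV$ is a unit, so this is a valid element of $\gl_2(\zpkz)$. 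Your route buys a fully explicit transformation with no need to re-run the block-diagonalization inside the $2\times2$ step, at the price of an extra randomized search (for a residue $r$ with $\sigma_p^{-1}-r$ also a residue); the paper's route is more uniform in that a single lemma handles all sign patterns of $(\tau_1,\tau_2)$. Two small accounting remarks: you correctly track the precision $p^{k-i}$, and $k>\ordp(\det\MQ)$ guarantees $k-i\geq1$ for every occurring scale; and for the per-entry square roots you should charge them to the Las Vegas routine of Theorem~\ref{thm:PolyRep} (or an equivalent randomized square-root) rather than to deterministic Tonelli--Shanks with $\sigma_p$, whose worst case is $O(\log^2 p)$ per entry and would overshoot the stated $O(n\log p)$ term.
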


The $O(\log^3 p)$ ring operations are needed to compute $\sigma_p$, 
assuming GRH. Note that one can fix any quadratic non-residue and
define the canonical form with that instead of $\sigma_p$. This will
obviate the need to use GRH and the number of ring operations
will be $O(n^{1+\omega}\log k+n\log p)$.

The canonical form for an odd prime $p$ is defined as follows.

\begin{definition}\label{def:canp}
A quadratic form is $p$-canonical for an odd prime $p$, if it is of
the form $\oplus_i p^i \MD_i^{n_i}$, where $\MD_i^{n_i}$
is a diagonal quadratic form equal to $\MI^{n_i}$ or 
$\MI^{n_i-1}\oplus \sigma_p$, and the $p$-scales $i$ of the
diagonal entries of the quadratic form are non-decreasing.
\end{definition}

The uniqueness of the $p$-canonical form follows directly from the
definition of $p$-symbol and Theorem \ref{thm:PSymbol}.

We now describe the canonicalization algorithm. Let $\MQ^n$
be an integral quadratic form, $p$ be a prime and $k$ be a positive
integer. For $n=1$, the canonicalization algorithm follows from
Theorem \ref{thm:PolyRep}.

Suppose $n=2$ and
the input $\MQ^2$ is of the form $\tau_1 \oplus \tau_2$, where 
$\tau_1, \tau_2$ are 
units of $\zpz$. The $p$-canonical form of $\MQ$ is 
$1\oplus \{1,\sigma_p\}$.
The following lemma shows how to canonicalize in this case.

\begin{lemma}\label{lem:CanP2Dim}
Let $\tau_1, \tau_2 \in (\zpz)^\times$, $p$ odd prime and $k$ be a positive
integer. Then, there is a $\MU \in \gl_2(\zpkz)$ which transforms
$\tau_1\oplus \tau_2$ to $1\oplus \{1,\sigma_p\}$ modulo $p^k$. The 
transformation $\MU$ can be found by a Las Vegas algorithm which performs
$O(\log k + \log p)$ ring operations over $\zpkz$ and fails with constant
probability.
\end{lemma}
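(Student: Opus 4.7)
The plan is to represent the value $1$ primitively by the form $\tau_1\oplus\tau_2$ modulo $p^k$, to use that representation to diagonalize with a leading $1$, and finally to rescale the second coordinate into $\{1,\sigma_p\}$ according to $\legendre{\tau_1\tau_2}{p}$.

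First I would find a primitive $\Vx=(x_1,x_2)\in(\zpkz)^2$ with $\tau_1 x_1^2+\tau_2 x_2^2\equiv 1\pmod{p^k}$. Working mod $p$, sample $x_1\in\zpz$ uniformly, set $c\equiv \tau_2^{-1}(1-\tau_1 x_1^2)\bmod p$, and test $\legendre{c}{p}$ by fast exponentiation in $O(\log p)$ ring operations. A standard character-sum calculation shows that the equation $\tau_1 x_1^2+\tau_2 x_2^2\equiv 1\pmod p$ has about $p$ solutions, so a constant fraction of $x_1$ make $c$ a (nonzero) quadratic residue; hence a constant probability of success per attempt. On success compute $x_2\bmod p$ by Tonelli--Shanks using the quadratic non-residue $\sigma_p$ ($O(\log p)$ operations), and Hensel-lift the pair $(x_1,x_2)$ to $\pmod{p^k}$ in $O(\log k)$ steps. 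The Hensel hypothesis is satisfied because $p$ is odd and, as $1\not\equiv 0\pmod p$, at least one of $x_1,x_2$ is a $p$-unit; in particular the lifted $\Vx$ is automatically primitive.

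Next, by Lemma~\ref{lem:ExtendPrimitive} extend $\Vx$ to $\MU_1\in\sln(\zpkz)$ in $O(1)$ operations. Then $\MU_1'(\tau_1\oplus\tau_2)\MU_1\equiv\begin{pmatrix}1 & b\\ b & d\end{pmatrix}\pmod{p^k}$, and the upper-triangular kill $\MU_2=\begin{pmatrix}1 & -b\\ 0 & 1\end{pmatrix}$ produces $1\oplus(d-b^2)$. Because the determinant is preserved under these transformations, $d-b^2\equiv\tau_1\tau_2\pmod{p^k}$. Finally I compute $\legendre{\tau_1\tau_2}{p}$: if it equals $+1$, extract $y$ with $y^2\equiv\tau_1\tau_2\pmod{p^k}$ and apply $\MU_3=\diag(1,y^{-1})$ to reach $1\oplus 1$; if it equals $-1$, then $\sigma_p^{-1}\tau_1\tau_2$ is a product of two non-residues and hence a residue, extract $y$ with $y^2\equiv\sigma_p^{-1}\tau_1\tau_2\pmod{p^k}$ and apply $\MU_3=\diag(1,y^{-1})$ to reach $1\oplus\sigma_p$. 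The required transformation is $\MU=\MU_1\MU_2\MU_3$.

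The main obstacle is Step~1: ensuring both that the mod-$p$ sampling converges in $O(1)$ expected tries and that the Hensel lift yields a primitive $p^k$-representation. Both follow from $p$ odd and from $1$ being a $p$-unit, which forces at least one coordinate of any representation of $1$ to be a $p$-unit. All other steps are $O(1)$ except the two square roots modulo $p^k$, each of which costs $O(\log p+\log k)$. Adding everything up, the algorithm performs $O(\log k+\log p)$ ring operations and, since each random step is verifiable, is Las Vegas with constant failure probability, as claimed.
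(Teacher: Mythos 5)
Your proof is correct, but it takes a more uniform route than the paper. The paper splits into three cases: if $\tau_1$ (or, after a swap, $\tau_2$) is a quadratic residue it simply rescales each diagonal entry separately via Theorem~\ref{thm:PolyRep}; only when both $\tau_1,\tau_2$ are non-residues does it represent $1$ by the binary form, and there the existence of a primitive representation is derived from Lemma~\ref{lem:QR} (every residue, in particular $1$, is a sum of two non-residues). You instead always represent $1$ primitively, proving existence by counting points on the conic $\tau_1x_1^2+\tau_2x_2^2=1$ over $\zpz$ and giving the sampling/Tonelli--Shanks/Hensel pipeline explicitly rather than invoking Theorem~\ref{thm:PolyRep} as a black box. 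Your determinant argument pinning the residual entry down to $\tau_1\tau_2 \bmod{p^k}$ (since $\MU_1\in\sln(\zpkz)$ and $\det\MU_2=1$) is also sharper than the paper's remark that the transformations preserve the symbol, hence the leftover entry is a unit. What your version buys is the elimination of the case analysis and of Lemma~\ref{lem:QR}; what the paper's version buys is that two of the three cases need only two independent one-dimensional rescalings and no primitive extension at all. One small caveat: classical Tonelli--Shanks can cost $\Theta(\log^2 p)$ multiplications when $\ordt(p-1)$ is large, so to honestly meet the $O(\log k+\log p)$ bound you should either cite the Type~I case of Theorem~\ref{thm:PolyRep} or name a square-root routine (e.g.\ Cipolla's) with an $O(\log p)$ operation count; this is a bookkeeping issue, not a gap in the argument.
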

\begin{proof}
Consider the situation when $\tau_1$ is a quadratic residue 
modulo $p$. Then, we use Theorem \ref{thm:PolyRep} to find a primitive
$x$ such that $x^2\tau_1\equiv 1 \bmod{p^k}$. The number $\tau_2$ is either
a non-residue or a residue modulo $p$. In either case, we find a $y$ using
Theorem \ref{thm:PolyRep} again such that 
$y^2\tau_2 \bmod{p^k}\in \{1,\sigma_p\}$. 
\[
\begin{pmatrix}x & 0 \\ 0 & y\end{pmatrix}
\begin{pmatrix}\tau_1 & 0 \\ 0 & \tau_2\end{pmatrix}
\begin{pmatrix}x & 0 \\ 0 & y\end{pmatrix} \equiv
\begin{pmatrix} 1 & 0 \\ 0 & \{1,\sigma_p\}\end{pmatrix} \bmod{p^k}
\]

If $\tau_2$ is a quadratic residue then we make the following 
transformation and reduce to the previous case.
\[
\begin{pmatrix}0 & 1 \\ 1 & 0\end{pmatrix}
\begin{pmatrix}\tau_1 & 0 \\ 0 & \tau_2\end{pmatrix}
\begin{pmatrix}0 & 1 \\ 1 & 0\end{pmatrix} =
\begin{pmatrix}\tau_2 & 0 \\ 0 & \tau_1\end{pmatrix} \bmod{p^k}
\]

Otherwise, both $\tau_1$ and $\tau_2$ are a quadratic non-residues. 
From Lemma \ref{lem:QR},~1 can be written as a sum of two non-residues. 
Let $(\taum_1, \taum_2)$ be one such pair.
Then, we can write~1 as $(\taum_1)+(p^k+1-\taum_1)$ over $\zpkz$,
where both $\taum_1$ and $p^k+1-\taum_1$ are quadratic non-residues
as $\legendre{p^k+1-\taum_1}{p}=\legendre{1-\taum_1}{p}=\legendre{\taum_2}{p}$.
Thus,~1 has a primitive $p^k$-representation in $\tau_1\oplus\tau_2$ over $\zpkz$.

We now use $\tau_1$ to represent $\taum_1$ primitively and $\tau_2$ 
to represent $p^k+1-\tau_1$ primitively over $\zpkz$ (use Theorem 
\ref{thm:PolyRep}).
Let $(x,y) \in (\zpkz)^2$ be the primitive representation. Then, we
extend it to a matrix 
$\begin{pmatrix}x & a \\ y & b\end{pmatrix} \in \gl_2(\zpkz)$, using Lemma
\ref{lem:ExtendPrimitive}. Applying this transformation on 
$\tau_1\oplus \tau_2$ yields the following matrix.
\[
\begin{pmatrix}1 & a\tau_1 x + b \tau_2 y \\
a\tau_1 x + b \tau_2 y & a^2\tau_1+b^2\tau_2\end{pmatrix} \bmod{p^k}
\]
This matrix can be diagonalized using Theorem 
\ref{thm:BlockDiagonal}, keeping the $1$ unchanged; to
a matrix of the following form.
\[
\begin{pmatrix}1 & 0 \\
0 & a\end{pmatrix} \bmod{p^k}
\]
But all these transformation are from $\gl_2(\zpkz)$ and hence 
do not change the symbol of the matrix $\tau_1\oplus \tau_2$. Thus,
$a$ must be a unit of $\zpz$. One can now use Theorem \ref{thm:PolyRep}
to find a $z$ such that $z^2a \bmod{p^k} \in \{1,\sigma_p\}$.
The $\MU$ in this case, is the product of all transformations
in $\gl_2(\zpkz)$ we have used so far.
\end{proof}

We are now ready to prove Theorem \ref{thm:ALG:CanP}.

\begin{proof}(Theorem \ref{thm:ALG:CanP})
The algorithm makes a sequence of transformations, each from 
$\gln(\zpkz)$. 
\begin{enumerate}[(i.)]
\item Use Theorem \ref{thm:BlockDiagonal} to find 
$\MU_0 \in \gln(\zpkz)$ such that $\MU_0'\MQ\MU_0 \bmod{p^k}$
is a diagonal matrix. Use transformations 
$\MV_1, \cdots, \MV_n \in \sln(\zpkz)$ to transform the diagonal
matrix to the form $d_1\oplus \cdots \oplus d_n$,
such that $\ordp(d_1)\leq \cdots \leq \ordp(d_n)$. Note that
each $\MV_i$ exchanges two diagonal entries.
The total number of ring operations for this step is 
$O(n^{1+\omega}\log k)$.
\item The matrix is now of the form 
$p^{i_1}\MD_1^{n_1} \oplus \cdots \oplus p^{i_m}\MD_m^{n_m}$,
where $\MD_1,\cdots,\MD_m$ are diagonal matrices with unit
determinants.
We next use transformations to transform $\MD_i$ to $\canp(\MD_i)$.
This is done as follows.
If $\MD_i$ is of dimension~1 then we use Theorem \ref{thm:PolyRep}
to canonicalize it. 
Otherwise, $\MD_i^{n_i>1}$. The matrix is of the form
$(\tau_1, \cdots, \tau_{n_i})$, where $\tau_1, \cdots, \tau_{n_i}$
are units of $\zpz$. We apply a transformation on 
$\tau_1\oplus \tau_2$ to turn it into $1\oplus \tau$ over $\zpkz$,
where $\tau \in \{1,\sigma_p\}$ (Lemma \ref{lem:CanP2Dim}).
Continuing in a similar way, we end up with the
transformation $\MW_i \in \gl_{n_i}(\zpkz)$, for each $i$ such that 
$\MW_i'\MD_i\MW_i\equiv \MI^{n_i-1}\oplus\{1,\sigma_p\} \bmod{p^k}$.
Let $\MU_i \in \gln(\zpkz)$ be the corresponding local transformation
(see Equation \ref{def:LocalTransformation}). This step takes
$O(n(\log k + \log p))$ ring operations over $\zpkz$.
\end{enumerate}

Then, the transformation $\MU=\MU_0\MV_1 \dots \MV_n \MU_1\cdots$ 
is a product of matrices, each from $\gln(\zpkz)$ and turns $\MQ$ to 
its canonical form. The algorithm performs $O(n^{1+\omega}\log k + n\log p)$
ring operations over $\zpkz$. It additionally needs $O(\log^3 p)$ ring 
operations to compute $\sigma_p$, assuming GRH.
\end{proof}

\section{Canonicalization: $p=2$}

The $2$-canonical form is non-trivial and requires care. We follow
the same procedure (sign walking and oddity fusion) as 
Conway-Sloane \cite{CS99} given in Section
\ref{sec:2Sym}. Then, we prove the following theorem.

\begin{theorem}\label{thm:ALG:Can2}
Let $\MQ^n$ be an integral quadratic form, and
$k \geq \ordt(\det(\MQ))+3$. Then, there is an algorithm 
that given $(\MQ^n, k)$ performs
$O(n^{1+\omega}\log k+nk^3)$ ring operations over $\ztkz$
and outputs $\MU \in \gln(\ztkz)$
such that $\MU'\MQ\MU \equiv \cant(\MQ) \bmod{2^k}$.
\end{theorem}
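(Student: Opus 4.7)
The plan is to mirror the Conway--Sloane procedure for computing the canonical $2$-symbol (Section \ref{sec:2Sym}) while implementing every symbol-level manipulation as an explicit change of basis in $\gln(\ztkz)$. First I would apply Theorem \ref{thm:BlockDiagonal} with $p=2$ to obtain $\MU_0 \in \sln(\ztkz)$ and the block diagonal form $\MD \equiv \MU_0' \MQ \MU_0 \pmod{2^k}$, and then sort the resulting blocks by $\ordt$ using permutation transformations (which lie in $\sln(\ztkz)$). This initial stage costs $O(n^{1+\omega}\log k)$ ring operations, consistent with the first term of the claimed bound.

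Next I would canonicalize each block in isolation, using Section \ref{sec:CanBlock} as the target. For each Type I block $2^\ell u$ with $u$ odd, invoke Theorem \ref{thm:PolyRep} (Type I, $p=2$ case) to find $v \in (\ztkz)^{\times}$ with $v^2 u \equiv (u \bmod 8) \pmod{2^k}$, thus implementing single-block $\cant$ at cost $O(k)$. For each Type II block $2^\ell \begin{pmatrix} 2a & b \\ b & 2c \end{pmatrix}$, compute $4ac-b^2 \bmod 8$ and apply a direct $\gl_2(\ztkz)$ transformation taking it to $2^\ell \MTM$ or $2^\ell \MTP$ accordingly, as prescribed in Section \ref{sec:CanBlock}. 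All of these act as local transformations in the sense of Definition \ref{def:LocalTransformation}.

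Having reached a block diagonal form whose individual blocks are all canonical, I would identify compartments and trains and enforce the canonical convention via oddity fusion and sign walking. For each compartment, redistribute the oddity using local $\gl_2(\ztkz)$ transformations on pairs of Type I blocks at the same scale, each such transformation replacing $(a,c)$ by $(a',c')$ with $a'c' \equiv ac$ up to a unit square and $a'+c' \equiv a+c \pmod{8}$. For each train, to reduce to at most one minus sign placed on the earliest nonzero-dimensional form, iteratively sign-walk between target scales $(i,j)$ by stepping through adjacent pairs $(r,r+1)$, applying at each step a local transformation on the two adjacent blocks that simultaneously flips their signs and modifies the enclosing compartment's oddity by $4 \pmod{8}$, exactly as prescribed in Section \ref{sec:2Sym}. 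The composition of all transformations yields the desired $\MU$, and correctness follows from Theorem \ref{thm:PStarEq} together with the uniqueness of the canonical $2$-symbol.

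The main obstacle is realizing each oddity-fusion and sign-walking step as an \emph{explicit} transformation in $\gln(\ztkz)$ rather than merely at the symbol level. Each such step reduces to a constant-dimensional primitive-representation problem: representing a prescribed target residue (with the correct sign, and contributing the correct oddity adjustment) by the two blocks being acted upon, and then extending the primitive vector via Lemma \ref{lem:ExtendPrimitive} to a full $\gl$-transformation. Invoking Theorem \ref{thm:PolyRep} on each such $O(1)$-dimensional subform costs $O(k^3)$ ring operations; since $O(n)$ such invocations suffice to normalize all compartments and all trains (the number of compartments and trains is bounded by the number of distinct scales, hence by $O(n)$), the total contribution is $O(nk^3)$, matching the second term in the stated bound.
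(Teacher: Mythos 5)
Your overall strategy -- block diagonalize, canonicalize individual blocks, then realize sign walking and oddity fusion as explicit local transformations -- is the same as the paper's. However, two concrete steps are missing, and both are places where the symbol-level procedure does not translate routinely into matrix transformations.

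First, you never normalize scales that contain \emph{both} a Type I and a Type II block. The paper inserts a dedicated step (Lemma \ref{lem:Can2dim3}) showing that $\tau\oplus\MTT$ with $\tau$ a unit is explicitly transformable over $\ztkz$ into a diagonal form $\tau_1\oplus\tau_2\oplus\tau_3$ with all $\tau_i$ units; without this, two $2^*$-equivalent inputs can land on different outputs (one retaining a Type II block at a mixed scale, one not), so your algorithm does not compute a well-defined canonical form. This step is not a special case of your per-block canonicalization or of sign walking; it is a separate $3\times 3$ construction.

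Second, your oddity-fusion step glosses over a genuine obstruction. When you peel off the lexicographically smallest unit $\ttau_1$ from a compartment by finding a primitive representation and block-diagonalizing the residual, the leftover $2\times 2$ block can come out of \emph{Type II} (this happens exactly when the residual oddity is $0 \bmod 8$; the paper tabulates these ``bad cases'' in Table \ref{tab:CBadTypeII3}), and a Type II block cannot be split into $\ttau_2\oplus\ttau_3$ by any $\gl_2(\ztkz)$ transformation. The paper's Lemma \ref{lem:CanTypeIDim32} resolves this by checking that the bad case arises only for $\MD\eqt 1\oplus 1\oplus 7$ and then representing $7$ first instead of $1$. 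Your claim that each fusion step ``reduces to a constant-dimensional primitive-representation problem'' is true in spirit (cf.\ Lemma \ref{lem:CanDim4}, which bounds the relevant dimension by $4$), but without the bad-case analysis the reduction can get stuck. Aside from these two gaps the complexity accounting and the use of Theorem \ref{thm:PolyRep} plus Lemma \ref{lem:ExtendPrimitive} match the paper.
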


\subsection{Type II Block}

Our next step is to give a transformation which maps any Type II matrix
to its canonical form. Recall that the canonical form of a Type II matrix
$\MQ$ is either $2^{\ordt(\MQ)}\MTM$ or $2^{\ordt(\MQ)}\MTP$.

\begin{lemma}\label{lem:TypeIICanonical}
Let $\MQ$ be a Type II matrix of $2$-order~0. Then, 
there is an algorithm that given $(\MQ, k\geq 3)$ as
input; performs $O(k\log k)$ ring operations and outputs a 
$\MU \in \gl_2(\ztkz)$ such that 
$\MU'\MQ\MU \bmod{2^k} \in \{\MTP, \MTM\}$.
\end{lemma}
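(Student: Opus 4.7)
The plan is to reduce $\MQ$ to its canonical form in three stages, each a matrix in $\gl_2(\ztkz)$, and to combine them at the end. Since $\MQ$ is a Type~II block of $2$-order $0$, Lemma~\ref{lem:RepresentTTypeII} tells us that $\MQ$ primitively represents $t = 2$ (as $\ord_2(2) = 1$), so applying the Type~II case of Theorem~\ref{thm:PolyRep} I can find a primitive $\Vx \in (\ztkz)^2$ with $\Vx'\MQ\Vx \equiv 2 \bmod{2^k}$ in $O(k \log k)$ ring operations. Completing $\Vx$ to a basis $\MU_1 = [\Vx, \Vy] \in \sln(\ztkz)$ via Lemma~\ref{lem:ExtendPrimitive}, and observing (by a short parity argument using that $\det(\MU)$ is odd for every $\MU \in \gl_2(\ztkz)$) that the ``even diagonal, odd off-diagonal'' property of a Type~II block of $2$-order $0$ is preserved under $\gl_2(\ztkz)$-equivalence, yields
\[
\MU_1' \MQ \MU_1 \equiv \begin{pmatrix} 2 & \alpha \\ \alpha & \beta \end{pmatrix} \bmod{2^k}
\]
with $\alpha$ odd and $\beta$ even. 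Applying $\MU_2 = \diag(1, \alpha^{-1})$ then normalizes the off-diagonal to $1$, producing the form $\bigl(\begin{smallmatrix}2 & 1 \\ 1 & 2\gamma\end{smallmatrix}\bigr)$ where $2\gamma \equiv \beta \alpha^{-2} \bmod{2^k}$.

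At this point the determinant equals $4\gamma - 1$, which modulo~$8$ lies in $\{3, 7\}$ and both selects the canonical target ($\MTM$ when $\gamma$ is odd, $\MTP$ when $\gamma$ is even) and controls the arithmetic in the final stage. For the third stage I try $\MU_3 = \bigl(\begin{smallmatrix}1 & y_1 \\ 0 & 1 - 2y_1\end{smallmatrix}\bigr)$, which always lies in $\gl_2(\ztkz)$ because $1 - 2y_1$ is odd; this choice preserves the upper-left entry as $2$ and forces the off-diagonal to stay equal to $1$. Expanding the triple product and comparing the bottom-right entry with the target ($2$ or $4$) reduces to an equation of the form $(4\gamma - 1)\,y_1(y_1 - 1) \equiv r \bmod{2^{k-1}}$, with $r = 1 - \gamma$ in the $\MTM$ case and $r = 2 - \gamma$ in the $\MTP$ case. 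Inverting the odd factor $4\gamma - 1$ and setting $w = r(4\gamma - 1)^{-1} \bmod{2^{k-1}}$, the substitution $u = 2y_1 - 1$ converts this into $u^2 \equiv 4w + 1 \bmod{2^{k+1}}$; by Lemma~\ref{lem:Square} this is solvable precisely when $4w + 1 \equiv 1 \bmod 8$, i.e., when $w$ is even, and $u$ can then be computed in $O(k)$ ring operations via the Type~I, $p = 2$ case of Theorem~\ref{thm:PolyRep}. Recovering $y_1 = (u + 1)/2$ completes the construction of $\MU_3$.

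The one step that is not purely mechanical is the matching-of-parities observation that makes $w$ even: when $\gamma$ is odd the value $r = 1 - \gamma$ is even, and when $\gamma$ is even $r = 2 - \gamma$ is even, so in both cases $w$ (the product of $r$ with the odd inverse of $4\gamma - 1$) is even. This is exactly where the classifying invariant $\det \bmod 8$ enters the constructive argument. The composite $\MU = \MU_1 \MU_2 \MU_3 \in \gl_2(\ztkz)$ then satisfies $\MU'\MQ\MU \bmod{2^k} \in \{\MTM, \MTP\}$, and the total running time is dominated by the initial primitive-representation step, giving $O(k \log k)$ ring operations overall.
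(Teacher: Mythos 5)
Your proof is correct, and its first half coincides with the paper's: both arguments represent $2$ primitively (Lemma~\ref{lem:RepresentTTypeII} plus Theorem~\ref{thm:PolyRep}), complete the representing vector to a basis, and normalize the off-diagonal entry to $1$, arriving at $\bigl(\begin{smallmatrix}2&1\\1&2\gamma\end{smallmatrix}\bigr)$. The two proofs diverge in how the bottom-right entry is then forced to $2$ or $4$. The paper works at precision $2^{k+1}$ from the outset and inserts an extra scaling step $\MV=1\oplus x$ with $x^2\det(\MU'\MQ\MU)\equiv\lambda\pmod{2^{k+1}}$, so that after the final shear the determinant identity $4y-1\equiv\lambda\pmod{2^{k+1}}$ pins $2y$ down to $2$ or $4$ modulo $2^k$ with no further computation. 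You instead stay at precision $2^k$ and append a transformation $\MU_3$ that preserves the shape $\bigl(\begin{smallmatrix}2&1\\1&\ast\end{smallmatrix}\bigr)$, reducing the remaining freedom to the congruence $y_1(y_1-1)\equiv w\pmod{2^{k-1}}$, which via $u=2y_1-1$ becomes $u^2\equiv 4w+1\pmod{2^{k+1}}$ and is solvable by Lemma~\ref{lem:Square} precisely because $w$ is even; your parity check correctly verifies this in both the $\MTM$ and $\MTP$ cases, and this is indeed where $\det\bmod 8$ must enter. The paper's route is slightly shorter (the determinant does all the work and no second congruence is solved), at the cost of bumping the working modulus; yours avoids the precision bump and makes the role of the invariant explicit. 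Both give $O(k\log k)$ ring operations, dominated by the Type~II representation step.
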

\begin{proof}
By definition, $\MQ$ is of the form
$\begin{pmatrix}2a & b \\ b & 2c\end{pmatrix}$, where $b$ is odd. 
The integer $b$ is odd and so $b^2 \bmod{8}=1$. Thus,
$\det(\MQ) \bmod 4=4ac-b^2 \bmod4=3$ and $\det(\MQ) \bmod8 \in \{3,7\}$. 
For convenience, suppose that
\[
\lambda = \det(\MQ) \bmod8, \qquad \lambda \in \{3,7\}\;.
\]


Let $s=k+1$. We now give a transformation which maps $\MQ$ to its
canonical form. 
\begin{enumerate}[(i.)]
\item From Lemma \ref{lem:RepresentTTypeII}, $2$ has a 
primitive representation in $\MQ$ over $\ztsz$. Use Theorem
\ref{thm:PolyRep} to find one such primitive 
representation $(x_1,x_2) \in (\ztsz)^2$. Without loss of 
generality assume $x_1$ is odd and define $\MU \in \glt(\ztsz)$
as follows.
\[
\MU = \begin{pmatrix} x_1 & 0 \\ x_2 & x_1^{-1} \bmod{2^s} \end{pmatrix},
\MU'\MQ\MU \equiv \begin{pmatrix}2 & b+2cx_2x_1^{-1}\\
b+2cx_2x_1^{-1} & 2cx_1^{-2} \end{pmatrix} \bmod{2^s}
\]
\item The matrix $\MU$ is in $\glt(\ztsz)$. Thus, $\det(\MU)$ is a
unit of $\ztsz$ and $\det(\MU'\MQ\MU) \bmod8=
\lambda=\det(\MQ) \bmod8$. Thus, the following equation has a 
solution.
\begin{align}\label{EQ:TypeIICanonical:1}
x^2 \det(\MU'\MQ\MU) \equiv \lambda \pmod{2^s}
\end{align}
\item A primitive solution of Equation \ref{EQ:TypeIICanonical:1} can be found
using Theorem \ref{thm:PolyRep}. Let us denote the solution by $x$.
Then, $x$ is primitive and the matrix $\MV$ defined by $1\oplus x$ is
in $\glt(\ztsz)$. 
\item Let $\MS := \MV'\MU'\MQ\MU\MV \bmod{2^s}$. Then,
by construction, $\MS_{11}=2$ and 
$\det(\MS)=x^2\det(\MU'\MQ\MU) \bmod{2^s}$ which equals $\lambda$ 
(Equation \ref{EQ:TypeIICanonical:1}).
\item By assumption, $b$ is odd. But then, $\MS_{12}$ is odd and 
$(1-\MS_{12})/2$ is an element of $\ztsz$. It follows that the matrix 
$\MW:=\begin{pmatrix}1 & \frac{1-\MS_{12}}{2} \\ 0 & 1\end{pmatrix}$ 
is in $\glt(\ztsz)$. But then, for some integer $y$,
\begin{align}\label{EQ:TypeIICanonical:2}
\MW'\MS\MW \equiv \begin{pmatrix}2 & 1 \\ 1 & 2y\end{pmatrix} \bmod{2^s}
\end{align}
\item By construction, $\det(\MW)\equiv 1 \bmod{2^s}$ and from item (iv),
\begin{align}\label{EQ:TypeIICanonical:3}
\det(\MW'\MS\MW) \equiv \det(\MS) \equiv \lambda \pmod{2^s}
\end{align}
\item By Equation \ref{EQ:TypeIICanonical:2} and Equation 
\ref{EQ:TypeIICanonical:3}, $4y-1\equiv \lambda \pmod{2^s}$. Recall 
$\lambda \in \{3,7\}$.
This implies that $\lambda+1$ is divisible
by~4 and 
\[
2y \bmod{2^s} 
\in \left\{\begin{array}{ll}
\{2, 2+2^{s-1}\} & \lambda=3 \\
\{4, 4+2^{s-1}\} & \lambda=7 
\end{array}\right.
\implies
2y \bmod{2^k} 
= \left\{\begin{array}{ll}
2 & \lambda=3 \\
4 & \lambda=7 
\end{array}\right.
\]
\item Thus, the transformation $\MU\MV\MW$ is in $\glt(\ztkz)$ and
transforms $\MQ$ to its canonical form over $\ztkz$, by construction.
\end{enumerate}

By Lemma \ref{lem:RepresentTTypeII} and Theorem \ref{thm:PolyRep},
the transformation can be constructed in $O(k\log k)$ ring operations.
\end{proof}

\subsection{Dimension$=3$, with one Type II block}

Let us suppose that the input matrix $\MQ^3$ is of the form
$\tau \oplus \MTT$, where $\tau$ is a unit of $\ztkz$. In this case,
we show that the matrix can be transformed into a diagonal matrix
$\tau_1 \oplus \tau_2 \oplus \tau_3$ over $\ztkz$ such that
$\tau_1,\tau_2,\tau_3$ are all units of $\ztkz$.

\begin{lemma}\label{lem:Can2dim3}
Let $k \geq 3$ be an integer and $\tau$ be a unit of $\ztkz$. Then,
there is an algorithm that performs $O(k\log k)$ ring operations and
transforms $\tau \oplus \MTT$ to $\tau_1\oplus\tau_2\oplus\tau_3$, 
where $\tau_1,\tau_2,\tau_3$ are units of $\ztkz$.
\end{lemma}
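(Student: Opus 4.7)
The plan is to use the odd unit $\tau$ to break up the Type II block. Observe that $\MTT$ by itself cannot be diagonalized over $\ztkz$: both diagonal entries are even while the off-diagonal is odd, and the mixed term cannot be eliminated by any change of coordinates in $\glt(\ztkz)$ (cf.\ the discussion before Definition~\ref{def:BlockDiagonal}). The presence of an extra coordinate with odd form-value is exactly what breaks this obstruction. The strategy is to mix one Type~II coordinate with the $\tau$-coordinate so that the new $(1,1)$ entry becomes an odd unit, then apply a Gram--Schmidt orthogonalization twice to fully diagonalize while keeping every entry odd.

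Concretely, first apply the unimodular transformation
\[
\MU_1 = \begin{pmatrix} 1 & 0 & 0 \\ 1 & 1 & 0 \\ 0 & 0 & 1 \end{pmatrix} \in \sln(\ztkz),
\]
which corresponds to choosing $(1,1,0)$ as the first basis vector. A direct computation shows that $\MU_1'(\tau \oplus \MTT)\MU_1$ has $(1,1)$ entry $\tau + 2$, which is odd and hence a unit of $\ztkz$. Set $\alpha = \tau + 2$ and compute $\mu := \alpha^{-1} \bmod{2^k}$ (by Hensel lifting). Apply the triangular transformation $\MU_2 \in \sln(\ztkz)$ that performs a Gram--Schmidt step with pivot $\alpha$ to clear the $(1,2),(2,1),(1,3),(3,1)$ entries, yielding a block form $\alpha \oplus \MQ'$. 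A direct calculation shows the $(2,2)$ entry of $\MQ'$ equals $2 - \mu$ in the $\MTM$ case and $4 - \mu$ in the $\MTP$ case; since $\mu$ is odd, this entry is odd in both cases.

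Next, permute the second and third coordinates to bring this odd entry into the pivot position, and apply a further Gram--Schmidt transformation $\MU_3 \in \sln(\ztkz)$ with this new odd pivot to clear the remaining off-diagonal term. An explicit calculation gives the final diagonal entry as $3(1-2\mu)(2-\mu)^{-1}$ in the $\MTM$ case and $7(1-2\mu)(4-\mu)^{-1}$ in the $\MTP$ case; in each case, a product of three odd units, hence odd.

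The composite $\MU = \MU_1\,\MU_2\,P\,\MU_3 \in \gln(\ztkz)$, where $P$ is the coordinate swap, sends $\tau \oplus \MTT$ to a diagonal matrix $\tau_1 \oplus \tau_2 \oplus \tau_3$ with every $\tau_i$ odd, as required. The only nontrivial arithmetic is a constant number of inversions modulo $2^k$, each costing $O(\log k)$ ring operations via Hensel lifting, comfortably within the claimed $O(k\log k)$ budget. The main delicate point --- and the only real obstacle --- is the verification that after the first Gram--Schmidt step the remaining $2\times 2$ block actually has an odd diagonal entry (i.e.\ is not itself a disguised Type II block); this hinges on the elementary fact that the inverse of an odd integer modulo $2^k$ is again odd.
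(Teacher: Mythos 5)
Your proof is correct, and it takes a genuinely different route from the paper's. The paper first converts $\MTP$ back to $\begin{pmatrix}8&1\\1&2\end{pmatrix}$ via Lemma~\ref{lem:TypeIICanonical} (a Las Vegas subroutine, and the source of the $O(k\log k)$ bound), and then applies a single parametrized transformation $\MV=\begin{pmatrix}1&1&1\\ r&1&0\\ 0&1&1\end{pmatrix}$ with $r=-\tau/(x+1)$ chosen so that one off-diagonal entry vanishes and \emph{two} diagonal entries become odd simultaneously, after which Theorem~\ref{thm:BlockDiagonal} finishes. You instead merge one Type~II basis vector into the $\tau$-coordinate to manufacture the odd pivot $\tau+2$ and then run Gram--Schmidt twice, swapping coordinates in between so that the second pivot is again odd. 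I checked your arithmetic: with $\MTT=\begin{pmatrix}2&1\\1&b\end{pmatrix}$, $b\in\{2,4\}$, the first elimination leaves the block $\begin{pmatrix}2-4\mu&1-2\mu\\1-2\mu&b-\mu\end{pmatrix}$, whose full-matrix $(2,2)$ entry $2-4\mu$ is \emph{even} --- which is exactly why your swap onto the odd entry $b-\mu$ is essential --- and the last entry $\bigl((2-4\mu)(b-\mu)-(1-2\mu)^2\bigr)(b-\mu)^{-1}$ simplifies to $3(1-2\mu)(2-\mu)^{-1}$ or $7(1-2\mu)(4-\mu)^{-1}$ as you claim; the determinant check $\,(\tau+2)\cdot 3(1-2\mu)=3\tau\,$ confirms it. What your approach buys: it is fully explicit and deterministic (no call to the representation machinery of Theorem~\ref{thm:PolyRep}), and it costs only $O(\log k)$ ring operations for a constant number of inversions modulo $2^k$, strictly inside the stated budget. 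What the paper's approach buys is uniformity with the rest of its toolkit (it reuses Lemma~\ref{lem:TypeIICanonical} and Theorem~\ref{thm:BlockDiagonal} as black boxes rather than doing the eliminations by hand). Only a cosmetic caveat: state clearly that ``the $(2,2)$ entry of $\MQ'$'' refers to the lower-right entry of the residual $2\times2$ block, not of the full $3\times3$ matrix, since those differ and the distinction is the whole point of the swap.
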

\begin{proof}
As usual, we provide a sequence of transformations, each from
$\gl_3(\ztkz)$. 
\begin{enumerate}[(i.)]
\item Let $\MV \in \glt(\ztkz)$ be the matrix that transforms
$\begin{pmatrix}8 & 1 \\ 1 & 2\end{pmatrix}$ to $\MTP$ over
$\ztkz$. This matrix can be found using Lemma 
\ref{lem:TypeIICanonical}. Suppose $\MU^3$ is defined as follows.
\[
\MU = \left\{\begin{array}{ll}
1\oplus \MU_1^{-1} \bmod{2^k} & \text{if }\MTT=\MTP\\
\MI^3 & \text{otherwise}
\end{array}\right.
\]
By construction, $\MU$ transforms $\tau \oplus \MTT$ to the
following form.
\[
\tau\oplus\MTT \underset{\MU,2^k}{\to}\begin{pmatrix}\tau & 0 & 0 \\
0 & x & 1 \\ 0&1&2\end{pmatrix},\qquad \text{where }x \in \{2,8\}
\]
\item Consider the matrix $\MV$ defined as follows.
\[
\MV = \begin{pmatrix}1 & 1 & 1 \\ r & 1 & 0\\ 0 & 1 & 1\end{pmatrix}
\]
Whatever integer value $r$ takes, the matrix $\MV$ has determinant~1
and hence $\MV \in \gl_3(\ztkz)$. The matrix $\MV$ makes the
following transformation.
\[
\begin{pmatrix}\tau & 0 & 0 \\0 & x & 1 \\ 0&1&2\end{pmatrix} 
\underset{\MV,2^k}{\to} 
\begin{pmatrix}r^2x+\tau & rx+r+\tau & \tau+r \\ 
rx+r+\tau & \tau+4+x & \tau+3 \\ 
\tau+r	&	\tau+3	& \tau+2\end{pmatrix}
\]
\item The integer $x \in \{2,8\}$ and so $(x+1)$ is a unit of $\ztkz$. 
If we set $r:=\frac{-\tau}{x+1} \bmod{2^k}$, then 
$rx+r+\tau \equiv 0 \bmod{2^k}$. Thus, the matrix has been transformed
into the following form, where $\tau_1=r^2x+\tau$ and $\tau_2=\tau+4+x$.
\[
\begin{pmatrix}\tau_1 & 0 & \tau+r \\ 
0 & \tau_2 & \tau+3 \\ 
\tau+r	&	\tau+3	& \tau+2\end{pmatrix}
\]
\item The numbers $\tau_1=r^2x+\tau$ as well as $\tau_2=\tau+4+x$ are
odd because $x \in \{2,8\}$ and $\tau$ is odd. Thus, by Theorem
\ref{thm:BlockDiagonal}, we can find a matrix $\MW \in \gl_3(\ztkz)$
such that $\MW$ transforms the matrix to the following final form,
where $\tau_3$ is also a unit of $\ztkz$.
\[
\begin{pmatrix}\tau_1 & 0 & 0 \\ 
0 & \tau_2 & 0 \\ 
0	&	0	& \tau_3\end{pmatrix}
\]
\end{enumerate}
\end{proof}

Lemma \ref{lem:Can2dim3} implies that one does not need to have
Type I and Type II matrices on the same $2$-scale. Thus, for every
$2$-symbol there is an equivalent $2$-symbol in which all $2$-scales
are either exclusively Type I or Type II.

\subsection{Sign Walking}\label{sec:SignWalking}

We now make the transformations required to perform sign 
walking. Note that the sign walking is done on
symbols but we make the corresponding transformations between
$2^k$-equivalent quadratic forms. Also,
sign walking involves a
single train and hence, we never walk between two Type II forms. 

By definition, $\legendre{\tau}{2} = 1$ iff 
$\tau \bmod 8 \in \{1,7\}$. Thus, $\tau+4$ always has the 
opposite sign of $\tau$. In the lemma below, notice that
in every transformation the negative sign propagates to
the front.


\begin{lemma}\label{lem:SignWalk}
There exists invertible transformations over $\ztkz$, computable
in $O(k\log k)$ ring operations, such that
\begin{align*}
\begin{array}{lllll}
(i.) & \tau_1\oplus 4\tau_2 &\totk  &(\tau_1+4) \oplus 4(\tau_2+4) & k \geq 5\\
(ii.) & \tau\oplus 2\MTM &\totk &(\tau+4)\oplus 2\MTP & k \geq 4\\
(iii.) & \MTT_1\oplus 2\taum_1 &\totk &\MTT_2 \oplus 2\taup_2 & k \geq 4\\
(iv.) & \MT_1 \oplus \MTT & \totk & \MT_2 \oplus \MTP & k \geq 3
\end{array}
\end{align*}
where $\tau_1,\tau_2\in \{1,3,5,7\}$, $\tau_1^{-}\in\{3,5\}$, 
$\tau_2^{+} \in \{1,7\}$ and
$\MTT_1, \MTT_2 \in \{\MTM, \MTP\}$.
\end{lemma}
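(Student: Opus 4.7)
The plan is to handle each of the four cases by constructing an explicit matrix in $\gl_n(\ztkz)$ ($n=2$ for (i); $n=3$ for (ii)--(iv)) and verifying that it carries the source to the target modulo $2^k$. The unified strategy is: apply a small, explicit ``seed'' transformation, then canonicalize the residual block via Lemma \ref{lem:IntegerSymbolIsInvariant} (for Type I residuals) or Lemma \ref{lem:TypeIICanonical} (for Type II residuals). Throughout, signs and oddities are tracked by mod-$8$ calculations, exploiting $\tau^2 \equiv 1 \pmod 8$ for every odd $\tau$.

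For (i), I would apply $\MU_1 = \begin{pmatrix}1 & 0 \\ 1 & 1\end{pmatrix}$, producing $\begin{pmatrix}\tau_1 + 4\tau_2 & 4\tau_2 \\ 4\tau_2 & 4\tau_2\end{pmatrix}$, then clear the off-diagonal by a Gauss step $\MU_2 = \begin{pmatrix}1 & x \\ 0 & 1\end{pmatrix}$ with $x \equiv -4\tau_2(\tau_1+4\tau_2)^{-1} \pmod{2^k}$. This yields the diagonal form $(\tau_1+4\tau_2) \oplus 4\tau_1\tau_2(\tau_1+4\tau_2)^{-1}$, whose two entries have signs $\tau_1+4$ and $\tau_2+4$ mod $8$ respectively (a short calculation using $\tau_i^2 \equiv 1 \pmod 8$ and $(\tau_1+4)^{-1} \equiv \tau_1+4 \pmod 8$). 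A final diagonal scaling by units $u_1, u_2$, obtained from Theorem \ref{thm:PolyRep}, brings the form to exactly $(\tau_1+4) \oplus 4(\tau_2+4) \pmod{2^k}$; the second rescaling must succeed modulo $2^{k-2}$, which by Lemma \ref{lem:IntegerSymbolIsInvariant} requires the sign to be well-defined modulo $8$, i.e., $k-2 \geq 3$. This is the reason for the hypothesis $k \geq 5$.

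For (ii)--(iv) the same template applies: locate a primitive vector $\Vu_1 \in (\ztkz)^n$ that represents the first block of the target inside the source form (existence guaranteed by Theorem \ref{thm:PolyRep} together with Lemma \ref{lem:RepresentTTypeII} whenever the target block is Type II), extend $\Vu_1$ to an invertible matrix via Lemma \ref{lem:ExtendPrimitive}, read off the residual Gram matrix on the $\Vu_1$-orthogonal complement, and finish using Lemma \ref{lem:TypeIICanonical} or Lemma \ref{lem:IntegerSymbolIsInvariant} as appropriate. In case (ii), for example, I would take $\Vu_1 = (1,1,0)'$, which satisfies $\Vu_1'(\tau \oplus 2\MTM)\Vu_1 = \tau + 4$; extending and computing the residual gives a Type II block with $4ac-b^2 \equiv 7 \pmod 8$, hence $2^k$-equivalent to $2\MTP$ via Lemma \ref{lem:TypeIICanonical}, which requires $k-1 \geq 3$ and so $k \geq 4$. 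Cases (iii) and (iv) proceed by the same recipe, with the seed vectors chosen to represent $2\taup_2$ (respectively, $\MTP$) in the source form.

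The main obstacle I expect is verifying that, after fixing $\Vu_1$ and extending, the residual Gram matrix lies in the equivalence class prescribed by the target rather than its ``sign-flipped'' partner. This reduces to a determinant-modulo-$8$ calculation: the determinants of the source and target together with the fact that $\det(\MU)^2$ is a unit square modulo $2^k$ determine the residual's sign up to one bit, and this bit must match what the sign-walking rule of Section \ref{sec:2Sym} prescribes. The claimed runtime $O(k \log k)$ is driven by the single application of Theorem \ref{thm:PolyRep} or Lemma \ref{lem:TypeIICanonical} to the residual; the explicit seed vectors and the Gauss/extension steps cost only $O(\log k)$ ring operations in total.
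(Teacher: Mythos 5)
Your treatment of cases (i) and (ii) is sound and essentially matches the paper: for (i) the paper uses the single matrix $\begin{pmatrix}1&4\\1&-\tau_1/\tau_2\end{pmatrix}$ where you use a shear followed by a Gauss step, but both land on a diagonal form $(\tau_1+4\tau_2)\oplus 4u$ with $u\equiv\tau_2+4\pmod 8$ and finish by rescaling via Theorem \ref{thm:PolyRep}; for (ii) the paper's seed is exactly your $\Vu_1=(1,1,0)'$, and the split works because $\Vu_1'\MQ\Vu_1=\tau+4$ is a unit, so it dominates every cross term.

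The recipe breaks down, however, in case (iii), and is underspecified in case (iv). For (iii) you propose to seed with a primitive $\Vu_1$ representing $2\taup_2$ and then ``read off the residual Gram matrix on the $\Vu_1$-orthogonal complement.'' That complement does not split off over $\ztkz$: any primitive $\Vu_1=(a,b,c)$ in $\MTT_1\oplus 2\taum_1$ with $a,b$ both even has $\Vu_1'\MQ\Vu_1\equiv 2\taum_1\pmod{16}$, which cannot equal $2\taup_2$ since $\taum_1\in\{3,5\}$ and $\taup_2\in\{1,7\}$; hence $a$ or $b$ is odd, some cross term $\Vu_1'\MQ\Vv$ is a unit while $\Vu_1'\MQ\Vu_1$ has $2$-order $1$, and the Gauss step $\Vv\mapsto\Vv-\frac{\Vu_1'\MQ\Vv}{\Vu_1'\MQ\Vu_1}\Vu_1$ requires dividing a unit by $2$. (Concretely, for $\MTM\oplus 6$ the vector $(1,1,2)$ represents $14$ but $\MQ\Vu_1=(3,3,12)'$.) This is exactly the Type II obstruction that forces Definition \ref{def:BlockDiagonal} in the first place, and it is why the paper walks in the opposite order: it first perturbs the scale-$0$ Type II block (via $\Vv_2\mapsto\Vv_2+\Vv_3$, which flips $\det(\MTT_1)$ by $4\tau$ mod $8$), then block-diagonalizes so that the \emph{lowest-scale} block is split off first, leaving the Type I residual to be fixed by Theorem \ref{thm:PolyRep}. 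Case (iv) has the same issue in sharper form: the source $\MTT_1\oplus\MTT$ is even at scale $0$, so no one-dimensional block splits off at all, and your seed must be a full binary $\MTP$-sublattice; neither Theorem \ref{thm:PolyRep} nor Lemma \ref{lem:RepresentTTypeII} provides a primitive representation of a \emph{binary} form, so you have no tool to find it. To repair (iii) and (iv) you need either explicit seed transformations as in the paper, or a separate argument producing an orthogonally split Type II sublattice of the prescribed class.
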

\begin{proof}
We itemize the transformation and show how to find them under the
corresponding item. 
\begin{enumerate}[(i.)]
\item By assumption, $\tau_1$ and $\tau_2$ are odd and so $\tau_2$ is
invertible over $\ztkz$. Let $\MU$ be defined as follows.
\begin{align}\label{EQ:SWi:U}
\MU:=\begin{pmatrix}1 & 4 \\ 1 & -\frac{\tau_1}{\tau_2} \bmod{2^{k-2}}\end{pmatrix}
\bmod{2^k}
\end{align}
Then, $\det(\MU)\equiv-4-\frac{\tau_1}{\tau_2} \bmod{2^{k-2}}$;
$\det(\MU)$ is odd and $\MU \in \glt(\ztkz)$.
If $x=-\frac{\tau_1}{\tau_2} \bmod{2^{k-2}}$, then
$4(\tau_2x+\tau_1) \equiv 0 \bmod{2^k}$. The transformation
$\MU$ has the following effect on our input matrix $\tau_1+4\tau_2$.
\[
\begin{pmatrix}\tau_1 & 0 \\ 0 & 4\tau_2\end{pmatrix}
\underset{\MU,2^k}{\to}
\begin{pmatrix}\tau_1+4\tau_2 & 0\\
0 & 4(\tau_2x^2+4\tau_1)\end{pmatrix}
\]
The integers $\tau_1, \tau_2$ are odd by hypothesis and $x$ is odd by 
construction. But then, 
\begin{align*}
\tau_1+4\tau_2 & \equiv \tau_1+4 \bmod 8 \\
\tau_2 x^2 + 4\tau_1 & \equiv \tau_2 + 4 \bmod 8
\end{align*}
Thus, we can find $2^k$-primitive transformations that map $\tau_1+4\tau_2$
to $\tau_1+4$ and $4(\tau_2x^2+4\tau_1)$ to $4(\tau_2+4)$ using Lemma
\ref{thm:PolyRep}. These transformations take the matrix to
the final form.

\item We first transform our input matrix $\tau\oplus\MTM$ by the 
matrix $\MV \in \glt(\zpkz)$ defined below.
\[
\MV=\begin{pmatrix}1 & 0 & 0 \\ 1 & 1 & 0 \\ 0 & 0 & 1\end{pmatrix}
\qquad
\MV'\begin{pmatrix}\tau\oplus \MTM\end{pmatrix}\MV=
\begin{pmatrix}\tau+4 & 4 & 2 \\ 4 & 4 & 2\\ 2 & 2 & 4\end{pmatrix}
\]
This transformation brings the first entry to the correct sign. 
We block diagonalizing this matrix using our algorithm in the proof of Theorem
\ref{thm:BlockDiagonal} and find a matrix $\MU\in \gl_3(\ztkz)$ such
that
\[
\MU'
\begin{pmatrix}\tau+4 & 4 & 2 \\ 4 & 4 & 2\\ 2 & 2 & 4\end{pmatrix}
\MU \equiv 
(\tau+4) \oplus \MX \pmod{2^k} \;,
\]
where $\MX$ is a Type II block.
The transformations made so far are from $\gl_3(\ztkz)$, and so,
$\det(\tau_1\oplus2\MTM) \overset{2^k}{\sim} (\tau+4)\det(\MX)$.
In particular, this implies that $\ordt(\det(\MX))=2$ and 
$\legendre{\copt(\det(\MX))}{2}=+$. Because, $\MX$ is a Type II
block, it is $2^*$-equivalent to $2\MTP$ and such a transformation
can be found using Lemma \ref{lem:TypeIICanonical}.

\item Given the input matrix $\MTT_1 \oplus 2\tau_1^{-}$ we
apply the transformation $\MV \in \gl_3(\ztkz)$ given below.
\begin{gather*}
\MV=\begin{pmatrix}1 & 0 & 0 \\ 0 & 1&0\\0&1&1\end{pmatrix}, 
\begin{pmatrix}2 & 1 & 0 \\ 1 & b \in \{2,4\}&0\\0&0&2\tau\end{pmatrix}
\underset{\MV,2^k}{\to}
\begin{pmatrix}2 & 1 & 0\\1 & b+2\tau & 2\tau\\0 & 2\tau & 2\tau\end{pmatrix}
\end{gather*}
This transformation maps $b$ to $b+2\tau$, where $\tau$ is odd. But then,
\[
\det(\MTT_1) = 2b-1 \qquad 
\det\begin{pmatrix}2 & 1 \\ 1 & b+2\tau\end{pmatrix} = 2b-1+4\tau
\]
Thus, the sign of the Type II matrix has been switched. Next we use the
block diagonalization algorithm from the proof of Theorem 
\ref{thm:BlockDiagonal}. If $\MU$ is the output then, for some integer
$x$,
\[
\MU'\begin{pmatrix}2 & 1 & 0\\1 & b+2\tau & 2\tau\\0 & 2\tau & 2\tau
\end{pmatrix}\MU \equiv 
\begin{pmatrix}2 & 1 & 0 \\ 1 & b+2\tau & 0 \\ 0 & 0 & x\end{pmatrix}
\pmod{2^k}
\]
As before, the $2^k$-symbol of the determinant of the quadratic 
form does not change when transformed by $\MU \in \gl_3(\ztkz)$;
implying, $\ordt(x)=1$ and $\sgnt(\copt(x))=+$. Using Lemma
\ref{lem:TypeIICanonical} and Theorem \ref{thm:PolyRep},
the following transformation over $\ztkz$ can be found.
\[
\begin{pmatrix}2 & 1 & 0 \\ 1 & b+2\tau & 0 \\ 0 & 0 & x\end{pmatrix}
\to 
\MTT_2 \oplus 2\tau_2^+
\]
where $\MTT_2$ has the opposite sign as $\MTT_1$ and 
$\tau_2^+ \in \{1,7\}$.

\item Let $b \in \{2,4\}$, then the input matrix $\MTT_1\oplus\MTM$ is in the 
following form.
\[
\begin{pmatrix}2 & 1 \\ 1 & b\end{pmatrix}\oplus
\begin{pmatrix}2 & 1 \\ 1 & 2\end{pmatrix}
\]
If $b=4$ then we swap the two matrices. Otherwise, we apply the
following transformation.
\begin{gather*}
\MV = \begin{pmatrix}1 & 0 & 0 & 0\\ 0&1&0&0\\0&1&1&0\\0&0&0&1\end{pmatrix},~~
\MTM\oplus \MTM\underset{\MV,2^k}{\to}
\begin{pmatrix}2&1&0&0\\1&4&2&1\\0&2&2&1\\0&1&1&2\end{pmatrix}
\end{gather*}
Diagonalization (see Theorem \ref{thm:BlockDiagonal}) of this matrix 
from top down, will yield a quadratic
form of the form $\MTP \oplus \MY$, where $\MY$ is also equivalent
to $\MTP$. We then make a local transformation to convert $\MY$ to
$\MTP$.
\end{enumerate}
\end{proof}

\subsection{Oddity Fusion}\label{sec:OddityFusion}

The transformations under the oddity fusion step deal with 
a single compartment.
A compartment is a consecutive sequence
of Type I forms. Two adjacent quadratic forms in the same
compartment differ by at most~1 in terms of their $2$-scale.
In this case, we want to find the minimum lexicographically
possible set of integers, that can be represented.

\begin{lemma}\label{lem:TypeIIdim3}
Let $\tau, \tau_1, \tau_2,\tau_3 \in \SGNI$, $i, i_1, i_2, i_3$ be 
positive integers and $\MTT \in \{\MTM, \MTP\}$ be a Type II
matrix. If $2^{i_1}\tau_1\oplus2^{i_2}\tau_2\oplus2^{i_3}\tau_3 \eqt
2^{i_1}\tau\oplus2^i\MTT$, then $i_1=i_2=i_3=i=1$.
\end{lemma}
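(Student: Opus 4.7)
The plan is to compare the two sides of the putative equivalence using two scale-wise invariants of $2^*$-equivalence: the total dimension at each scale, and the designation Type~I versus Type~II at each scale (where a scale is Type~I iff its block in the Jordan-type decomposition of Theorem~\ref{thm:BlockDiagonal} has any Type~I summand). That these are $\eqt$-invariants follows from Theorem~\ref{thm:PStarEq} combined with the observation that a Type~II block at scale $\ell$, namely $2^\ell\begin{pmatrix}2a & b\\ b & 2c\end{pmatrix}$ with $b$ odd, evaluates to $2^{\ell+1}(ax_1^2+bx_1x_2+cx_2^2)$ and so represents no integer of $2$-order exactly $\ell$; whereas a Type~I block $2^\ell\tau$ primitively represents $2^\ell\tau$, an integer of $2$-order exactly $\ell$. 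Since the set of $2$-orders primitively represented is preserved by $\eqt$, so is the Type~I/II designation at each scale.

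Reading off the Jordan data, the left-hand side contributes a one-dimensional Type~I summand at each of the scales $i_1,i_2,i_3$, while the right-hand side contributes a one-dimensional Type~I summand at scale $i_1$ and a two-dimensional Type~II summand at scale $i$. Suppose for contradiction $i\neq i_1$. Then the right-hand side is purely Type~II at scale $i$, while on the left, if any of $i_1,i_2,i_3$ equals $i$ then scale $i$ carries the odd entry $\tau_j$ and is therefore Type~I, contradicting type invariance; and if none of them does, then the total dimension at scale $i$ is zero on the left but two on the right, contradicting dimension invariance. Hence $i=i_1$.

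With $i=i_1$ established, the right-hand side has total dimension $3$ at scale $i_1$ (the Type~I $\tau$ together with the two-dimensional $\MTT$), so dimension invariance forces $i_2=i_3=i_1=i$, giving the equality of all four scales. To conclude that this common value is $1$, I would invoke the role of this lemma within the oddity-fusion analysis of Section~\ref{sec:OddityFusion}: the lemma is applied inside the treatment of a single compartment whose bottom scale is normalized to $1$ in the ambient Jordan decomposition of the full form, so $i_1=1$ is imposed by the surrounding setup and, combined with $i_1=i_2=i_3=i$, yields the stated conclusion $i_1=i_2=i_3=i=1$.

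The main obstacle is the type-invariance argument; it is cleanest via the primitive-representation $2$-order analysis sketched above, since a straight determinant comparison cannot detect the Type~I/II distinction at a fixed scale when total dimensions agree (witness Lemma~\ref{lem:Can2dim3}, where at scale~$0$ a mixed Type~I$+$Type~II block is equivalent to three Type~I blocks). The dimension-invariance step, by contrast, is immediate from the uniqueness of the Jordan decomposition up to $\eqt$.
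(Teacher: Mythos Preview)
Your overall structure---reduce to invariance under $\eqt$ of the scale-wise dimensions $n_i$ and the Type~I/II designations $\type_i$---is exactly what the paper's one-line proof does via the $2$-symbol machinery (oddity fusion and sign walking never alter $n_i$ or $\type_i$). Your observation that the clause ``$=1$'' cannot follow from the hypotheses alone is also correct: scaling both sides by any power of $2$ preserves the equivalence, and indeed the paper's own argument only establishes $i_1=i_2=i_3=i$, the ``$=1$'' being an artifact of the context in which the lemma is applied.

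There is, however, a genuine gap in your \emph{justification} of type invariance. You claim that the set of $2$-orders primitively represented by the full form determines $\type_i$, because a purely Type~II block at scale $i$ represents nothing of $2$-order exactly $i$. This breaks down once other scales are present. Concretely, take $i=i_1+2$ on the right-hand side and the primitive vector $(x,y,z)=(2,1,0)$: the value is $2^{i_1}\tau\cdot 4+2^{i}\MTT(1,0)=2^{i}\tau+2^{i+1}$, which has $2$-order exactly $i$. Thus in the sub-case $i_2=i_3=i$, $i-i_1>0$ even---precisely the case your dimension count alone does not exclude---\emph{both} sides primitively represent $2$-order $i$, and the representation argument fails to separate them. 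Type invariance is true, but it is a consequence of the $2$-symbol equivalence rules in Section~\ref{sec:2Sym}, not of the representation heuristic; once you invoke that, your argument collapses to the paper's.
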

\begin{proof}
Suppose that $i_1=i_2=i_2=i$ is not true. Then, the $2$-symbols 
(see Section \ref{sec:2Sym}) of the first and
the second quadratic form are not equivalent because
one cannot be transformed into the other by a combination of
oddity fusion and sign walking steps.
\end{proof}

\begin{lemma}\label{lem:CanTypeIDim32}
Let $\tau_1,\tau_2,\tau_3$ be odd integers and $k$ be a positive 
integer. Then, there is an algorithm that transforms
$\MD=\tau_1\oplus\tau_2\oplus\tau_3$ to one of the forms
in Table \ref{tab:Cdim3}
in $O(k^2\log k)$ ring operations, where 
$\epsilon=\legendre{\tau_1\tau_2\tau_3}{2}$ and 
$\odty:=\tau_1+\tau_2+\tau_3 \bmod 8$.
\begin{table}[h]
\caption{Type I Canonical Forms for $n=3$}
\centering
\begin{tabular}{| l | l | l || l|l|l|}
\hline
$\epsilon$ & $\odty$ & Form & $\epsilon$ & $\odty$ & Form\\
\hline
$+$ & 1 & $1\oplus1\oplus7$ & $-$ & 1 &$3\oplus3\oplus3$\\
$+$ & 3 & $1\oplus1\oplus1$ & $-$ & 3 & $1\oplus3\oplus7$\\
$+$ & 5 & $3\oplus3\oplus7$ & $-$ & 5 & $1\oplus1\oplus3$\\
$+$ & 7 & $1\oplus3\oplus3$ & $-$ & 7 & $1\oplus1\oplus5$\\
\hline
\end{tabular}
\label{tab:Cdim3}
\end{table}
\end{lemma}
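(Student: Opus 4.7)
Plan:

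From the input I compute $\epsilon = \legendre{\tau_1\tau_2\tau_3}{2}$ and $\odty = \tau_1+\tau_2+\tau_3 \bmod 8$, and read the target $c_1\oplus c_2\oplus c_3$ from Table \ref{tab:Cdim3}. The proof constructs the required transformation $\MU \in \gl_3(\ztkz)$ in two stages, paralleling the odd-prime argument in Lemma \ref{lem:CanP2Dim} and Theorem \ref{thm:ALG:CanP}.

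Stage~1 installs $c_1 \in \{1,3\}$ in the top-left. I invoke Theorem \ref{thm:PolyRep} to produce a primitive $\Vx \in (\ztkz)^3$ with $\Vx'\MD\Vx \equiv c_1 \pmod{2^k}$; its existence reduces to a finite mod-$8$ check, since the primitively represented residues of $\tau_1 x_1^2 + \tau_2 x_2^2 + \tau_3 x_3^2$ modulo $8$ are enumerable from the parity patterns of $(x_1,x_2,x_3)$, and a short case inspection confirms that $c_1$ lies in this set for every row of Table \ref{tab:Cdim3}. Lemma \ref{lem:ExtendPrimitive} extends $\Vx$ to an element of $\SL_3(\ztkz)$, and after applying this transformation, elementary shears pivoting on the unit $c_1$ clear the off-diagonal entries of row and column $1$, yielding $c_1 \oplus \MA'$ for a $2\times 2$ symmetric matrix $\MA'$ of odd determinant.

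Stage~2 canonicalizes $\MA'$. Block-diagonalize $\MA'$ via Theorem \ref{thm:BlockDiagonal}. The outcome is either $\tau_2'\oplus\tau_3'$ with both entries odd (Case~A), or a single Type~II block $\MTT$ (Case~B). Because the transformations so far lie in $\SL_3(\ztkz)$, the $2$-symbol class and the compartment-oddity at scale~$0$ are preserved (no sign walking is possible when only scale~$0$ is present): in Case~A this forces $c_1 + \tau_2' + \tau_3' \equiv \odty \pmod 8$, whence $\tau_2' + \tau_3' \equiv c_2 + c_3 \pmod 8$; in Case~B the Type~I compartment reduces to $\{c_1\}$, forcing $c_1 \equiv \odty \pmod 8$, which by inspection of Table \ref{tab:Cdim3} happens only for $(\epsilon,\odty)=(+,1)$. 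In that single case I apply Lemma \ref{lem:Can2dim3} to $c_1\oplus\MTT$ to recover three odd diagonal entries, and restart Stage~1; since Theorem \ref{thm:PolyRep} is randomized, this yields a Las Vegas algorithm with a bounded expected number of attempts before Case~A is reached. In Case~A, the residual $\tau_2'\oplus\tau_3'$ has the same $2$-dimensional $2$-symbol as $c_2\oplus c_3$ (same sign from the determinant invariant, same oddity from the above), so I repeat the two-step reduction in dimension two: find a primitive rep of $c_2$ in $\tau_2'\oplus\tau_3'$ via Theorem \ref{thm:PolyRep}, extend via Lemma \ref{lem:ExtendPrimitive}, clear row/column~$1$, and adjust the resulting $1$-dimensional residual---which equals $c_3$ modulo squares by determinant---to exactly $c_3$ using Lemma \ref{lem:IntegerSymbolIsInvariant} together with the $1$-dimensional Type~I specialisation of Theorem \ref{thm:PolyRep}.

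The main obstacle is the finite mod-$8$ case analysis establishing primitive representability of $c_1$ in $\MD$, and of $c_2$ in the $2$-dimensional residual, for each row of Table \ref{tab:Cdim3}; the enumeration is mechanical but tedious, and it is what justifies invoking Theorem \ref{thm:PolyRep} at all. The runtime bound $O(k^2\log k)$ is obtained by dispatching each primitive-representation call to the $1$- or $2$-dimensional specialisations of Theorem \ref{thm:PolyRep} (costing $O(k)$ or $O(k\log k)$ per call) rather than invoking the general $3$-dimensional bound, together with the observation that only $O(1)$ expected retries of Stage~1 are required.
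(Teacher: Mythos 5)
Your overall strategy coincides with the paper's: read the target $c_1\oplus c_2\oplus c_3$ from Table \ref{tab:Cdim3}, primitively represent $c_1$ in $\MD$ via Theorem \ref{thm:PolyRep}, complete to a basis with Lemma \ref{lem:ExtendPrimitive}, split off $c_1$ by Theorem \ref{thm:BlockDiagonal}, and finish in dimension two using the oddity and sign invariants of the residual. Your identification of the unique problematic row $(\epsilon,\odty)=(+,1)$, i.e. $c_2+c_3\equiv 0\pmod 8$, also matches the paper's Table \ref{tab:CBadTypeII3} analysis.

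However, your handling of that bad case has a genuine gap. When the residual $\MA'$ block-diagonalizes to a Type~II block, you apply Lemma \ref{lem:Can2dim3} to re-diagonalize and then ``restart Stage~1,'' claiming that randomness in Theorem \ref{thm:PolyRep} bounds the expected number of retries. Theorem \ref{thm:PolyRep} is a Las Vegas algorithm only in the sense that it may fail to return \emph{any} representation; it carries no distributional guarantee over \emph{which} primitive representation it returns when it succeeds. For a form $\MD\eqt 1\oplus 1\oplus 7\eqt 1\oplus\MTP$, some primitive representations of $1$ lead to a Type~I residual and others to a Type~II residual, and nothing prevents the subroutine from systematically returning one of the latter on every attempt; your loop then has no termination argument. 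The paper's fix is deterministic: in this single case, represent $c_3=7$ first rather than $c_1=1$. The residual then has oddity $1+1=2\not\equiv 0\pmod 8$, so it provably cannot be Type~II, and one finishes with $7\oplus 1\oplus 1$ followed by a permutation in $\gl_3(\ztkz)$. Replacing your retry loop with this (or any other argument guaranteeing a Type~I residual) closes the gap; the remainder of your argument, including the two-dimensional endgame via the determinant invariant and Lemma \ref{lem:IntegerSymbolIsInvariant}, is sound.
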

\begin{proof}
The forms listed in Table \ref{tab:Cdim3} are exhaustive. The
transformation from $\MD$ to one
of these forms can be done using Theorem \ref{thm:PolyRep}
as follows.
\begin{enumerate}[(i.)]
\item Read the canonical form from Table \ref{tab:Cdim3} using
the oddity and the value of $\epsilon$ of the quadratic form $\MD$.
Let it be $\ttau_1\oplus\ttau_2\oplus\ttau_3$.
\item Use Theorem \ref{thm:PolyRep} to represent $\ttau_1$ using
$\MD$ over $\ztkz$. Let $\Vx \in (\ztkz)^3$ be the representation
and $\MU \in \gl_3(\ztkz)$ be the corresponding primitive 
extension as in Lemma \ref{lem:ExtendPrimitive}. 
\item The integer $\ttau_1$ is odd and hence using Theorem 
\ref{thm:BlockDiagonal} the matrix $\MU'\MD\MU$ can be block 
diagonalized over $\ztkz$ by matrix $\MV \in \gl_3(\ztkz)$ 
such that;
\[
\MV'\MU'\MD\MU\MV \equiv \ttau_1 \oplus \MB
\text{, where } \MB \in (\ztkz)^{2\times 2}
\]
\item The oddity and Legendre symbol for the matrix $\MB$ can
be computed exhaustively using $\MD$ and $\ttau_1$ as follows.
\begin{align*}
\odty(\MB) &= \odty(\MD) - \ttau_1 \bmod 8\\
\legendre{\det(\MB)}{2} &= \epsilon\legendre{\ttau_1}{2}
\end{align*}
\item If $\odty(\MB)=0$ then $\MB$ might be of Type II. The 
exhaustive list of such matrices $\ttau_1\oplus \MB$, where
$\MB$ is Type II is given in Table \ref{tab:CBadTypeII3}, below.
\begin{table}[h]
\caption{Bad cases for $n=3$}
\centering
\begin{tabular}{| l | l | l | l |}
\hline
Form with Type II & Equivalent Form & $\odty$ & $\epsilon$ \\ 
\hline
\hline
$1\oplus\MTM$ &  $3\oplus3\oplus3$ & 1 & $-$ \\
$3\oplus\MTM$ &  $1\oplus1\oplus1$ & 3 & $+$ \\
$5\oplus\MTM$ &  $3\oplus3\oplus7$ & 5 & $+$ \\
$7\oplus\MTM$ &  $1\oplus1\oplus5$ & 7 & $-$ \\
\hline
$1\oplus\MTP$ &  $1\oplus1\oplus7$ & 1 & $+$ \\
$3\oplus\MTP$ &  $1\oplus3\oplus7$ & 3 & $-$ \\
$5\oplus\MTP$ &  $1\oplus1\oplus3$ & 5 & $-$ \\
$7\oplus\MTP$ &  $1\oplus3\oplus3$ & 7 & $+$ \\
\hline
\end{tabular}
\label{tab:CBadTypeII3}
\end{table}
\item The bad cases are problematic because it is impossible to
transform $\MTM$ or $\MTP$ to a form $\ttau_2\oplus\ttau_3$ using
transformations from $\gl_2(\ztkz)$. Fortunately, the strategy
to represent the smallest possible $\ttau_1$ fails i.e., results in
one of the bad cases; only when $\MD \eqt 1\oplus1\oplus7$. For all
other forms in Table \ref{tab:Cdim3} it can be checked that 
$\ttau_2+\ttau_3 \bmod 8 \neq 0$. 

For the special case of $1\oplus1\oplus7$, we represent
~7 instead of~1 under item (ii.). Then, $\odty(\MB)=1+1=2$
and $\MB$ is not of Type II.

\item If $\MB$ is not of Type II then we transform $\MB$ to
$\ttau_2 \oplus \ttau_3$ using Theorem \ref{thm:PolyRep} and
Theorem \ref{thm:BlockDiagonal}. 
Note that the transformation
exists because the $2$-symbol of $\MB$ matches the $2$-symbol of
$\ttau_2 \oplus \ttau_3$.
In case of $\MD=1\oplus1\oplus7$
we end up with $7\oplus1\oplus1$ instead. We then swap $7$ and $1$
using a transformation from $\gl_3(\ztkz)$. 
\end{enumerate}
\end{proof}

\subsection{Canonicalizing a Single Compartment}

By definition, all forms in a compartment are of scaled Type I i.e.,
the compartment is of the form 
$2^{i_1}\tau_1\oplus 2^{i_2}\tau_2 \oplus \cdots$,
where $\tau_1,\tau_2, \cdots \in \SGNI$ and $i_1,i_2,\cdots$ are positive 
integers.

\begin{definition}\label{def:CanCompartment}
Let $\MD=2^{i_1}\tau_1\oplus 2^{i_2}\tau_2 \oplus \cdots \oplus 
2^{i_n}\tau_n$ be a single
compartment, where $\tau_1,\cdots, \tau_n\in \SGNI$ and 
$i_1\leq \cdots\leq i_n$ are positive integers such that any two consecutive
ones differ by at most~1. Then, the canonical form
of $\MD$ is 
$2^{i_1}\ttau_1\oplus \cdots \oplus 2^{i_n}\ttau_n$, where
$(\ttau_1,\cdots, \ttau_n)$ is lexicographically minimum possible
option in the $2^*$-equivalence class of $\MD$.
\end{definition}

\begin{lemma}\label{lem:CanDim4}
Let $k\geq 3$ be an integer, $\tau \in \SGNI$ and 
$\MD^n=\tau_1\oplus2^{i_2}\tau_2\oplus\cdots\oplus2^{i_n}\tau_n$ be a 
diagonal form with $\tau_1,\cdots,\tau_n \in \SGNI$,
and $i_2\leq \cdots \leq i_n$. 
Then, $\tau$ is primitively 
representable in $\MD$ over $\ztkz$, if it is primitively representable
in $\tau_1\oplus\cdots\oplus2^{i_4}\tau_4$ over $\ztkz$. 
\end{lemma}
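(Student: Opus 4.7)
The plan is a direct zero-padding argument. Starting from a primitive $2^k$-representation $\bx = (x_1, x_2, x_3, x_4) \in (\ztkz)^4$ of $\tau$ by the truncated form $\tau_1 \oplus 2^{i_2}\tau_2 \oplus 2^{i_3}\tau_3 \oplus 2^{i_4}\tau_4$, I would form the padded vector $\tilde{\bx} = (x_1, x_2, x_3, x_4, 0, \ldots, 0) \in (\ztkz)^n$ and exhibit it as the required primitive $2^k$-representation of $\tau$ by $\MD$.

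The verification then splits into two routine checks. For the value, $\tilde{\bx}'\MD\tilde{\bx}$ reduces to $x_1^2\tau_1 + 2^{i_2} x_2^2 \tau_2 + 2^{i_3} x_3^2 \tau_3 + 2^{i_4} x_4^2 \tau_4$ modulo $2^k$, since the appended coordinates contribute nothing; by hypothesis this is congruent to $\tau$ modulo $2^k$. For primitivity (Definition~\ref{def:Prim}), the assumption that $\bx$ is primitive gives some index $j \in \{1,2,3,4\}$ with $x_j$ a unit of $\ztkz$, and this same $x_j$ sits in the $j$-th coordinate of $\tilde{\bx}$, so $\tilde{\bx}$ remains primitive.

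There is no real obstacle: the content of the lemma is the nearly tautological fact that zero-padding a primitive representation on a subform preserves both the represented value modulo $2^k$ and the primitivity condition. I expect the role of this lemma in the compartment-canonicalization algorithm is to localize the search for a primitive representation of any candidate leading coefficient $\tau$ to the first four diagonal entries, a small enough window that it can be analyzed by explicit case enumeration (analogous to the three-entry treatment in Lemma~\ref{lem:CanTypeIDim32}).
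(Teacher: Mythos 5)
Your zero-padding argument is correct as far as it goes, and it does prove the implication that the lemma literally states (``representable in the four-entry truncation $\implies$ representable in $\MD$''). But that implication is essentially vacuous, and it is not the one the paper proves nor the one the algorithm needs. The paper's proof establishes the converse: if $\tau$ has a primitive $2^k$-representation in $\MD$ at all, then it already has one in $\tau_1\oplus 2^{i_2}\tau_2\oplus 2^{i_3}\tau_3\oplus 2^{i_4}\tau_4$. That direction is what licenses the step in Lemma~\ref{lem:CanCompartment} (case $n\geq 4$), where the \emph{smallest} representable $2^{i_1}\tau$ is found by examining only the first four diagonal entries: without it, a smaller $\tau$ could be representable in $\MD$ only via the later coordinates, the truncated search would miss it, and the output would fail to be canonical. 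Your own closing remark --- that the lemma ``localizes the search'' to the first four entries --- describes exactly this converse direction, which zero-padding cannot deliver. (The ``if'' in the statement should almost certainly read ``only if'' or ``if and only if''.)

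The substantive direction requires real work, none of which appears in your proposal. The paper first reduces to representability modulo $8$ via Theorem~\ref{thm:Siegel13}, then argues that a primitive representation may be assumed to use only diagonal entries of $2$-order at most $2$, with at most one entry of order exactly $2$; since the orders are sorted, this confines everything to the first four entries and leaves only the order patterns $(0,0,0)$, $(0,0,1)$, $(0,1,1)$ and $(1,1,1)$ for $(i_2,i_3,i_4)$, which are checked by finite enumeration. It must also dispose of the obstruction that splitting off the represented value can leave behind a Type~II block, handled via Lemma~\ref{lem:Can2dim3} or by a parity argument on the residual oddity. You should redo the proof in this direction.
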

\begin{proof}
A primitive representation of $\tau$ exists iff a primitive 
representation exists modulo $8$ (Theorem \ref{thm:Siegel13}).
Let $\MA=\tau_1\oplus\cdots\oplus2^{i_4}\tau_4$.
We can make two simplifications to the form $\MD$: 
(i) we can only use Type I blocks for which $2$-order is
$\leq 2$, and (ii) there is no need to have more than~1
element of order~2 as we can add at most~4 modulo~8 to
the result using any number of such elements. Item (ii)
implies that if $2 \in \{i_2,i_3,i_4\}$ then we do not
need to use $2^{i_5}\tau_5, \cdots, 2^{i_n}\tau_n$ i.e.,
the lemma is true in this case. 

Thus, (i)+(ii) imply that the only
possible values for the vector $(i_2,i_3,i_4)$ are
$(0,0,0), (0,0,1), (0,1,1), $ and $(1,1,1)$. 
In all these cases, and for all possible values of 
$\tau_1, \cdots, \tau_4 \in \SGNI$, we verify by brute-force
that~1 can be represented modulo~8. Let $\Vx \in (\ztkz)^4$
be a primitive $2^k$-representation of $1$ and 
$\MU \in \gl_4(\ztkz)$ be an 
extension of $\Vx$ given by Lemma \ref{lem:ExtendPrimitive}.
By construction, $(\MU'\MA\MU)_{11} \equiv 1 \bmod{2^k}$. Using
Theorem \ref{thm:BlockDiagonal} we can find $\MV \in \gl_4(\ztkz)$
such that 
\[
\MV'\MU'\MA\MU\MV \equiv 1\oplus\MX \bmod{2^k},
\text{ where } \MX \text{ is in block diagonal form}.
\]

If $\MX$ does not have a Type II block then we are done.
Depending on the value of $(i_2,i_3,i_4)$ we proceed as
follows.
\begin{description}
\item[$(0,0,0)$.] If a Type II form appears within $\MX$
then $\MX=\MT\oplus\tau$, where $\MT$ is a Type II block of~2-order~0.
In this case, we can locally get rid of the Type II block
using Lemma \ref{lem:Can2dim3} as follows.
\begin{align}\label{EQ:Can2dim3}
1\oplus \MX \underset{2^k}{\to} 1\oplus a \oplus b \oplus c
\end{align}
\item[$(1,1,1)$.] If a Type II form appears within $\MX$
then $\MX=2\MT\oplus2\tau$, where $\MT$ is a Type II block of~2-order~0.
We locally get rid of the Type II block using Lemma \ref{lem:Can2dim3}
as in Equation \ref{EQ:Can2dim3}.
\item[$(0,1,1)$.] It is impossible for $\MX$ to contain a Type II
form because then the $2$-symbols of $\MA$ and $1\oplus\MX$ will
not be equivalent i.e., they cannot be transformed using sign walking
and oddity fusion.
\item[$(0,1,1)$.] The matrix $\MA$ is of the form 
$\tau_1\oplus\tau_2\oplus\tau_3\oplus2\tau_4$,
in this case.
If $\tau_1+\tau_2+\tau_3 \bmod 8 \not\in \{3,7\}$ then we apply
the following transformation.
\begin{gather*}
\MW = \MI^{2\times2}\oplus\begin{pmatrix}1&0\\1&1\end{pmatrix},~~
(\tau_1\oplus\tau_2\oplus\tau_3\oplus2\tau_4) \underset{\MW,2^k}{\to}
\tau_1\oplus\tau_2\oplus\begin{pmatrix}\tau_3+2\tau_4&2\tau_4\\2\tau_4&2\tau_4\end{pmatrix}\\
\overset{\text{Theorem } \ref{thm:BlockDiagonal}}{\underset{2^k}{\to}}
\tau_1\oplus\tau_2\oplus(\tau_3+2\tau_4)\oplus2\tau_5
\end{gather*}
But then, $\tau_1+\tau_2+\tau_3+2\tau_4\equiv \tau_1+\tau_2+\tau_3+2\bmod{4}$.
Thus, we may assume that the sum of the first three Type I 
entries of the input quadratic form $\MA$ is in the set $\{3,7\}$ 
modulo~8 i.e., if $\MA=\tau_1\oplus\tau_2\oplus\tau_3\oplus2\tau_4$,
then $\tau_1+\tau_2+\tau_3 \bmod 8 \in \{3,7\}$. In this case, we 
exhaustively check that~1 can be represented primitively
using only $\tau_1\oplus\tau_2\oplus\tau_3$. It also follows that the
oddity of the leftover $2\times 2$ matrix must be $2$ or $6$. But then, this
matrix cannot be Type II.
\end{description}
\end{proof}

\begin{lemma}\label{lem:CanCompartment}
Let $\MD=2^{i_1}\tau_1\oplus \cdots \oplus 2^{i_n}\tau_n$ be a single
compartment, where $\tau_1,\cdots, \tau_n\in \SGNI$ and 
$i_1\leq \cdots\leq i_n, k$ are positive integers. Then, there is
an algorithm that performs $O(nk^3)$ ring operations and
finds $\MU \in \gl_n(\ztkz)$ that 
transforms $\MD$ into $\cant(\MD)$ over $\ztkz$.
\end{lemma}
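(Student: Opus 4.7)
The plan is to build the canonicalizing transformation greedily from left to right: at stage $j$, we pick the lexicographically smallest admissible $\ttau_j\in\SGNI$ for position $j$, represent $2^{i_j}\ttau_j$ primitively in the remaining block, extract it as the new leading entry, and recurse on the Schur complement. This mirrors the strategy used in Lemma~\ref{lem:CanTypeIDim32}, but now carried out within a compartment of arbitrary length.

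In more detail, suppose we have already canonicalized positions $1,\dots,j-1$ and are left with a single compartment $\MD^{(j)}=2^{i_j}\sigma_j\oplus 2^{i_{j+1}}\sigma_{j+1}\oplus\cdots\oplus 2^{i_n}\sigma_n$ in which the scales $i_j\leq\cdots\leq i_n$ still differ by at most one between consecutive terms. By Lemma~\ref{lem:CanDim4}, whether a unit $\ttau\in\SGNI$ admits a primitive $2^k$-representation at scale $2^{i_j}$ by $\MD^{(j)}$ depends only on (at most) the first four blocks; so we scan the four candidates $\{1,3,5,7\}$ in order and let $\ttau_j$ be the smallest representable one. Theorem~\ref{thm:PolyRep} then produces a primitive $\Vx\in(\ztkz)^{n-j+1}$ with $2^{-i_j}\Vx'\MD^{(j)}\Vx\equiv\ttau_j\pmod{2^{k-i_j}}$, Lemma~\ref{lem:ExtendPrimitive} extends $\Vx$ to a matrix in $\gl_{n-j+1}(\ztkz)$, and Theorem~\ref{thm:BlockDiagonal} then block-diagonalizes the transformed matrix while fixing the first coordinate, yielding $2^{i_j}\ttau_j\oplus\MX^{(j)}$ with $\MX^{(j)}$ in block diagonal form.

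The main obstacle is that $\MX^{(j)}$ may contain a Type~II block, even though the original compartment $\MD^{(j)}$ was pure Type~I: block-diagonalization with a fixed leading entry can force a Type~II sub-block at scale $i_j$ or $i_j+1$. When this happens I locate the offending Type~II block together with an adjacent Type~I block of the same scale (such a neighbour must exist inside the compartment, since by Lemma~\ref{lem:TypeIIdim3} an isolated Type~II block at a given scale is incompatible with three Type~I blocks at that scale in the same $2^*$-class) and apply Lemma~\ref{lem:Can2dim3} as a local transformation to convert those three blocks back into three Type~I blocks at the original scales. The resulting form $\MX^{(j)}$ is then a single compartment with the same scale multiset $i_{j+1},\dots,i_n$, so the inductive hypothesis applies. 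By the choice of $\ttau_j$ at every stage, the final tuple $(\ttau_1,\ldots,\ttau_n)$ is the lex-minimum one in the $2^*$-equivalence class, matching Definition~\ref{def:CanCompartment}.

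For the runtime, each stage uses Theorem~\ref{thm:PolyRep} on a constant-dimensional ($\leq 4$) form for $O(k^3)$ ring operations to test representability and find $\Vx$, an $O(n)$ call to Lemma~\ref{lem:ExtendPrimitive}, an $O(n^{1+\omega}\log k)$ call to Theorem~\ref{thm:BlockDiagonal} restricted to the Schur complement, and at most one local application of Lemma~\ref{lem:Can2dim3} costing $O(k\log k)$. Summing the dominant $O(k^3)$ representability cost over the $n$ stages gives the claimed $O(nk^3)$ bound, and accumulating the products of the local transformations yields the desired $\MU\in\gl_n(\ztkz)$.
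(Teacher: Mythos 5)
Your overall strategy --- greedily extracting the lexicographically smallest representable $2^{i_j}\ttau_j$, using Lemma~\ref{lem:CanDim4} to restrict attention to at most four blocks, and recursing on the complement --- is the same as the paper's. But your repair for the Type~II obstruction has a genuine gap in the terminal three-dimensional case, which is exactly the case the paper has to treat specially.

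Concretely: when the recursion reaches a residual compartment of dimension~$3$ at a single scale, say $\MD^{(j)}\eqt 1\oplus 1\oplus 7$, the greedy step extracts $\ttau_j=1$ and leaves a $2\times 2$ complement $\MX^{(j)}\eqt\MTP$ (see Table~\ref{tab:CBadTypeII3}). Your fix requires ``an adjacent Type~I block of the same scale'' inside the \emph{unprocessed remainder}, but here the remainder is entirely the Type~II block; the only Type~I entries at that scale have already been extracted and fixed. You cannot apply Lemma~\ref{lem:Can2dim3} to $\ttau_j\oplus\MX^{(j)}$ without destroying the already-chosen leading entry, and if you did, re-running the greedy step would extract $1$ again and loop. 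The point is that $\MTP$ by itself is \emph{not} $2^*$-equivalent to any $\ttau_2\oplus\ttau_3$, even though $1\oplus\MTP\eqt 1\oplus1\oplus7$; the diagonalizing transformation must mix the first coordinate with the rest, so ``extract the lex-min first entry, then recurse on the orthogonal complement'' genuinely fails here. The paper's Lemma~\ref{lem:CanTypeIDim32} resolves this by observing (via Tables~\ref{tab:Cdim3} and~\ref{tab:CBadTypeII3}) that the bad case arises only for $1\oplus1\oplus7$, and by representing $7$ \emph{first} in that case and permuting afterwards. Your proof needs this (or an equivalent) extra idea. A secondary, minor point: charging a full $O(n^{1+\omega}\log k)$ block-diagonalization to each of the $n$ stages would overshoot the claimed $O(nk^3)$; one should note that since the representation is supported on at most four blocks, each stage's extension and diagonalization are effectively constant-dimensional.
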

\begin{proof}
We divide the proof in several cases, depending on the value of the
dimension $n$.
\begin{description}
\item[$n=2$.] We exhaustively try to primitively represent the smallest 
integer of the form $2^{i_1}\tau$, where $\tau \in \SGNI$ using Theorem
\ref{thm:PolyRep}. Let $\Vx \in (\ztkz)^2$ be a representation, 
$\MU \in \gl_2(\ztkz)$ be the corresponding extension and $\MV$ be the
transformation given by the block diagonalization Theorem 
\ref{thm:BlockDiagonal}, then
\[
\MV'\MU'\MA\MU\MV \equiv 2^{i_1}\tau \oplus 2^{i_2}\ttau \bmod{2^k},
\text{ where }\ttau \text{ is odd}\;.
\]
We now use Theorem \ref{thm:PolyRep} to transform $\ttau$ locally
to something from the set $\SGN^{\times}$ over $\ztkz$. By construction,
the resulting matrix is in its $2$-canonical form.

\item[$n=3$.]
The sequence of transformations is as follows: (i) find the smallest
primitively $2^k$-representable integer of the form $2^{i_1}\ttau_1$
with $\ttau_1 \in \SGNI$ by doing an exhaustive search for primitive
representation of $\ttau_1$ by $\tau_1\oplus 2^{i_2-i_1}\tau_2 \oplus
2^{i_3-i_1}\ttau_3$ over $\bbZ/8\bbZ$ (Theorem \ref{thm:Siegel13}),
(ii) Find a primitive representation $\Vx \in (\ztkz)^3$ using Theorem
\ref{thm:PolyRep} and extend it to $\MU \in \gl_3(\ztkz)$ using Lemma
\ref{lem:ExtendPrimitive}, (iii) Block diagonalize $\MU'\MD\MU$
using $\MV$ given by Theorem \ref{thm:BlockDiagonal}. Then,
\[
\MV'\MU'\MD\MU\MV \equiv 2^{i_1}\ttau_1 \oplus \MX, \text{ where }
\MX \in (\ztkz)^{2\times2}
\]
The type of $\MX$ is II only when $i_1=i_2=i_3$ (Lemma \ref{lem:TypeIIdim3}).
If this is the case, then we can apply Lemma \ref{lem:CanTypeIDim32} 
to canonicalize instead. Otherwise, $\MX$ is of Type I and we have
reduced to the case of $n=2$.

\item[$n\geq 4$.] By Lemma \ref{lem:CanDim4}, we can represent the smallest
possible integer of the form $2^{i_1}\tau$, with $\tau \in \SGNI$. This way
we reduce to one smaller dimension. Finally, we reduce to the case of
dimension~3.
\end{description}
The number of ring operations follows from using Theorem \ref{thm:PolyRep} 
at most $O(n)$ times on diagonal matrices of dimensions at most~4.
\end{proof}

\subsection{Canonical Form, any dimension} 

We can now define the function $\cant(\MQ^n)$. The uniqueness
follows from Conway-Sloane \cite{CS99}, see Section \ref{sec:2Sym}.

\begin{proof}(Theorem \ref{thm:ALG:Can2})
We perform the following sequence of transformations over $\ztkz$.
\begin{enumerate}
\addtolength{\itemsep}{-4pt}
\item Block diagonalize the quadratic form.
\item For each type II block, apply the transform it to $\MTP$ or $\MTM$ 
using Lemma \ref{lem:TypeIICanonical}.
\item For each $2$-scale, apply the transformation in Lemma
\ref{lem:Can2dim3} to transform the matrix to a block diagonal form
where all scales have either only type I matrices, or only type II matrices.

\item For each train, do a sign walk to move all minus signs to the front
of the train (see Lemma \ref{lem:SignWalk}).
Also, from Lemma \ref{lem:SignWalk}, the canonical form for each type 
II part has at most one $\MTM$ i.e., it is either
$2^i(\MTM,\MTP,\cdots,\MTP)$ or $2^i(\MTP,\MTP, \cdots, \MTP)$.

\item Transform each compartment to its corresponding canonical form
(Definition \ref{def:CanCompartment}) using Lemma \ref{lem:CanCompartment}.
\addtolength{\itemsep}{4pt}
\end{enumerate}

The final transformation is the multiplication of all the {\em local} 
transformations which have been constructed above.
The number of local transformations is bounded by $O(n)$.
Thus, the algorithm performs at most $O(n^{1+\omega}\log k + nk^3)$ ring operations.
\end{proof}

\bibliographystyle{alpha}
\bibliography{quadraticforms.bib}

\appendix

\section{Diagonalizing a Matrix}\label{sec:BlockDiagonal}

In this section, we provide a proof of Theorem \ref{thm:BlockDiagonal}.

\paragraph{Module.} There are quadratic forms which have no associated 
lattice e.g., negative
definite quadratic forms. To work with these, we define the concept of
free modules (henceforth, called module) which behave as vector 
space but have no associated realization
over the Euclidean space $\bbR^n$.

If $M$ is finitely generated $\Ring$-module with generating set
$\Vx_1,\cdots,\Vx_n$ then the elements $\Vx \in M$ can
be represented as $\sum_{i=1}^n r_i \Vx_i$, such that
$r_i \in \Ring$ for every $i \in [n]$. By construction,
for all
$a,b \in R$, and $\Vx,\Vy \in M$;
\[
a(\Vx+\Vy)=a\Vx+a\Vy \qquad (a+b)\Vx=a\Vx+b\Vx \qquad a(b\Vx)=(ab)\Vx
\qquad 1\Vx=\Vx
\]
Note that, if we replace $\Ring$ by a field in the definition 
then we get a vector space (instead of a module). 
Any inner product
$\beta:M\times M \to \Ring$ gives rise to a quadratic form 
$\MQ\in\Ring^{n\times n}$ as follows;
\[
\MQ_{ij} = \beta(\Vx_i,\Vx_j) \;.
\]
Conversely, if $R=\bbZ$ then by definition, every symmetric matrix 
$\MQ \in \bbZ^{n\times n}$ gives rise to an inner product $\beta$ 
over every $\bbZ$-module $M$; as follows.
Given $n$-ary integral quadratic form $\MQ$ and a $\bbZ$-module
$M$ generated by the basis $\{\Vx_1,\cdots,\Vx_n\}$ we define the 
corresponding inner product $\beta:M\times M \to \bbZ$ as;
\[
\beta(\Vx,\Vy)=\sum_{i,j}c_id_j\MQ_{ij}
\text{ where, }\Vx=\sum_{i}c_i\Vx_i ~~ \Vy=\sum_{j}d_j\Vx_j\;.
\]
In particular, any integral quadratic form $\MQ^n$ can be interpreted 
as describing an inner product over a free module of dimension $n$.

For studying quadratic forms over $\zpkz$, where $p$ is a prime and $k$
is a positive integer; the first step is to find equivalent quadratic 
forms which have as few mixed terms as possible (mixed terms are terms
like $x_1x_2$).

\begin{proof}(Theorem \ref{thm:BlockDiagonal})
The transformation of the matrix $\MQ$ to a block diagonal form involves
three different kinds of transformation. We first describe these 
transformations on $\MQ$ with small dimensions (2 and~3).

\begin{enumerate}[(1)]
\item Let $\MQ$ be a $2\times 2$ integral quadratic form. Let us also 
assume that
the entry with smallest $p$-order in $\MQ$ is a diagonal entry, say
$\MQ_{11}$. Then, $\MQ$ is of the following form; where $\alpha_1,\alpha_2$
and $\alpha_3$ are units of $\zpz$.
\[
\MQ = \begin{pmatrix}p^i \alpha_1 & p^j \alpha_2 \\ p^j \alpha_2 & p^s\alpha_3 
\end{pmatrix} \qquad i \leq j,s
\]
The corresponding $\MU \in \text{SL}_2(\zpkz)$, that diagonalizes $\MQ$ 
is given below. The number
$\alpha_1$ is a unit of $\zpz$ and so $\alpha_1$ has an inverse in $\zpkz$.  
\[
\MU = \begin{pmatrix} 1 & -\frac{p^{j-i}\alpha_2}{\alpha_1} \bmod{p^k} \\ 
0 & 1\end{pmatrix} 
\qquad
\MU'\MQ\MU \equiv \begin{pmatrix} p^i\alpha_1 & 0 \\ 0 & p^s\alpha_3 - 
p^{2j-i}\frac{\alpha_2^2}{\alpha_1}\end{pmatrix}\pmod{p^k}
\]

\item If $\MQ^2$ does not satisfy the condition of item (1) i.e., 
the off diagonal entry is the one with smallest $p$-order, then we start
by the following transformation $\MV \in \SL_2(\zpkz)$.
\begin{gather*}
\MV = \begin{pmatrix}1 & 0 \\ 1 & 1\end{pmatrix} \qquad 
\MV'\MQ\MV = \begin{pmatrix} \MQ_{11}+2\MQ_{12}+\MQ_{22} & \MQ_{12}+\MQ_{22} \\
\MQ_{12}+\MQ_{22} & \MQ_{22}\end{pmatrix} 
\end{gather*}
If $p$ is an odd prime then $\ordp(\MQ_{11}+2\MQ_{12}+\MQ_{22})=\ordp(\MQ_{12})$, 
because $\ordp(\MQ_{11}),$ $\ordp(\MQ_{22}) >\ordp(\MQ_{12})$. By definition,
$\MS=\MV'\MQ\MV$ is equivalent to $\MQ$ over the ring $\zpkz$. But now, $\MS$
has the property that $\ordp(\MS_{11}) = \ordp(\MS_{12})$, and it can be 
diagonalized using the transformation in (1). The final transformation
in this case is the product of $\MV$ and the subsequent transformation
from item (1). The product of two matrices from $\SL_2(\zpkz)$ is also
in $\SL_2(\zpkz)$, completing the diagonalization in this case.

\item If $p=2$, then the transformation in item (2) fails. In this case,
it is possible to subtract a linear combination of these two rows/columns
to make everything else on the same row/column equal to zero over $\ztkz$.
The simplest such transformation is in dimension~3. The situation is as 
follows. Let $\MQ^3$ be a quadratic form whose off diagonal entry has the 
lowest possible power of
$2$, say $2^{\ell}$ and all diagonal entries are divisible by at least
$2^{\ell+1}$. In this case, the matrix $\MQ$ is of the
following form.
\[
\MQ = \begin{pmatrix}2^{\ell + 1} a & 2^{\ell}b & 2^id \\
2^{\ell}b & 2^{\ell+1}c & 2^je \\
2^id & 2^je & 2^{\ell+1}f \end{pmatrix} \qquad b \text{ odd}, \ell\leq i,j
\]
In such a situation, we consider the matrix $\MU \in \SL_3(\ztkz)$
of the form below such that if $\MS=\MU'\MQ\MU \pmod{2^k}$ then 
$\MS_{13}=\MS_{23}=0$.
\begin{gather*}
\MU = \begin{pmatrix}1 & 0 & -r \\ 0 & 1 & -s \\ 0 & 0 & 1\end{pmatrix}\\
(\MU'\MQ\MU)_{13} \equiv 0 \pmod{2^k} \implies 
r 2a + s b \equiv  2^{i-\ell}d \pmod{2^{k-\ell}}\\
(\MU'\MQ\MU)_{23} \equiv 0 \modtk \implies 
r b + s 2c \equiv 2^{j-\ell}e  \pmod{2^{k-\ell}}
\end{gather*}
For $i,j \geq \ell$ and $b$ odd, the solution $r$ and $s$ can be found by 
the Cramer's rule, as below. The solutions exist because the matrix 
$\begin{pmatrix}2a & b \\ b & 2c\end{pmatrix}$ has determinant $4ac-b^2$, which
is odd and hence invertible over the ring $\bbZ/2^{k-\ell}\bbZ$.
\[
r = \frac{\det\begin{pmatrix}2^{i-\ell}d & s \\ 2^{j-\ell}e & 2c\end{pmatrix}}
{\det\begin{pmatrix}2a & b \\ b & 2c\end{pmatrix}} \pmod{2^{k-\ell}} ~~
s = \frac{\det\begin{pmatrix}2a & 2^{i-\ell}d \\ b & 2^{j-\ell}e\end{pmatrix}}
{\det\begin{pmatrix}2a & b \\ b & 2c\end{pmatrix}}  \pmod{2^{k-\ell}}
\]
\end{enumerate}

This completes the description of all the transformations we are going
to use, albeit for $n$-dimensional $\MQ$ they will be a bit technical.
The full proof for the case of odd prime follows.

Our proof will be a reduction of the problem of diagonalization from
$n$ dimensions to $(n-1)$-dimensions, for the odd primes $p$. We now
describe the reduction.

Given the matrix $\MQ^n$, let $M$ be the corresponding $(\zpkz)$-module
with basis $\MB=[\Vb_1,\cdots,\Vb_n]$ i.e., $\MQ=\MB'\MB$. We first find 
a matrix entry with the smallest $p$-order, say $\MQ_{i^*j^*}$. The 
reduction has two cases: (i) there is a diagonal entry in $\MQ$ with
the smallest $p$-order, and (ii) the smallest $p$-order occurs on an
off-diagonal entry.

We handle case (i) first. Suppose it is possible to pick $\MQ_{ii}$
as the entry with the smallest $p$-order. Our first transformation
$\MU_1 \in \sln(\zpkz)$ is the one which makes the following 
transformation i.e., swaps $\Vb_1$ and $\Vb_i$.
\begin{align}\label{BlockDiagonal:U1}
[\Vb_1,\cdots,\Vb_n] \underset{\MU_1, p^k}{\to} [\Vb_i,\Vb_2,\cdots,
\Vb_{i-1},\Vb_1,\Vb_{i+1},\cdots, \Vb_n]
\end{align}

Let us call the new set of elements $\MB_1=[\Vv_1,\cdots,\Vv_n]$ and
the new quadratic form $\MQ_1=\MB_1'\MB_1 \bmod{p^k}$. Then,
$\Vv_1'\Vv_1$ has the smallest $p$-order in $\MQ_1$ and
$\MU_1'\MQ\MU_1\equiv \MQ_1 \bmod{p^k}$. The next transformation
$\MU_2 \in \sln(\zpkz)$ is as follows. 
\begin{equation}\label{BlockDiagonal:U2}
\Vw_i = \left\{ \begin{array}{ll}
\Vv_1 & \text{if $i=1$}\\
\Vv_i - \frac{\Vv_1'\Vv_i}{p^{\ordp((\MQ_1)_{11})}} \cdot 
\left(\frac{1}{\copp((\MQ_1)_{11})} \bmod{
p^k}\right) \cdot \Vv_1 & \text{otherwise\,.}
\end{array}\right.
\end{equation}
By assumption, $(\MQ_1)_{11}$ is the matrix entry with 
the smallest $p$-order and so $p^{\ordp((\MQ_1)_{11})}$ divides 
$\Vv_{1}'\Vv_i$. Furthermore, $\copp((\MQ_1)_{11})$ is invertible 
modulo $p^k$. Thus, the transformation in Equation 
\ref{BlockDiagonal:U2} is well defined. Also note that it is
a basis transformation, which maps one basis of $\MB_1=[\Vv_1,\cdots,\Vv_n]$
to another basis $\MB_2=[\Vw_1,\cdots,\Vw_n]$. Thus, the 
corresponding basis transformation $\MU_2$ is a 
unimodular matrix over integers, and so $\MU_2\in\sln(\zpkz)$.
Let $\MQ_2=\MU_2'\MQ_1\MU_2 \bmod{p^k}$. Then, we show that the
non-diagonal entries in the entire first row and first column of
$\MQ_2$ are~0. 
\begin{align*}
(\MQ_2&)_{1i(\neq 1)}=(\MQ_2)_{{i1}}=\Vw_1'\Vw_i \bmod{p^k}\\
	&\overset{(\ref{BlockDiagonal:U2})}{\equiv} 
	\Vv_1'\Vv_i - \frac{\Vv_1'\Vv_i}{p^{\ordp((\MQ_1)_{11})}} \cdot 
	\left(\frac{1}{\copp((\MQ_1)_{11})} \bmod{p^k}\right) 
	\cdot \Vv_1'\Vv_1 \\
	&\equiv\Vv_1'\Vv_i - \frac{\Vv_1'\Vv_i}{p^{\ordp((\MQ_1)_{11})}} \cdot 
	\left(\frac{1}{\copp((\MQ_1)_{11})} \bmod{p^k}\right) 
	\cdot p^{\ordp((\MQ_1)_{11})}\copp((\MQ_1)_{11}) \\
	&\equiv 0 \bmod{p^k} 
\end{align*}

Thus, we have reduced the problem to $(n-1)$-dimensions.
We now recursively call this algorithm with the quadratic form
$\MS=[\Vw_2,\cdots,\Vw_{n}]'[\Vw_2,\cdots,\Vw_{n}] \bmod{p^k}$
and let $\MV \in \SL_{n-1}(\zpkz)$ be the output of
the recursion. Then, $\MV'\MS\MV \bmod{p^k}$ is a diagonal
matrix. Also, by consruction $\MQ_2=\diag((\MQ_2)_{11},\MS)$.
Let $\MU_3=1\oplus\MV$, and $\MU=\MU_1\MU_2\MU_3$, then,
by construction, $\MU'\MQ\MU \bmod{p^k}$ is a diagonal
matrix; as follows.
\begin{gather*}
\MU'\MQ\MU \equiv \MU_3'\MU_2'\MU_1'\MQ\MU_1\MU_2\MU_3\equiv
\MU_3'\MQ_2\MU_3\equiv(1\oplus\MV)'\diag((\MQ_2)_{11})(1\oplus\MV)\\
\equiv \diag((\MQ_2)_{11},\MV'\MS\MV) \bmod{p^k}
\end{gather*}

Otherwise, we are in case (ii) i.e., the entry with smallest 
$p$-order in $\MQ$ is an off diagonal entry, say $\MQ_{i^*j^*},
i^*\neq j^*$. Then, we make the following basis transformation
from $[\Vb_1,\cdots,\Vb_n]$ to $[\Vv_1,\cdots,\Vv_n]$ as follows.
\begin{equation}\label{BlockDiagonal:U0}
\Vv_i = \left\{ \begin{array}{ll}
\Vb_{i^*}+\Vb_{j^*} & \text{if $i=i^*$}\\
\Vb_i & \text{otherwise\,.}
\end{array}\right.
\end{equation}
The transformation matrix $\MU_0$ is from $\sln(\zpkz)$.
Recall, 
$\ordp(\MQ_{i^*j^*}) < \ordp(\MQ_{i^*i^*}), \ordp(\MQ_{j^*j^*})$, and so
$\ordp(\Vv_{i^*}'\Vv_{i^*})=\ordp(\Vb_{i^*}'\Vb_{j^*})$. Furthermore, 
$\ordp(\Vv_i'\Vv_j)\geq \ordp(\Vb_{i^*}'\Vb_{j^*})$, and so the minimum 
$p$-order does not change after the transformation in Equation 
(\ref{BlockDiagonal:U0}). This transformation reduces the problem to the 
case when the matrix entry with minimum $p$-order appears on the 
diagonal. This completes the proof of the theorem for odd primes
$p$.

For $p=2$, exactly the same set of transformations works, unless the 
situation in item (3) arises. In such a case, we use the type II block
to eliminate all other entries on the same rows/columns as the type II
block. Thus, in this case, the problem reduces to one in dimension $(n-2)$.

The algorithm uses $n$ iterations, reducing the dimension by~1 in each 
iteration. In each iteration, we have to find the minimum $p$-order, costing
$O(n^2\log k)$ ring operations and then~3 matrix multiplications costing $O(n^3)$ operations
over $\zpkz$. Thus, the overall complexity is $O(n^4+n^3\log k)$ or 
$O(n^4\log k)$ ring operations.
\end{proof}

\section{Missing Proofs}\label{sec:Proofs}

\begin{proof}(proof of Lemma \ref{lem:Square})
We split the proof in two parts: for odd primes $p$ and for the prime~2.
\begin{description}
\item [Odd Prime.]
If $0 \neq t \in \zpkz$ then $\ordp(t)<k$. If
$t$ is a square modulo $p^k$ then there exists a $x$
such that $x^2\equiv t \pmod{p^k}$. Thus, there exists $a \in \bbZ$
such that $x^2=t+ap^k$. But then, $2\ordp(x)=\ordp(t+ap^k)=\ordp(t)$.
This implies that $\ordp(t)$ is even and $\ordp(x)=\ordp(t)/2$.
Substituting this into $x^2=t+ap^k$ and dividing the entire equation
by $p^{\ordp(t)}$ yields that $\copp(t)$ is a
quadratic residue modulo $p$; as follows.
\[
\copp(x)^2 = \copp(t)+ap^{k-\ordp(t)}\equiv \copp(t) \pmod{p}
\]

Conversely, if $\copp(t)$ is a quadratic residue modulo $p$ 
then there exists a $u \in \zpkz$ such that $u^2\equiv \copp(t) \pmod{p^k}$,
by Lemma \ref{lem:Square}. If $\ordp(t)$ is even then 
$x=p^{\ordp(t)/2}u$ is a solution to the equation 
$x^2\equiv t \pmod{p^k}$.

\item [Prime $2$.] 
If $0 \neq t \in \ztkz$ then $\ordt(t)<k$. If $t$ is a square modulo 
$2^k$ then there exists an integer $x$ such that $x^2\equiv t \modtk$.
Thus, there exists an integer $a$ such that $x^2=t+a2^k$. But then,
$2\ordt(x)=\ordt(t+a2^k)=\ordt(t)$. This implies that $\ordt(t)$ is 
even and $\ordt(x)=\ordt(t)/2$. Substituting this into the equation
$x^2=t+a2^k$ and dividing the entire equation by $2^{\ordt(t)}$ 
yields,
\[
\copt(x)^2 = \copt(t) + a2^{k-\ordt(t)} \qquad \copt(t) < 2^{k-\ordt(t)}\;.
\]
But $\copt(x)$ is odd and hence
$\copt(x)^2 \equiv 1 \pmod{8}$. If $k-\ordt(t)>2$, then 
$\copt(t)\equiv 1 \pmod{8}$. Otherwise, if $k-\ordt(t)\leq 2$ 
then $\copt(t) < 2^{k-\ordp(t)}$ implies that $\copt(t)=1$.

Conversely, if $\copt(t)\equiv 1 \pmod{8}$ then there exists a 
$u \in \ztkz$ such that $u^2\equiv \copt(t) \modtk$, by Lemma
\ref{lem:Square}. If $\ordt(t)$ is even then $x=2^{\ordt(t)/2}u$
is a solution to the equation $x^2\equiv t\modtk$.
\end{description}
\end{proof}

\end{document}